\documentclass[12pt,reqno]{article}

\usepackage[authoryear]{natbib}

\usepackage{caption}
\usepackage{subcaption}
\usepackage{bm}

\usepackage[T1]{fontenc}
\usepackage[latin9]{inputenc}
\usepackage[letterpaper]{geometry}
\pagestyle{plain}
\usepackage{color}
\usepackage{amsmath}
\usepackage{amsthm}
\usepackage{amssymb}
\usepackage{setspace}
\usepackage[plain]{fullpage}           
\addtolength{\oddsidemargin}{-.25in}   
\addtolength{\evensidemargin}{-.25in}  
\addtolength{\textwidth}{.5in}         
\usepackage[authoryear]{natbib}
\onehalfspacing
\usepackage[unicode=true,pdfusetitle,
 bookmarks=true,bookmarksnumbered=false,bookmarksopen=false,
 breaklinks=false,pdfborder={0 0 0},pdfborderstyle={},backref=false,colorlinks=true]
 {hyperref}
\hypersetup{
 linkcolor=magenta,citecolor=blue}
 
\usepackage{graphicx}
\graphicspath{ {./graphs/} }
\usepackage{babel}

\makeatother
\usepackage{array}
\usepackage{booktabs}
\usepackage{multirow}
\usepackage{color}

\makeatletter

\usepackage{babel}

\makeatletter

\numberwithin{equation}{section}

\theoremstyle{remark}
\newtheorem{rem}{\protect\remarkname}

\theoremstyle{definition}
\newtheorem{defn}{\protect\definitionname}
\theoremstyle{plain}
\newtheorem{thm}{\protect\theoremname}
\theoremstyle{plain}
\newtheorem{prop}{\protect\propositionname}
\theoremstyle{definition}

\theoremstyle{definition}

\newtheorem{assumption}{Assumption}

\newtheorem{lem}{\protect\lemmaname}[section]

\@ifundefined{date}{}{\date{}}
\usepackage{amsmath}
\usepackage{graphicx,psfrag,epsf}
\usepackage{enumerate}
\usepackage{natbib}
\usepackage{url} 

\usepackage{amsthm}
\usepackage{amsfonts}
\usepackage{amstext}

\usepackage{pdflscape}
\usepackage{threeparttable}

\usepackage{natbib}
\setlength{\bibsep}{0.0pt}

\makeatother

\providecommand{\conditionname}{Condition}
\providecommand{\definitionname}{Definition}
\providecommand{\examplename}{Example}
\providecommand{\lemmaname}{Lemma}
\providecommand{\propositionname}{Proposition}
\providecommand{\remarkname}{Remark}
\providecommand{\theoremname}{Theorem}

\usepackage{verbatim}

\setlength{\parskip}{0.5em}

\begin{document}
\sloppy

\title{Decision Theory for Treatment Choice Problems \\ with Partial Identification\thanks{We thank Xavier D'Haultfoeuille (the Editor) and three anonymous referees, Karun Adusumilli, Isaiah Andrews, Tim Armstrong, Larry Blume, Tim Christensen, Tommaso Denti, Kei Hirano, Hiro Kaido, Tetsuya Kaji, Charles Manski, Douglas Miller, Francesca Molinari, Alex Tetenov, Kohei Yata, seminar audiences at Berkeley, Binghamton, and Cornell, as well as participants at the Bravo/JEA/SNSF Workshop on ``Using Data to Make Decisions'' (Brown University), the 2023 Cornell-Penn State Econometrics and IO Conference, 2023 Greater NY Metropolitan Area Econometrics Colloquium and CEME (Georgetown), ASSA 2025 Annual Meeting (San Francisco) for feedback. Krishiv Shah, Maximilian Yap, Finn Ye, Sara Yoo, Naiqi Zhang, and especially Brenda Quesada Prallon provided excellent research assistance. We gratefully acknowledge financial support from the NSF under grant SES-2315600.}}

\author{Jos\'e Luis Montiel Olea\thanks{ Email: jlo67@cornell.edu}\and Chen Qiu\thanks{Email: cq62@cornell.edu}  \and J\"{o}rg Stoye\thanks{Email: stoye@cornell.edu}}
\date{June 2025}
\maketitle

\vspace{-.5cm}

\begin{abstract}
We apply classical statistical decision theory to a large class of \emph{treatment choice problems with partial identification}. We show that, in a general class of problems with Gaussian likelihood, all decision rules are admissible; it is maximin-welfare optimal to ignore all data; and, for severe enough partial identification, there are infinitely many minimax-regret optimal decision rules, all of which sometimes randomize the policy recommendation. We uniquely characterize the minimax-regret optimal rule that least frequently randomizes, and show that, in some cases, it can outperform other minimax-regret optimal rules in terms of what we term \emph{profiled regret}. We analyze the implications of our results in the aggregation of experimental estimates for policy adoption, extrapolation of Local Average Treatment Effects, and policy making in the presence of omitted variable bias.
\medskip

\textsc{Keywords}: Statistical decision theory, treatment choice, partial identification.

\end{abstract}

\newpage{}

\onehalfspacing

\section{Introduction}
A policy maker must decide between implementing a new policy or preserving the status quo. Her data provide information about the potential benefits of these two options. Unfortunately, these data only \emph{partially identify} payoff-relevant parameters and may therefore not reveal, even in large samples, the correct course of action. Such \emph{treatment choice problems with partial identification} have recently received growing interest; for example, see \cite{d2021policy}, \cite{ishihara2021}, \cite{yata2021}, \cite*{christensen2022optimal}, \cite{kido2022distributionally} or \cite{manski2022identification}. Several interesting problems that arise in empirical research can be recast  using this framework. A non-exhaustive list includes extrapolation of experimental estimates for policy adoption \citep{ishihara2021,menzel2023transfer}, policy-making with quasi-experimental data in the presence of omitted variable bias \citep*{diegert2022assessing}, and extrapolation of Local Average Treatment Effects \citep*{mogstad2018using,mogstad2018identification}.

In this paper, we analyze such problems in terms of \citeauthor{Wald50}'s (\citeyear{Wald50}) Statistical Decision Theory. We do so in a finite-sample framework characterized by a Gaussian likelihood and partial identification.

{\scshape Main Results:} Three optimality criteria are routinely used to endorse or discard decision rules:  \emph{admissibility}, \emph{maximin welfare}, and \emph{minimax regret}. 

\emph{Admissibility.} A decision rule is (welfare-)admissible if one cannot improve its expected welfare uniformly over the parameter space. This is usually considered a weak requirement for a decision rule to be considered ``good''. Our first result shows that, whenever problems in our setting exhibit partial identification, \emph{every} decision rule, no matter how exotic, is admissible (Theorem \ref{thm:admin}). As we discuss later, this result stands in stark contrast with applications of the admissibility criterion to point-identified treatment choice problems, in which admissibility meaningfully refines the class of
decision rules (by, for example, discarding rules that randomize the policy recommendation). We also show that our result is not tied to the choice of Gaussian likelihood, but instead to the \emph{bounded completeness} of the statistical model for the available data (Theorem \ref{thm:general}).

\emph{Maximin Welfare.} A decision rule is maximin(-welfare) optimal if it attains the highest worst-case expected welfare. Echoing critiques from \cite{Savage51} to \cite{manski2004statistical}, our second result shows that maximin decision rules will preserve the status quo regardless of the data (Theorem \ref{thm:maximin}).

\emph{Minimax Regret.} A decision rule is minimax-regret (MMR) optimal if it attains the lowest worst-case expected regret, where an action's regret is its welfare loss relative to the action that would be optimal if payoff-relevant parameters were known.  In some point-identified treatment choice problems, the MMR rule is both essentially unique and nonrandomized \citep{canner,stoye2009minimax,tetenov2012statistical}. We show this may not be true under partial identification. To make this point, we specialize our framework to the class of treatment choice problems recently studied by \cite{yata2021}.  Our third result shows that for cases where the identified set for payoff relevant parameters is large enough, there are  \emph{infinitely many} MMR optimal decision rules, \emph{all of which} randomize the policy recommendation for some or all data realizations (Theorem \ref{thm:main.regret.1}). This presents an important challenge for the application of MMR, at least if one hopes for the resulting recommendation to be unique. Moreover, as we explain later by means of an example, different MMR optimal rules can lead to meaningfully different policy choices for the same data. 

\emph{Least Randomizing MMR rule:} Finally, we refine the set of MMR optimal rules by identifying the \emph{least randomizing} (in a sense we make precise) MMR optimal rule. Our main motivation is the following trade-off. Recall that, for a wide range of parameter values, any MMR optimal rule must randomize for some data realizations. At the same time, despite wide adoption of randomized treatment allocations in economics and the social sciences for the purpose of experimentation, it might be difficult in many policy applications to randomize one's policy. Thus, we look for a rule that recommends such randomization as infrequently as possible. We explicitly characterize (in Theorem \ref{thm:main.regret.2}) an essentially unique least randomizing rule for the problems considered by \cite{yata2021}. 

We analyze the regret of the least randomizing MMR rule after profiling out some parameters of the risk function. We specifically analyze \emph{profiled regret}, by considering a parameter of interest and reporting worst-case expected regret at each of these parameter values (in a sense we make precise). We show, in the context of our running example, that our least-randomizing rule can profiled-regret dominate the MMR rule suggested by \cite{stoye2012minimax} and recently extended by \cite{yata2021} for a general class of treatment choice problems with partial identification (Proposition \ref{prop:linear.vs.RT}). More generally, we show that the use of profiled regret renders the rule that uniformly randomizes the policy recommendation inadmissible (Proposition \ref{prop:profiled.regret.dominance}), where the profiled regret function we consider reports the worst-case regret for a fixed value of the problem's point-identified parameters.

We also discuss the extent to which our least-randomizing rule can be obtained by explicitly penalizing randomized policy recommendations in the policy maker's welfare function. We show that, under some conditions, the least-randomizing rule is minimax regret optimal (among a suitably defined class of decision rules) for a penalty function that penalizes all randomized assignments equally (Proposition \ref{prop:aversion.fraction}).

{\scshape Applications:} We illustrate the practical implications of our results for three problems that recently arose in applied work. 

First, we analyze in detail a running example based on \citeauthor{ishihara2021}'s (\citeyear{ishihara2021}) ``evidence aggregation'' framework. Here, a policy maker is interested in implementing a new policy in country $i=0$. She has access to estimates of the effect of the same policy for other countries $i=1,...,n$ and attempts to extrapolate results using baseline covariates. We give explicit MMR treatment rules for this example. An interesting finding is that, when the identified set is large enough, the least randomizing rule can be related to the estimated bounds on the treatment effect of interest and randomizes only (though not always) if these bounds contain both positive and negative values. This illustrates how an estimator of the identified set can be used for optimal decision making.  

Second, we study extrapolation of Local Average Treatment Effects \citep{mogstad2018using,mogstad2018identification} with binary instrument and no covariates. Here, the payoff-relevant parameter is a ``policy-relevant treatment effect'' \citep{heckman2005structural} that corresponds to expanding the complier subpopulation. We show that Theorem \ref{thm:admin} applies in this example, so that all decision rules are admissible. In particular, a decision rule that implements the policy for large values of the usual instrumental-variables estimator is not dominated. 

Third, we consider a policy maker who uses quasi-experimental data to decide on a new policy. She is willing to assume a constant treatment effect model and unconfoundedness given a set of covariates $(X,W)$; however, only $X$ is observable and $W$ is not. In this setting, \cite{diegert2022assessing} argue that researchers may be interested in how much selection on unobservables is required to overturn findings that are based on a feasible linear regression. The least randomizing MMR rule can inform a complementary, decision-theoretic breakdown point analysis: For a given estimated effect of the policy, what is the largest effect of unobserved confounding under which it is still optimal to adopt the seemingly better policy without any hedging? We show that this breakdown point tolerates more confounding than the one of \cite{diegert2022assessing}.

\textbf{Related literature}. The econometric literature on treatment choice has grown rapidly since \cite{manski2004statistical} and \cite{Dehejia2005}. When welfare is partially identified, \cite{Manski2000,manski2005social,manski2007identification} and \citet{Stoye07} provide optimal treatment rules assuming the true distribution of the data is known. 
\cite{stoye2012minimax,stoye2012new}, \cite{yata2021}, and \cite{ishihara2021} focus on finite sample MMR optimal rules, and \citet*{twopointprior} on multiple prior MMR rules, in such settings. For different settings with point-identified welfare, finite- and large-sample results on optimal treatment choice rules were derived by \citet{canner}, \citet{ChenGug}, \citet{HiranoPorter2009,HiranoPorter2020}, \citet*{kitagawa2022treatment}, \citet{schlag2006eleven}, \citet{stoye2009minimax}, and \citet{tetenov2012statistical}. \cite{christensen2022optimal} extend \citeauthor{HiranoPorter2009}'s (\citeyear{HiranoPorter2009}) limit experiment framework to a class of partially identified settings; see on this also \cite{kido2023locally}. Treatment choice is furthermore related to a large literature on optimal policy learning that contains many results for point identified \citep{BhattacharyaDupas2012,kitagawa2018should,KT19,MT17,KW20,AW20,KST21,ida2022choosing} as well as partially identified \citep{kallus2018confounding,ben2021safe,ben2022policy,d2021policy, christensen2022optimal,adjaho2022externally,kido2022distributionally,lei2023policy} treatment choice that may condition on covariates. Bayesian aspects of treatment choice are discussed in \cite{chamberlain2012}. 

The remainder of this paper is organized as follows: Section \ref{sec:framework} introduces the formal framework and the running example. Section \ref{sec:challenges} is devoted to the aforementioned main results on admissibility, maximin wellfare, minimax regret and the least randomizing MMR rule. The applications are presented in Section \ref{sec:examples}. Section \ref{sec:conclude} concludes. Appendix \ref{sec:app.main} contains proofs of our main results and Appendix \ref{sec:profile.regret} discusses  the notion of profiled regret. Additional proofs and results can be found  in Online Appendix \ref{sec:technical}.

\section{Notation and Framework}\label{sec:framework}

Statistical decision theory calls for three ingredients: the menu of actions available, their consequences as a function of an unknown state of the world, and a statistical model of how the data distribution depends on that state. We now present these elements and lay out an example that will be used to illustrate objects, terms, and results throughout. 

The policy maker can choose an \emph{action} $a\in[0,1]$, which we interpret as the proportion of a population that will be randomly assigned to the new policy. Thus, $a=1$ means that everyone is exposed to the new policy and $a=0$ means that the status quo is preserved. Under this interpretation, $a=.5$ means that 50\% of the population will be exposed at random to the new policy; however, the formal development equally applies to either individual or population-level randomization. Our interpretation abstracts from integer issues arising with small populations.

The payoff for the policy maker when taking action $a\in[0,1]$ is captured by a welfare function
\begin{equation}\label{eq:welfare}
W(a,\theta):=aW(1,\theta)+(1-a)W(0,\theta),    
\end{equation}
where $\theta\in\Theta$ is an unknown state of the world or \emph{parameter} (possibly of infinite dimension) and  $W(1,\cdotp):\Theta\rightarrow\mathbb{R}$ and $W(0,\cdotp):\Theta\rightarrow\mathbb{R}$ are known functions. Thus, welfare is linear in the action, a common assumption in the literature.\footnote{In particular, this applies if $W(\cdot,\theta)$ is an expectation and, for the case where randomization is interpreted as fractional assignment, there are no spillover effects or externalities. These assumptions are the default in the literature. An exception is \cite{manski2007admissible}, who consider the welfare of an action to be a concave transformation of $W(\cdot,\theta)$.} The form of (\ref{eq:welfare}) also implies that if $\theta$ were known to the policy maker, her optimal choice of action would simply be
\begin{equation}\label{eq:oracle}
\mathbf{1}\left\{ U(\theta)\geq0\right\}, \quad \textrm{ where } \quad U(\theta):=W(1,\theta)-W(0,\theta). 
\end{equation}
Following \citet{HiranoPorter2009}, we refer to $U(\theta)$ as the \emph{welfare contrast} at $\theta$. Thus, the policy maker's optimal action in \eqref{eq:oracle} is to expose the whole population to the new policy if the welfare contrast is nonnegative and to preserve the status quo otherwise. 

The policy maker observes a realization of $Y \in \mathbb{R}^n$ distributed as
\begin{equation} \label{eq:normal_model}
Y\sim N(m(\theta),\Sigma).
\end{equation}
Here, the function $m(\cdotp):\Theta\rightarrow\mathbb{R}^n$ and the positive definite covariance matrix $\Sigma$ are known. However, $m(\cdot)$ need not be injective: $m(\theta) = m(\theta')$ does \emph{not} imply $\theta=\theta'$. As a result, even perfectly identifying $m(\theta)$ (from infinite data) may not identify the optimal action.

In economics applications, the normality assumption in \eqref{eq:normal_model} is unlikely to hold exactly; however, the data can often be summarized by statistics that are asymptotically normal and whose asymptotic variances can be estimated. Treating the limiting model as a finite-sample statistical model then eases exposition and allows us to focus on the core features of the policy problem. 
Working directly with such a limiting model is common in applications of statistical decision theory to econometrics; see  \cite{muller2011efficient} and the references therein for theoretical support and applications in the context of testing problems and \citet{ishihara2021}, \citet{stoye2012minimax}, or \citet{tetenov2012statistical} for precedents in closely related work. 

We finally define a \emph{decision rule}, $d:\mathbb{R}^{n}\rightarrow[0,1]$, as (measurable) mapping from the data $Y$ to the unit interval.
We let $\mathcal{D}_n$ denote the set of all decision rules. We call $d\in \mathcal{D}_n$ \emph{non-randomized} if $d(y)\in\{0,1\}$ for (Lebesgue) almost every $y\in\mathbb{R}^n$ and \emph{randomized} otherwise.

{\scshape Running Example:} Our running example is a special case of \citeauthor{ishihara2021}'s (\citeyear{ishihara2021}; see also \citet{manski2020towards}) ``evidence aggregation'' framework. A policy maker is interested in implementing a new policy in country $i=0$ and observes estimates of the policy's effect for countries $i=1,...,n$. Let $Y=(Y_1,...,Y_n)^\top \in \mathbb{R}^n$ denote such experimental estimates and let $(x_0,\ldots,x_n)$ be nonrandom, $d$-dimensional baseline covariates. The policy maker is willing to extrapolate from her data by assuming that $W(1,\theta)=\theta(x_0)$, $W(0,\theta)=0$,  $U(\theta)=\theta(x_0)$ and that \begin{equation*}
Y = 
\begin{pmatrix}
Y_1 \\
\vdots \\
Y_n
\end{pmatrix}
\sim N(m(\theta), \Sigma), \quad m(\theta) = 
\begin{pmatrix}
\theta(x_1) \\
\vdots \\
\theta(x_n)
\end{pmatrix}, 
\quad
\Sigma =  
\operatorname{diag}(\sigma_1^2,\ldots,\sigma_n^2),
\end{equation*} 
where $\theta: \mathbb{R}^d \rightarrow \mathbb{R}$ is an unknown Lipschitz function with known constant $C$. For simplicity, in the example we write $\mu_i$ for $\theta(x_i)$ henceforth, thus $Y_i \sim N(\mu_i,\sigma_i^2)$. We also assume w.l.o.g. that countries are arranged in nondecreasing order of $\left\Vert x_i-x_0 \right\Vert$. While our analysis extends to $x_1=x_0$, we focus on the case of $x_1 \neq x_0$, so that the sign of $\mu_0$ is not necessarily identified. We assume that $(x_1,\ldots,x_n)$ are distinct. Even if this were not the case in raw data, one would presumably want to induce it (by adding fixed effects, whose size can be bounded) because the Lipschitz constraint would otherwise imply that countries with same $x$ exhibit no heterogeneity whatsoever. Finally, we assume $\Vert x_1-x_0\Vert<\Vert x_2-x_0\Vert$, i.e., the nearest neighbor  of country $0$ is unique.\footnote{If two  or more countries  are  nearest neighbors, their signals can be merged into one more precise signal.}\hfill $\square$

\section{Main Results} \label{sec:challenges}

In statistical decision theory, three criteria are commonly used to recommend decision rules: \emph{admissibility}, \emph{maximin welfare}, and \emph{minimax regret}.\footnote{See \citet{stoye2012new} and references therein for theoretical trade-offs between these criteria.} 
In this section, we show that the application of these criteria to our setting presents nontrivial challenges.

\subsection{Everything is Admissible} 

Let $\mathbb{E}_{m(\theta)}[\cdot]$ denote expectation with respect to $Y\sim N(m(\theta),\Sigma)$. Recall the following definition:
\begin{defn}
A rule $d\in\mathcal{D}_{n}$ is \emph{(welfare-)admissible} if there does not exist $d^{\prime}\in\mathcal{D}_{n}$ such that 
\[ \mathbb{E}_{m(\theta)}[W(d^{\prime}(Y),\theta)] \geq \mathbb{E}_{m(\theta)}[W(d(Y),\theta)], \quad \forall \theta \in \Theta, \]
with strict inequality for some $\theta\in\Theta$. 
\end{defn}
Thus, a rule is admissible if it is not dominated (in the usual sense of weak dominance everywhere and strict dominance somewhere). This is generally considered a minimal but compelling requirement for a decision rule to be ``good'' and goes back at least to \cite{Wald50}. Admissibility can be used to recommend classes of rules whose payoff cannot be uniformly improved and/or whose members improve uniformly on non-members, as in complete class theorems \citep{karlin1956theory,manski2007admissible}; conversely, one may use it to caution against particular (classes of) decision rules, as was recently done by \cite{andrews2022gmm}. 

Our first result shows that, under mild assumptions, admissibility cannot serve either purpose in our problem. This is because any decision rule is admissible. To formalize this, let
\begin{equation}\label{eq:M}
M:=\left\{ \mu\in\mathbb{R}^{n}:m(\theta)=\mu,\theta\in\Theta\right\} 
\end{equation}
collect all means that can be generated as $\theta$ ranges over $\Theta$. We refer to elements $\mu \in M$ as $\emph{reduced-form}$ parameters because they are identified in the statistical model \eqref{eq:normal_model}.\footnote{The notation is consistent with our running example, in which the observable moments are $(\mu_1,\ldots,\mu_n)$.} Define the \emph{identified set} for the welfare contrast as function of $\mu$ as 
\begin{equation}\label{eq:I_mu}
I(\mu):=\left\{ u \in\mathbb{R}: U(\theta)=u,m(\theta)=\mu, \theta \in \Theta \right\}
\end{equation}
and the corresponding upper and lower bounds as
\begin{equation}\label{eq:I_bound}
\overline{I}(\mu)  :=\sup I(\mu) ,\quad\underline{I}(\mu):=\inf I(\mu).    
\end{equation}
When we refer to models as ``partially identified,'' we henceforth mean that partial identification obtains on an open set in parameter space (i.e., not almost nowhere).\footnote{For simplicity of exposition, Definition \ref{asm:1} is stated in a way that forces $M$ to have a non-empty interior. This would, for example, exclude equality constraints. For the purpose of Theorem \ref{thm:main.regret.1} below, we can weaken the assumption to allow such cases as long as $\mathcal{S}$ is rich within $M$.} 

\begin{defn} [Nontrivial partial identification]\label{asm:1}
The treatment choice problem with payoff function \eqref{eq:welfare} and statistical model \eqref{eq:normal_model} exhibits \emph{nontrivial partial identification} if there exists an open set $\mathcal{S}\subseteq M \subseteq \mathbb{R}^{n}$ such that
\begin{align*}
\underline{I}(\mu) & <0<\overline{I}(\mu)\text{, for all }\mu\in\mathcal{S}.
\end{align*}
\end{defn}

\noindent {\scshape Running Example---Continued:} 
The identified set for the welfare contrast $\mu_0$ is 
\begin{equation*}
I(\mu) = \{ u \in \mathbb{R} :  \left\vert \mu_i-u \right\vert \leq C\left\Vert x_i-x_0\right\Vert, ~~ i=1,\ldots,n \}.
\end{equation*}
Its extrema can be written as intersection bounds:
\[ \underline{I}(\mu) = \max_{i=1,\ldots,n} \{ \mu_i - C\left\Vert x_i-x_0 \right\Vert  \},  \quad  \overline{I}(\mu) = \min_{i=1,\ldots,n} \left \{ \mu_i + C \left\Vert x_i-x_0 \right\Vert \right \}.  \]
For $\mu$ sufficiently close to the zero vector, we therefore have nontrivial partial identification.
\hfill $\square$

We are now ready to state the first main result.

\begin{thm}\label{thm:admin}
If a treatment choice problem with payoff function \eqref{eq:welfare} and statistical model \eqref{eq:normal_model} exhibits nontrivial partial identification in the sense of Definition \ref{asm:1},  then every decision rule $d\in\mathcal{D}_{n}$
is (welfare-)admissible.
\end{thm}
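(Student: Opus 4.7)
The plan is to proceed by contradiction. Suppose there exists $d' \in \mathcal{D}_n$ with $\mathbb{E}_{m(\theta)}[W(d'(Y), \theta)] \geq \mathbb{E}_{m(\theta)}[W(d(Y), \theta)]$ for all $\theta \in \Theta$ and strict inequality for at least one $\theta_0$. Using the linearity in \eqref{eq:welfare}, the $W(0,\theta)$ terms cancel and this inequality reduces to
\begin{equation*}
U(\theta) \cdot \bigl( \mathbb{E}_{m(\theta)}[d'(Y)] - \mathbb{E}_{m(\theta)}[d(Y)] \bigr) \geq 0 \qquad \forall \theta \in \Theta,
\end{equation*}
with strict inequality at $\theta_0$. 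Set $g := d' - d$, so $g$ takes values in $[-1,1]$; the goal becomes to show that $\mathbb{E}_{m(\theta)}[g(Y)] = 0$ for every $\theta \in \Theta$, which contradicts the strict-inequality clause.

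The next step exploits Definition \ref{asm:1}. For every $\mu \in \mathcal{S}$ there exist $\theta_+, \theta_- \in \Theta$ with $m(\theta_+) = m(\theta_-) = \mu$, $U(\theta_+) > 0$, and $U(\theta_-) < 0$. Since the distribution of $Y$ depends on $\theta$ only through $m(\theta)$, applying the dominance inequality at these two parameter values yields $\mathbb{E}_{\mu}[g(Y)] \geq 0$ and $\mathbb{E}_{\mu}[g(Y)] \leq 0$, respectively. Hence the function
\begin{equation*}
h(\mu) := \mathbb{E}_\mu[g(Y)] = \int_{\mathbb{R}^n} g(y)\, (2\pi)^{-n/2} |\Sigma|^{-1/2} \exp\!\left(-\tfrac{1}{2}(y-\mu)^\top \Sigma^{-1}(y-\mu) \right) dy
\end{equation*}
vanishes on the open set $\mathcal{S}$.

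The third step is to extend this vanishing to all of $\mathbb{R}^n$ and then to $g$ itself. Because $g$ is bounded and the Gaussian density is real analytic in $\mu$ with derivatives controlled uniformly on compacts by integrable Gaussian-polynomial envelopes, repeated differentiation under the integral sign shows that $h$ is real analytic on $\mathbb{R}^n$. A real analytic function that vanishes on a nonempty open set must vanish everywhere, so $h \equiv 0$ on $\mathbb{R}^n$. Bounded completeness of the multivariate Gaussian location family $\{N(\mu,\Sigma) : \mu \in \mathbb{R}^n\}$ then forces $g = 0$ Lebesgue almost everywhere, hence $\mathbb{E}_{m(\theta)}[g(Y)] = 0$ for every $\theta \in \Theta$, the desired contradiction.

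The main technical obstacle is the real analytic extension from $\mathcal{S}$ to $\mathbb{R}^n$; once that is in hand, bounded completeness of the Gaussian location family is standard and closes the argument. I also expect this is what allows the Gaussianity assumption to be replaced by bounded completeness of the underlying statistical model on the relevant open set, as announced in Theorem \ref{thm:general}.
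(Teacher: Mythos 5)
Your proposal is correct and follows the same overall strategy as the paper: assume domination, use nontrivial partial identification to pair each $\mu\in\mathcal{S}$ with parameters of opposite-signed welfare contrast and thereby force $h(\mu):=\mathbb{E}_{\mu}[d'(Y)-d(Y)]=0$ on $\mathcal{S}$, and then upgrade this to $d'=d$ almost everywhere. Where you diverge is the upgrade step. The paper invokes completeness of the \emph{restricted} family $\{N(\mu,\Sigma):\mu\in\mathcal{S}\}$ (via the exponential-family completeness theorem, which applies because $\mathcal{S}$ is open) to get $d'=d$ a.e.\ under $\mathcal{P}_{\mathcal{S}}$, and then uses mutual absolute continuity of Gaussian measures with common covariance to propagate this null set to all of $\Theta$. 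You instead observe that $h$ is real analytic (indeed entire, as a two-sided Laplace transform of a bounded function against a Gaussian), so vanishing on the open set $\mathcal{S}$ forces $h\equiv 0$ on $\mathbb{R}^{n}$; note that at this point you are already done, since $\mathbb{E}_{m(\theta_0)}[g(Y)]=h(m(\theta_0))=0$ contradicts the strict inequality directly, and your final appeal to bounded completeness of the full location family is not actually needed for the contradiction. Both routes are valid and closely related (analytic continuation is essentially how completeness of exponential families on sets with interior is proved). One caveat on your closing remark: the analytic-continuation step is Gaussian-specific, so it is the paper's route --- bounded completeness of the restricted family plus the measure-equivalence condition A2 --- rather than yours that generalizes to Theorem \ref{thm:general}.
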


\begin{proof}
See Appendix \ref{sec:app.a.1}. 
\end{proof}

For a proof sketch, suppose by contradiction that some rule $d$ is inadmissible. Then some other rule $d'$ dominates it. This $d'$  must perform weakly better at every $\theta \in m^{-1}(\mathcal{S})$, where $\mathcal{S}$ is the set that appears in Definition \ref{asm:1}. Because of nontrivial partial identification, all  $\theta \in m^{-1}(\mathcal{S})$ are compatible with positive and negative welfare contrast $U(\theta)$. This implies that 
\[ \mathbb{E}_{m(\theta)}[d(Y)] = \mathbb{E}_{m(\theta)}[d'(Y)] \quad \textrm{for each } \theta \in m^{-1}(\mathcal{S}).\]
By i) completeness of the Gaussian statistical model in (\ref{eq:normal_model}) and ii) mutual absolute continuity of the Gaussian and Lebesgue measures in $\mathbb{R}^{n}$, we then have $d(\cdot) = d'(\cdot)$ (Lebesgue) almost everywhere in $\mathbb{R}^{n}$, a contradiction.\footnote{A family  $\mathcal{P}$ of distributions $P$ is complete if $\mathbb{E}_{P}[f(X)]=0$ for all $P\in\mathcal{P}$ implies $f(x)=0$ $P$-almost everywhere, for every $P \in \mathcal{P}$. See, for example, 
 \citet[][p.115]{lehmann05testing}.  } 
  
While the statement of Theorem \ref{thm:admin} makes reference to the Gaussian model in \eqref{eq:normal_model}, the above proof sketch only uses this model's \emph{bounded completeness} (and the mutual absolute continuity of the Gaussian and Lebesgue measures in $\mathbb{R}^n$).  
Indeed, Theorem \ref{thm:general} in Appendix \ref{sec:general}, establishes a stronger version of Theorem \ref{thm:admin} that applies to boundedly complete statistical models (in a sense we make precise).\footnote{We would like to thank Tim Christensen and the Editor for pointing this out.} It is known that many exponential family distributions \citep[Theorem 4.3.1, p.116]{lehmann2006theory} as well as some location families \citep[][Theorem 2.1]{mattner1993some} are boundedly complete. However, our proof does not apply to distributions that are not boundedly complete even if they are close to Normal; for example, if $Y=Z+\epsilon$, where $Z\sim N(m(\theta),\Sigma)$ and $\epsilon$ is independent of $Z$ and contains i.i.d components with characteristic functions containing zeros such as the uniform distributions in $[-\eta,\eta]$ for any $\eta>0$ (c.f. \citealt[][Theorem 2.1]{mattner1993some} and references therein). While we do not know whether bounded completeness is strictly necessary for this result, it cannot just be dropped. To see this, consider the preceding counterexample with a uniform error term, in which we further impose $n=1$, $\Sigma =0$ and  $\eta=1$. Furthermore, suppose  $\overline{I}(\mu)=\mu+1$ and  $\underline{I}(\mu)=\mu-1$. Then the coin-flip rule $d_{\text{coin-flip}}(Y)=1/2$ would be dominated by the rule that chooses $0$ if $Y<-2$, chooses $1$ if $Y>2$, and $1/2$ otherwise.

\begin{rem} \label{rem:Remark_2}
Theorem \ref{thm:admin} shows that the notion of admissibility does not have any refinement power on the class of decision rules considered by the policy maker. We think this result is quite surprising, given that in treatment choice problems with point identification, admissibility does meaningfully refine the class of decision rules. For example, let $n=1$, $\Theta = \mathbb{R}$, $m(\theta) = W(1,\theta) = \theta$, and $W(0,\theta) = 0$. In this case, the policy maker observes a noisy signal, $Y \sim N(\theta, \sigma^2)$, of the payoff-relevant, point-identified parameter $\theta \in \mathbb{R}$. 
By \citeauthor{karlin1956theory}'s (\citeyear[][Theorem 2]{karlin1956theory}) classic result, any decision rule that is not a threshold rule (i.e., is not of form $\mathbf{1}\{ Y > c \}$ for some fixed $c \in \mathbb{R} \cup \{-\infty,\infty\}$) is dominated. In fact, the class of all threshold rules is \emph{complete} (in the sense that any non-threshold rule is dominated by a threshold rule). Therefore, any decision rule that is not a threshold rule can be dismissed by appealing to the notion of admissibility alone. 
We think it is quite surprising that introducing partial identification renders \emph{all} decision rules admissible. Indeed, Theorem 1 implies that,  no rule, regardless how eccentric it may appear, is (welfare-)dominated. This makes it much more challenging to recommend a rule or a class of rules, at least without commitment to a more specific decision-theoretic optimality criterion (such as maximin welfare or minimax regret).

\end{rem}

Theorem \ref{thm:admin} also admits a more optimistic interpretation.  The positive interpretation is that the procedures suggested in the related literature will all perform well relative to one another in some parts of parameter space. This is an important observation because some of these suggestions contain novel or nonstandard components. For example, \cite{ishihara2021} place ex-ante restrictions on the class of decision rules, while \cite{christensen2022optimal} transform the original loss function by profiling out partially identified parameters. By Theorem \ref{thm:admin}, all these approaches are at least admissible. This was arguably obvious in the former case (since for any linear threshold rule, it is easy to find a prior that it uniquely best responds to) but certainly not the latter one.

\subsection{Maximin Welfare is Ultra-Pessimistic}
We next analyze the \emph{maximin welfare} criterion. Our main result echoes earlier findings by \cite{Savage51} and \cite{manski2004statistical}: Maximin typically leads to ``no-data rules'' that preserve the status quo.   
\begin{defn}
A rule $d_{\text{maximin}}\in\mathcal{D}_{n}$ is maximin optimal if 
\[\inf_{\theta\in\Theta}\mathbb{E}_{m(\theta)}[W(d_{\text{maximin}}(Y),\theta)]=\sup_{d\in\mathcal{D}_{n}}\inf_{\theta\in\Theta}\mathbb{E}_{m(\theta)}[W(d(Y),\theta)].
\]    
\end{defn}

\begin{thm}\label{thm:maximin}
Suppose that there exists $\theta \in \Theta$ such that $U(\theta) \leq 0$. If
\begin{equation}\label{eq:maximin}
\inf_{\theta\in\Theta}W(0,\theta) = \inf_{\theta\in\Theta:U(\theta)\leq0}W(0,\theta), 
\end{equation}
then the no-data rule $d_{\text{no-data}}(y):=0$ is maximin optimal. The maximin value is \[
\inf_{\theta\in\Theta}W(0,\theta).\]
\end{thm}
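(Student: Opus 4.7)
The plan is to establish matching lower and upper bounds on the maximin value. For the lower bound, because $d_{\text{no-data}}(Y)\equiv 0$, the expected welfare of this rule is $\mathbb{E}_{m(\theta)}[W(d_{\text{no-data}}(Y),\theta)] = W(0,\theta)$ for every $\theta\in\Theta$, so its worst-case expected welfare is exactly $\inf_{\theta\in\Theta}W(0,\theta)$. Hence the maximin value is at least $\inf_{\theta\in\Theta}W(0,\theta)$.

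For the upper bound, I would use linearity of welfare in the action (equation \eqref{eq:welfare}) to write, for any rule $d\in\mathcal{D}_{n}$,
\[\mathbb{E}_{m(\theta)}[W(d(Y),\theta)] \;=\; W(0,\theta) \;+\; \mathbb{E}_{m(\theta)}[d(Y)]\,U(\theta).\]
Restricting the infimum over $\theta$ to the (nonempty, by hypothesis) subset $\Theta_{0}:=\{\theta\in\Theta:U(\theta)\leq 0\}$ only weakens it. On $\Theta_{0}$, the fact that $d$ takes values in $[0,1]$ forces $\mathbb{E}_{m(\theta)}[d(Y)]\in[0,1]$, and combined with $U(\theta)\leq 0$ this makes the second term above nonpositive. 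Consequently $\mathbb{E}_{m(\theta)}[W(d(Y),\theta)]\leq W(0,\theta)$ on $\Theta_{0}$, which yields
\[\inf_{\theta\in\Theta}\mathbb{E}_{m(\theta)}[W(d(Y),\theta)] \;\leq\; \inf_{\theta\in\Theta_{0}}\mathbb{E}_{m(\theta)}[W(d(Y),\theta)] \;\leq\; \inf_{\theta\in\Theta_{0}} W(0,\theta).\]

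The conclusion then follows by invoking assumption \eqref{eq:maximin}, which asserts precisely that $\inf_{\theta\in\Theta_{0}}W(0,\theta)=\inf_{\theta\in\Theta}W(0,\theta)$. Taking the supremum over $d\in\mathcal{D}_{n}$ on the left-hand side and combining with the lower bound completes the proof. No serious technical obstacle arises; the argument is essentially a two-line manipulation that exploits the $[0,1]$-range of actions together with linearity of $W$ in $a$. The one delicate conceptual point, worth highlighting when writing the proof, is the role of \eqref{eq:maximin}: without it, nature's worst response to $d_{\text{no-data}}$ could lie in $\{\theta:U(\theta)>0\}$, where assigning positive mass to $a=1$ would strictly help the policy maker, and the theorem would fail.
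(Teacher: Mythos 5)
Your proposal is correct and follows essentially the same route as the paper's proof: both use linearity to write expected welfare as $W(0,\theta)+U(\theta)\,\mathbb{E}_{m(\theta)}[d(Y)]$, restrict the infimum to $\{\theta: U(\theta)\leq 0\}$ where the second term is nonpositive, and then invoke condition \eqref{eq:maximin} together with the fact that $d_{\text{no-data}}$ attains $\inf_{\theta\in\Theta}W(0,\theta)$. No gaps.
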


\begin{proof}
See Appendix \ref{sec:app.a.2}.
\end{proof}

\noindent This result can be seen as follows. When $U(\theta)\leq 0$, it is optimal to preserve the status quo; substituting in for this response, we find that $\inf_{\theta\in\Theta}\mathbb{E}_{m(\theta)}[W(d(Y),\theta)]\leq \inf_{\theta\in\Theta,U(\theta)\leq0}W(0,\theta)$ for any rule $d\in\mathcal{D}_n$. Under condition (\ref{eq:maximin}), this upper bound is attained by $d_{\text{no-data}}$.

{\scshape Running Example---Continued:} Theorem \ref{thm:maximin} applies to the running example. In particular, the example's maximin welfare equals $0$ and is achieved by never assigning the new policy. \hfill $\square$

A similar result was shown by \cite{manski2004statistical} for testing an innovation with point-identified welfare contrast (a result that we generalize\footnote{In \citeauthor{manski2004statistical}'s (\citeyear{manski2004statistical}) example, $W(0,\theta)$ does not depend on $\theta$, so that condition (\ref{eq:maximin}) is trivially satisfied.}), and the concern can be traced back at least to \cite{Savage51}. There is a discussion of whether such ``ultrapessimism'' occurs, in a technical sense, more generically with maximin utility versus minimax regret \citep{ParmigianiTD,SadlerTD}. However, a string of more optimistic results regarding MMR \citep{canner,stoye2009minimax,stoye2012new,tetenov2012statistical,yata2021} suggests that, with state spaces that describe real-world decision problems, the concern is more salient for maximin.\footnote{Given that we next elaborate on multiplicity of MMR rules, it is of interest to also discuss the potential multiplicity of maximin rules: if $W(0,\theta) $ is treated as known (e.g., equal to 0), the maximin rule may be unique. However, in settings where both $W(1,\theta)$ and $W(0,\theta)$ are  unknown and can be arbitrarily ``bad'', the typical result is that all decision rules are maximin. }

\subsection{Minimax Regret Admits Many Solutions} \label{sec:minimax.regret.result}

In view of Theorem 1 and Theorem 2, it seems natural to consider the \emph{minimax regret} (MMR) optimality criterion. It is known that some point-identified treatment choice problems admit nonrandomized, essentially unique MMR optimal rules. In contrast, our results below will show that uniqueness of MMR optimal rules should not be expected to hold generally with partial identification. To make this point, we consider a class of problems where finite-sample minimax optimal results are available. Within this class, we show that, if the identified set for the welfare contrast is sufficiently large relative to sampling error, then i) we can find infinitely many MMR optimal decision rules, all of which depend on data through an optimal linear index and ii) under weak additional conditions, any MMR rule that depends on the data through the optimal linear index must be randomized. 

Our numerical and analytic findings below (see Figure \ref{figure:NewRule} and Proposition \ref{prop:linear.vs.RT}) demonstrate that, beyond having the same worst-case regret, two MMR optimal rules might be qualitatively and quantitatively very different. This presents a practical challenge for a policy maker as, given the same data realizations, two MMR optimal rules may recommend different policy actions.

The expected regret of a decision rule $d\in\mathcal{D}_n$ in state $\theta$ is its expected welfare loss compared to the oracle rule:
\begin{eqnarray}\label{eq:expected.regret}
R(d,\theta) &:=& U(\theta)\left\{ \mathbf{1}\{U(\theta)\geq0\}-\mathbb{E}_{m(\theta)}[d(Y)]\right\}.
\end{eqnarray}
\begin{defn}
A rule $d^*\in \mathcal{D}_{n}$ is minimax regret (MMR) optimal if
\begin{equation}\label{eq:MMR}
\sup_{\theta \in \Theta}R(d^*,\theta)=\inf_{d\in\mathcal{D}_{n}}\sup_{\theta \in \Theta}R(d,\theta).    
\end{equation}
\end{defn}

\noindent Solving minimax regret problems is often hard, both analytically and algorithmically.  Algorithms exist for certain cases \citep{yu1995min, chamberlain2000econometric, fernandez2024epsilon, guggenberger2025numerical}, but it is known that obtaining the minimax solution of a decision problem---and sometimes even deciding whether a minimax solution exists---is NP hard in general; see \cite{daskalakis2021complexity}. In important, recent work, \cite{yata2021} characterizes MMR optimal rules for a large class of binary action problems. He imposes the following restrictions on the parameter space and welfare function. 

\begin{assumption}  \label{asm:yata.1} 
\begin{itemize}
\item[(i)]  $\Theta$ is convex, centrosymmetric (i.e., $\theta\in\Theta$ implies $-\theta\in\Theta$) and nonempty.
\item[(ii)] $m(\cdot)$ and $U(\cdot)$ are linear.\end{itemize}
\end{assumption}

\noindent {\scshape Running Example---Continued:} Our example satisfies Assumption \ref{asm:yata.1}. In particular, the space of $C$-Lipschitz functions, $\theta:\mathbb{R}^{d} \rightarrow \mathbb{R}$ is convex as well as centrosymmetric, and the functions $U(\cdot)$ and $m(\cdot)$ are linear in $\theta$ as they simply report the values of $\theta$ at points $(x_0, \ldots, x_n)$. \hfill $\square$

\noindent Under Assumption \ref{asm:yata.1}, \cite{yata2021} shows existence of an MMR rule that depends on the data only through $(w^*)^{\top} Y$, where the unit vector $w^*$ can be approximated by solving a sequence of tractable optimization problems. When the identified set for the welfare contrast at $\mu= \mathbf{0} := 0_{n\times1}$ is large enough, \citeauthor{yata2021}'s (\citeyear{yata2021}) MMR rule can be expressed as 
\begin{equation}\label{eq:yata.rule}
d^{*}_{\text{RT}}(Y):=d^{*}_{\text{RT}}((w^*)^{\top}Y):=\Phi\bigl((w^*)^{\top}Y/\tilde{\sigma}) \end{equation}
for some uniquely characterized $\tilde{\sigma} > 0$.\footnote{For readability, we slightly abuse notation and,  when  a decision rule $d$ depends on data $Y$ only through a simple feature like $w^{\top}Y$ or $Y_1$, we write $d(Y)$ as $d(w^{\top}Y)$ or $d(Y_1)$. } Moreover, it then has two important algebraic properties:
\begin{equation}\label{eq:yata.property.1}
\mathbb{E}_{m(\theta)}[d_{\text{RT}}^{*}(Y)]=1/2\quad\textrm{ when }m(\theta)=\mathbf{0}
\end{equation}
and
\begin{equation}\label{eq:yata.property.2}
\sup_{\theta\in\Theta}R(d^*_{\text{RT}},\theta)=\sup_{\theta\in\Theta,m(\theta)=\mathbf{0}}R(d^*_{\text{RT}},\theta). 
\end{equation}

In words, these features are as follows: First, if the mean function $m(\cdot)$ equals zero, expected exposure to the new policy is $1/2$. Second, worst-case regret occurs precisely at this point.\footnote{\cite{yata2021} also provide sufficient conditions under which these conditions hold true. Our proof is constructive, and our results below apply to a large class of models with these properties and shall not be considered as a specific example. } The latter is due to careful calibration of the MMR decision rule, and one might conjecture that it renders this rule unique. However, the following result establishes the contrary.   
\begin{thm}\label{thm:main.regret.1}
Consider a treatment choice problem with payoff function \eqref{eq:welfare} and statistical model \eqref{eq:normal_model} that exhibits nontrivial partial identification in the sense of Definition \ref{asm:1}. Suppose that Assumption \ref{asm:yata.1} holds and that there is a MMR optimal rule $d^*$ that depends on the data only through $(w^*)^\top Y$ and that satisfies \eqref{eq:yata.property.1} and \eqref{eq:yata.property.2}. If there exists $\mu\in M$ such that $\overline{I}(\mu)>\overline{I}(\mathbf{0})$ and $\overline{I}(\mathbf{0})$ is large enough, then
\begin{itemize}
    \item[(i)] There are infinitely many MMR optimal rules.

\item[(ii)] Any MMR rule that depends on the data only through $(w^*)^{\top}Y$ (and is weakly increasing in this argument) must randomize for some data realizations. 

\item[(iii)] If $\overline{I}(\mu)$ is differentiable at $\mu=\mathbf{0}$, then no linear threshold rule, i.e., no rule of form $\mathbf{1}\{w^{\top}Y\geq c\}$ for some $w\in\mathbb{R}^{n}$ and $c\in\mathbb{R}\cup\{-\infty,\infty\}$, is MMR optimal.
\end{itemize}
\end{thm}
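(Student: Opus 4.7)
All three parts share a common preliminary: every MMR-optimal rule $d$ must satisfy $\mathbb{E}_{\mathbf{0}}[d(Y)] = 1/2$, and the MMR value equals $\overline{I}(\mathbf{0})/2$. Indeed, centrosymmetry of $\Theta$ together with linearity of $U$ forces $I(\mathbf{0}) = [-\overline{I}(\mathbf{0}), \overline{I}(\mathbf{0})]$, so the supremum of $R(d,\theta)$ over $\theta$ with $m(\theta) = \mathbf{0}$ equals $\overline{I}(\mathbf{0}) \max(p, 1-p)$, where $p := \mathbb{E}_{\mathbf{0}}[d(Y)]$; this is at least $\overline{I}(\mathbf{0})/2$, with equality only at $p = 1/2$. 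Since $d^*$ attains value $\overline{I}(\mathbf{0})/2$ by hypothesis, every MMR rule must too, pinning down $p = 1/2$.

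For part (i), the plan is to exploit convexity of the set of MMR rules, which follows from affinity of $d \mapsto R(d,\theta)$, and exhibit a single additional MMR rule $d^{\dagger}$ distinct from $d^*$; the convex combinations $\alpha d^* + (1-\alpha) d^{\dagger}$ then yield a continuum. I would construct $d^{\dagger} := d^* + \epsilon h$ with $h$ a bounded measurable function satisfying $\mathbb{E}_{\mathbf{0}}[h(Y)] = 0$ but not almost surely zero. For $\epsilon$ small enough, the sum lies in $[0,1]$ (since $d^* = \Phi(\cdot/\tilde\sigma)$ is bounded away from $\{0,1\}$), $\mathbb{E}_{\mathbf{0}}[d^{\dagger}(Y)] = 1/2$, and by affinity of regret, $R(d^{\dagger},\theta) = R(d^*,\theta)$ whenever $m(\theta) = \mathbf{0}$, while the strict slack $\overline{I}(\mathbf{0})/2 - R(d^*,\theta) > 0$ elsewhere absorbs the perturbation. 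An alternative that sidesteps the uniform-slack issue noted below is to take $d^{\dagger}$ equal to the least-randomizing MMR rule produced in Theorem \ref{thm:main.regret.2}.

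For part (ii), any non-randomized, weakly increasing rule in $T := (w^*)^{\top} Y$ must be a threshold $\mathbf{1}\{T \geq c\}$, and the $p = 1/2$ constraint forces $c = 0$ (the median of $T$ under $\mu = \mathbf{0}$). Define the concave envelope $A(s) := \sup\{\overline{I}(\mu) : \mu \in M,\; (w^*)^{\top}\mu = s\}$; as a supremum of the concave $\overline{I}$ over an affine slice of the convex set $M$, $A$ is concave in $s$ with $A(0) = \overline{I}(\mathbf{0})$. MMR of the threshold rule specialises to $A(s)(1 - \Phi(s/\tau_*)) \leq \overline{I}(\mathbf{0})/2$ for all $s$, with $\tau_* := \sqrt{(w^*)^{\top}\Sigma w^*}$. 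A first-order expansion at $s = 0^-$ reduces this to the necessary condition $A'(0^-) \geq 2 \overline{I}(\mathbf{0}) \phi(0)/\tau_*$; since $A'(0^-)$ is a finite quantity fixed by the problem, while the right-hand side is linear in $\overline{I}(\mathbf{0})$, this condition fails once $\overline{I}(\mathbf{0})$ is large enough, contradicting the presumed MMR optimality of the threshold rule.

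For part (iii), by the same median argument any linear threshold MMR rule is $\mathbf{1}\{w^{\top} Y \geq 0\}$ for some $w$. Setting $f(\mu) := \overline{I}(\mu)(1 - \Phi(w^{\top}\mu/\tau_w))$ with $\tau_w := \sqrt{w^{\top}\Sigma w}$, MMR forces $\mathbf{0}$ to be a global maximum of $f$ on $M$; since $\mathbf{0}$ lies in the interior of $M$ (centrosymmetry plus the openness of $\mathcal{S}$), the first-order condition $\nabla f(\mathbf{0}) = 0$ must hold. Differentiability of $\overline{I}$ at $\mathbf{0}$ with gradient $g_0 := \nabla \overline{I}(\mathbf{0})$ reduces this to $g_0 = 2 \overline{I}(\mathbf{0}) \phi(0) w/\tau_w$, which on taking the $\Sigma$-quadratic form of both sides collapses to the scalar equation $\sqrt{g_0^{\top}\Sigma g_0} = 2 \overline{I}(\mathbf{0}) \phi(0)$. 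Concavity of $\overline{I}$ together with $\overline{I}(\mu) > \overline{I}(\mathbf{0})$ implies the directional derivative of $\overline{I}$ in direction $\mu$ is positive, hence $g_0 \neq 0$, so the left-hand side is a fixed finite nonzero constant while the right-hand side grows linearly in $\overline{I}(\mathbf{0})$; the equation fails once $\overline{I}(\mathbf{0})$ is large enough, and no linear threshold rule can be MMR. The main obstacle in executing this plan is the uniform-slack step in part (i): one must ensure $\overline{I}(\mathbf{0})/2 - R(d^*,\theta)$ is bounded below by a positive constant on the set where the perturbation $\epsilon h$ is nontrivial, which is most cleanly handled by invoking the explicit second MMR rule from Theorem \ref{thm:main.regret.2} rather than via delicate continuity estimates on the likelihood.
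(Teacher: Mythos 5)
Your common preliminary (the MMR value equals $\overline{I}(\mathbf{0})/2$ and any MMR rule must have $\mathbb{E}_{\mathbf{0}}[d(Y)]=1/2$) is correct and is essentially the paper's Step 1. Parts (ii) and (iii) are sound. Part (ii) reproduces the paper's argument: the median constraint forces $c=0$, and the failure of the one-sided first-order condition for $\gamma\mapsto \overline{k}_{w^*}(\gamma)\Phi(-\gamma/\tau_*)$ at $\gamma=0$ when $\overline{I}(\mathbf{0})$ exceeds the explicit threshold is exactly the paper's Lemma on $d_0$ (phrased there via generalized gradients rather than one-sided derivatives of a concave function). Your part (iii) is actually cleaner than the paper's: rather than changing coordinates and using MMR optimality of $d^*$ to show the off-$w^*$ components of $\nabla\overline{I}(\mathbf{0})$ vanish, you take the $\Sigma$-quadratic form of the FOC and obtain the $w$-free obstruction $\sqrt{g_0^{\top}\Sigma g_0}=2\overline{I}(\mathbf{0})\phi(0)$; one can check that $\sqrt{g_0^{\top}\Sigma g_0}$ coincides with $s_{w^*}(0)\sqrt{(w^*)^{\top}\Sigma w^*}$, so your ``large enough'' condition agrees with the paper's.

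The genuine gap is part (i): you never actually produce a second MMR rule. The perturbation $d^{\dagger}=d^*+\epsilon h$ fails for the reason you yourself flag, and the failure is not repairable by shrinking $\epsilon$: for $\mu$ near $\mathbf{0}$ the identified set $I(\mu)$ still contains values of both signs, the slack $\overline{I}(\mathbf{0})/2-\sup_{u\in I(\mu)}u(\mathbf{1}\{u\geq0\}-\mathbb{E}_{\mu}[d^*(Y)])$ vanishes as $\mu\to\mathbf{0}$, while the perturbation contributes a term of order $\epsilon\,|\mathbb{E}_{\mu}[h(Y)]|\,\overline{I}(\mathbf{0})$ whose sign you cannot control on both branches; the problem lies in the ratio of perturbation to slack near $\mathbf{0}$, not in the absolute size of $\epsilon$. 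Your fallback---taking $d^{\dagger}$ to be the least-randomizing rule from Theorem \ref{thm:main.regret.2}---is circular: that theorem does not construct a second MMR rule, it only compares randomization regions of MMR rules in $\tilde{\mathcal{D}}_n$, and both its statement and its proof presuppose the rule $d^*_{\text{linear}}$ and its MMR optimality, which are established precisely inside the proof of Theorem \ref{thm:main.regret.1}(i).

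What is missing is the paper's substantive construction. One bounds the worst-case regret of any rule measurable in $(w^*)^{\top}Y$ by that of a one-dimensional ``linear embedding'' in which the identified set is replaced by the affine envelope $[s_{w^*}(0)\gamma-\overline{I}(\mathbf{0}),\,s_{w^*}(0)\gamma+\overline{I}(\mathbf{0})]$, using concavity of $\overline{k}_{w^*}$ and its superdifferential at $0$; one then verifies by direct computation (showing the profiled regret of the piecewise-linear rule is first concave then convex in $\gamma$, with a stationary point and hence unique maximum at $\gamma=0$ by the fixed-point definition of $\rho^*$) that $d^*_{\text{linear}}$ attains worst-case regret $\overline{I}(\mathbf{0})/2$ in the embedded problem and therefore in the original one; closure of the MMR set under convex combinations then yields the continuum. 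That verification is the non-soft step of part (i) and cannot be replaced by a generic perturbation argument.
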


\begin{proof}
See Appendix \ref{sec:app.a.3}. 
\end{proof}

To be clear, this finding applies if the problem is sufficiently far from point identification, with an exact condition given in the proof.\footnote{This condition will always be met if $\Sigma$ is sufficiently small, holding other parameter values fixed. Heuristically, this means that our multiplicity result is always relevant as sample size becomes large. However, to what extent our result is a useful characterization in a limit experiment is beyond the scope of the current paper and an interesting question that we leave for future research.  } Close to point identification and under mild additional assumptions, \cite{yata2021} shows MMR optimality of a linear threshold rule.

Part (i) of Theorem \ref{thm:main.regret.1} is established constructively: We show that, whenever $d^*_{\text{RT}}$ is MMR optimal, then so is the \emph{piecewise linear} rule
\begin{equation}\label{eq:linear.rule}
d^*_{\text{linear}}((w^*)^{\top}Y):=\begin{cases}
0, & (w^*)^{\top}Y<-\rho^{*},\\
\frac{(w^*)^{\top}Y+\rho^{*}}{2\rho^{*}}, & -\rho^{*}\leq(w^*)^{\top}Y\leq\rho^{*},\\
1, & (w^*)^{\top}Y>\rho^{*},
\end{cases}
\end{equation}
where $\rho^*>0$ is characterized in Appendix \ref{sec:app.a.3}, Equation \eqref{eq:rho.star.main.proof}. This implies existence of infinitely many MMR rules because the set of such rules is closed under convex combination.

Next, if the identified set is large enough for given $\Sigma$ (or as $\Sigma$ vanishes for given identified set), all of the above rules randomize for some data realizations. A natural question to ask is whether this feature is shared by all MMR rules. Parts (ii) and (iii) give qualified affirmative answers: If we focus on decision rules that increase in $(w^*)^{\top}Y$ and if $\overline{I}(\mathbf{0})$ is large enough, then randomization is necessary for MMR optimality; if bounds are furthermore differentiable in reduced-form parameters at $\mathbf{0}$, randomization is necessary for any MMR rule that depends on a linear index of the data.\footnote{The class of nonrandomized but otherwise unrestricted rules is not interestingly different from the class of all rules due to the possibility of purifying randomized rules. See Remark \ref{rem:lexicographic} for discussion and references. Similarly, the differentiability condition is needed to preclude that some component of $Y$ can effectively be used as randomization device.} In particular, the threshold rule
\begin{equation}\label{eq:d.erm}
d_0^*((w^*)^\top Y )=\mathbf{1}\{(w^*)^\top Y \geq 0\}
\end{equation}
is not MMR optimal.  

\noindent{\scshape Running Example---Continued:} Theorem \ref{thm:main.regret.1} applies to our running example. We next improve on this observation by providing explicit MMR optimal rules for the example. Broadly speaking, these are characterized by a weighting of studentized signals that resembles a triangular kernel for small enough $C$ and turns into a nearest neighbor kernel as $C$ becomes large.\hfill $\square$

\begin{prop}\label{thm:running.example}
In the running example, the following statements hold true.
\begin{itemize}
\item[(i)] If
\begin{equation}\label{eq:small.id.condition}
C \left\Vert x_1 - x_0 \right\Vert < \sqrt{\pi / 2} \cdot \sigma_1, 
\end{equation}
then the following decision rule is uniquely (up to almost sure agreement) MMR optimal:
\begin{eqnarray*}
    d_{m_0^*}& :=& \mathbf{1}\{w_{m_0^*}^\top Y \geq 0\}, \\
    w_{m_0^*}^\top &:= & \left(1, \frac{\max \{m_0^* - C\left\Vert x_2-x_0 \right\Vert,0\}/\sigma_2^2}{(m_0^* - C\left\Vert x_1-x_0 \right\Vert)/\sigma_1^2},\ldots,\frac{\max \{m_0^* - C\left\Vert x_n-x_0 \right\Vert,0\}/\sigma_n^2}{(m_0^* - C\left\Vert x_1-x_0 \right\Vert)/\sigma_1^2}\right),
\end{eqnarray*}
where $m_0^* >  C\left\Vert x_1-x_0 \right\Vert$ solves a simple fixed point problem (\eqref{eq:small.discrepancy.m.star} in Online Appendix \ref{sec:proof.thm.running}).
\item[(ii)]
If
\begin{equation*}
C \left\Vert x_1 - x_0 \right\Vert = \sqrt{\pi / 2} \cdot \sigma_1, 
\end{equation*}
then $\mathbf{1}\{ Y_1 \geq 0\}$
is MMR optimal.
\item[(iii)] If 
\begin{equation}\label{eq:large.id.condition}
C \left\Vert x_1 - x_0 \right\Vert  > \sqrt{\pi / 2} \cdot \sigma_1, 
\end{equation}
then the rule 
\begin{equation}\label{eq:linear.rule.running.example}
d^{*}_{\text{linear}}(Y_1):=\begin{cases}
0, & Y_1<-\rho^{*},\\
\frac{Y_1+\rho^{*}}{2\rho^{*}}, & -\rho^{*}\leq Y_1\leq\rho^{*},\\
1, & Y_1>\rho^{*},
\end{cases}
\end{equation}
where $\rho^* \in (0,C \left\Vert x_1-x_0 \right\Vert)$ is uniquely defined by
$\rho^*=C \left\Vert x_1-x_0 \right\Vert (1-2\Phi\left(\rho^{*}/\sigma_1\right))$
is MMR optimal. So is the rule $d_{\text{RT}}^*(Y_1):=\Phi\bigl(Y_1/\tilde{\sigma})$  with
\begin{equation} \label{eqn:yata.example.1}
\tilde{\sigma} = \sqrt{ 2C^2 \left\Vert x_1-x_0\right\Vert^2/\pi - \sigma_1^2 }
\end{equation}
as well as all convex combinations of these rules.

\item[(iv)] If Equation \eqref{eq:large.id.condition} holds, no linear threshold rule is MMR optimal.
\end{itemize}
\end{prop}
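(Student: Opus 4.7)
The plan is to apply the Yata (2021) machinery underlying Theorem \ref{thm:main.regret.1}, after guessing the optimal linear index $w^{*}$ and verifying that it attains the minimum worst-case regret. The key observation is that the natural guess $w^{*}\propto e_{1}$ (all weight on the nearest neighbor) is feasible precisely when \eqref{eq:large.id.condition} holds, which yields the trichotomy in the statement. Plugging this single-coordinate candidate into Yata's rule gives $d_{\text{RT}}^{*}(Y_{1})=\Phi(Y_{1}/\tilde{\sigma})$; property \eqref{eq:yata.property.1} is automatic by symmetry, and enforcing \eqref{eq:yata.property.2} requires $t\mapsto R(d_{\text{RT}}^{*},\theta)$ to be locally maximized at $t=0$ along perturbations $\mu=(t,0,\ldots,0)^{\top}$ with $u=\overline{I}(\mu)=t+C\|x_{1}-x_{0}\|$ (using that the nearest neighbor is uniquely active to linearize $\overline{I}$ on a neighborhood of $\mathbf{0}$). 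The first-order condition in $t$ at $t=0$ then yields $\sqrt{\sigma_{1}^{2}+\tilde{\sigma}^{2}}=C\|x_{1}-x_{0}\|\sqrt{2/\pi}$, i.e., \eqref{eqn:yata.example.1}, and feasibility $\tilde{\sigma}^{2}\geq 0$ is precisely condition \eqref{eq:large.id.condition}; this is where the constant $\sqrt{\pi/2}$ originates.

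For parts (iii) and (iv), assume \eqref{eq:large.id.condition}, so the preceding step validates $d_{\text{RT}}^{*}$. Theorem \ref{thm:main.regret.1}(i) then gives existence of a piecewise-linear MMR optimal rule of form \eqref{eq:linear.rule}, with all convex combinations of $d_{\text{RT}}^{*}$ and $d_{\text{linear}}^{*}$ MMR optimal by convexity of the MMR set. Specializing to the running example, I would compute $\mathbb{E}_{\mu_{1}}[d_{\text{linear}}^{*}(Y_{1})]$ by splitting the expectation according to the piecewise definition and using standard Gaussian tail integrals, then enforce \eqref{eq:yata.property.2} for $d_{\text{linear}}^{*}$ at the symmetric worst cases $u=\pm C\|x_{1}-x_{0}\|$; this yields the fixed-point equation $\rho^{*}=C\|x_{1}-x_{0}\|(1-2\Phi(\rho^{*}/\sigma_{1}))$, and uniqueness in $(0,C\|x_{1}-x_{0}\|)$ follows from monotonicity of both sides in $\rho^{*}$. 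Part (iv) follows from Theorem \ref{thm:main.regret.1}(iii) after noting that $\overline{I}(\mu)=\min_{i}\{\mu_{i}+C\|x_{i}-x_{0}\|\}$ is differentiable at $\mathbf{0}$: by the strict nearest-neighbor assumption, $\overline{I}(\mu)=\mu_{1}+C\|x_{1}-x_{0}\|$ on a neighborhood of $\mathbf{0}$. Part (ii) is the boundary case: at $C\|x_{1}-x_{0}\|=\sqrt{\pi/2}\sigma_{1}$, both $\tilde{\sigma}$ and $\rho^{*}$ equal zero, so $d_{\text{RT}}^{*}$ and $d_{\text{linear}}^{*}$ both collapse to $\mathbf{1}\{Y_{1}\geq 0\}$.

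For part (i), the single-coordinate candidate fails ($\tilde{\sigma}^{2}<0$), so weights must spread to additional neighbors. I would derive the optimal $w^{*}$ by a saddle-point analysis of the reduced one-dimensional problem: given a candidate threshold rule $\mathbf{1}\{w^{\top}Y\geq 0\}$, the worst-case $\mu^{*}$ satisfies $\mu_{i}^{*}=m_{0}^{*}-C\|x_{i}-x_{0}\|$ for each $i$ in the support of $w$ (by complementary slackness of the Lipschitz constraints defining $\overline{I}$), and the worst-case regret equals $m_{0}^{*}[1-\Phi(w^{\top}\mu^{*}/\sqrt{w^{\top}\Sigma w})]$. Minimizing this objective over $w\geq 0$ via first-order conditions yields $w_{i}^{*}\propto\max\{m_{0}^{*}-C\|x_{i}-x_{0}\|,0\}/\sigma_{i}^{2}$, with countries too distant cut off, and $m_{0}^{*}>C\|x_{1}-x_{0}\|$ characterized by the fixed-point equation referenced in the statement. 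The lower bound $m_{0}^{*}>C\|x_{1}-x_{0}\|$ delineates the spread-weight regime and is the small-ID analogue of $\tilde{\sigma}^{2}>0$.

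The main obstacle will be part (i): establishing that the stipulated rule is globally (not merely locally) MMR optimal and uniquely so up to almost-sure agreement. Global optimality requires verifying Yata's saddle-point conditions for the candidate $w_{m_{0}^{*}}$, which reduces to checking concavity of the worst-case regret as a function of $\mu$ over the Lipschitz-constrained set. Uniqueness follows from strict concavity of the reduced one-dimensional problem combined with Gaussian completeness (which ensures that two rules with identical expected actions agree almost surely, as used in the proof of Theorem \ref{thm:admin}). Existence and uniqueness of the fixed point $m_{0}^{*}>C\|x_{1}-x_{0}\|$ then amount to a finite-dimensional monotonicity argument over the active-set structure of the min defining $\overline{I}$.
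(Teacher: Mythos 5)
Your treatment of parts (ii)--(iv) essentially matches the paper's: reduce to the one-dimensional problem in $Y_1$ using $\overline{k}_{w^*}(\gamma)=\gamma+C\Vert x_1-x_0\Vert$ (valid because the nearest neighbor is unique), invoke the \cite{stoye2012minimax} solution of that one-dimensional problem to obtain $d^*_{\text{RT}}$ and the value $C\Vert x_1-x_0\Vert/2$, match it with the lower bound obtained by restricting to $m(\theta)=\mathbf{0}$, and then apply Theorem \ref{thm:main.regret.1}(i) and (iii). One caveat: deriving $\tilde\sigma$ from a first-order condition at $t=0$ only yields a necessary condition; to verify the hypotheses of Theorem \ref{thm:main.regret.1} you still need the global statement that the worst-case regret of $d^*_{\text{RT}}$ is attained at $\mu=\mathbf{0}$, which the paper imports from \cite{stoye2012minimax} rather than from a local calculation.

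The genuine gap is in part (i). Minimizing the worst-case regret $m_0\bigl(1-\Phi(w^\top\mu^*/\sqrt{w^\top\Sigma w})\bigr)$ over threshold rules $w$ produces, at best, the minimax rule \emph{within the class of linear threshold rules}; it does not show that $d_{m_0^*}$ is MMR optimal over all of $\mathcal{D}_n$, nor that it is unique there. Your proposed patch --- ``checking concavity of the worst-case regret as a function of $\mu$'' plus Gaussian completeness --- is not the argument that closes this gap. What the paper actually does is exhibit an explicit least favorable \emph{two-point prior} $\pi_{m_0}$ supported on $\pm(\mu_0,\mu^\top)^\top$ with $\mu_j=\max\{m_0-C\Vert x_j-x_0\Vert,0\}$, show that $d_{m_0}$ is the \emph{unique} Bayes response to $\pi_{m_0}$ whenever $m_0>C\Vert x_1-x_0\Vert$ (this is where both global optimality over $\mathcal{D}_n$ and uniqueness come from, via the standard game-theoretic characterization of minimax rules), and then prove that the composed best-response map $m_0\mapsto\psi(m_0)$ has a fixed point $m_0^*$. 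That last step is itself nontrivial: $\psi$ is a correspondence, and the paper establishes existence via compactification of the domain, upper hemicontinuity, interval-valuedness, and path-connectedness of the graph, combined with the boundary conditions $\psi(C\Vert x_1-x_0\Vert)\geq C\Vert x_1-x_0\Vert$ (strict exactly when \eqref{eq:small.id.condition} holds, which is also how case (i) is separated from case (ii)) and $\psi(m_0)<m_0$ for large $m_0$ --- not a ``monotonicity argument,'' since $\psi$ is never shown to be monotone. Without the least-favorable-prior construction and the fixed-point existence proof, part (i) is not established.
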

\begin{proof}
See Online Appendix \ref{sec:proof.thm.running}.
\end{proof}
\begin{rem}
 All decision rules above converge to the one from case (ii) as $C \Vert x_1 - x_0 \Vert \to \sqrt{\pi / 2} \cdot \sigma_1$.
\end{rem}
\begin{rem}
This solution relates to the literature as follows. The problem is within the framework considered by \cite{yata2021} (who uses results from \citealt{stoye2012minimax}), and his analysis applies; in particular, our linear index differs from his $w^*$ only by a more explicit characterization. Although some of our proof steps use algebra from \cite{stoye2012minimax}, 
the alternative solutions in (iii), the uniqueness statement, and part (iv) are entirely new. \cite{ishihara2021} numerically find a solution within the class of symmetric threshold rules (i.e.,  rules of form $\mathbf{1}\{w^{\top}Y\geq 0\}$). This in principle recovers the global solution if $C \left\Vert x_1 - x_0 \right\Vert \leq \sqrt{\pi / 2} \cdot \sigma_1$ but will exclude all globally MMR optimal decision rules otherwise. That said, \citeauthor{ishihara2021}'s (\citeyear{ishihara2021})  solution approach applies considerably more generally. This is because we view their approach primarily as a numerical strategy for finding a minimax solution over a constrained class. Indeed, given a statistical model and a risk function, one can always try to find a minimax decision rule numerically within a class of threshold rules, which may or may not pin down the true minimax rule. \end{rem}
\begin{rem}
If $\mu$ is exogenous and known, then the decision rule $d^*_{\text{known }\mu}(\mu):= \max\{\min\{\overline{I}(\mu)/(\overline{I}(\mu)-\underline{I}(\mu)),1\},0\}$ uniquely attains MMR \citep{Manski2007}. Only our new rule $d^*_\text{linear}$ converges to $d^*_{\text{known }\mu}$ in certain special cases. Similarly, \citet[][Section 3.3]{ishihara2021} discuss that an analogous convergence fails for any MMR rule that they propose.
This may appear puzzling; however, any presumption that MMR rules ``should''  converge to such limits delicately depends on how one conceives the limit of the decision problem. 
Hence, it is not clear that we observe failure of any convergence that ``should'' have occurred.
\end{rem}

\begin{figure}
\begin{subfigure}[c]{0.45\textwidth}
\includegraphics[scale=0.41]{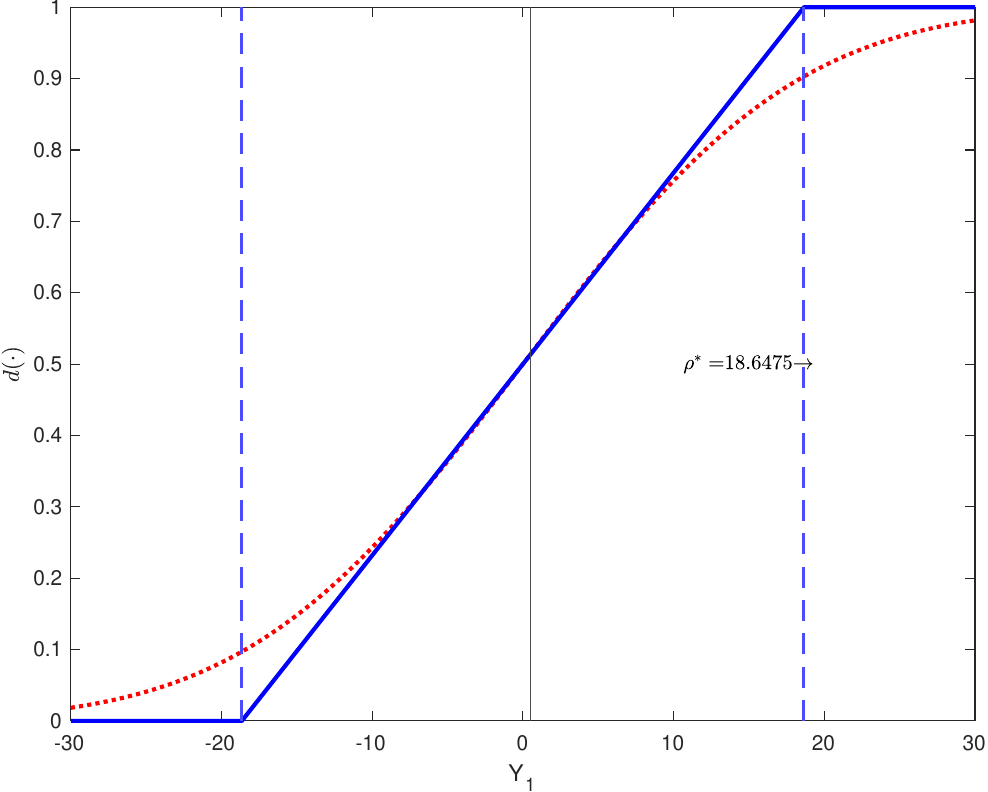}
\caption{$d^*_{\textrm{RT}}$ (red) and $d^*_{\textrm{linear}}$ (blue)}   
\end{subfigure}
\hfill
\begin{subfigure}[c]{0.45\textwidth}
\includegraphics[scale=0.41]{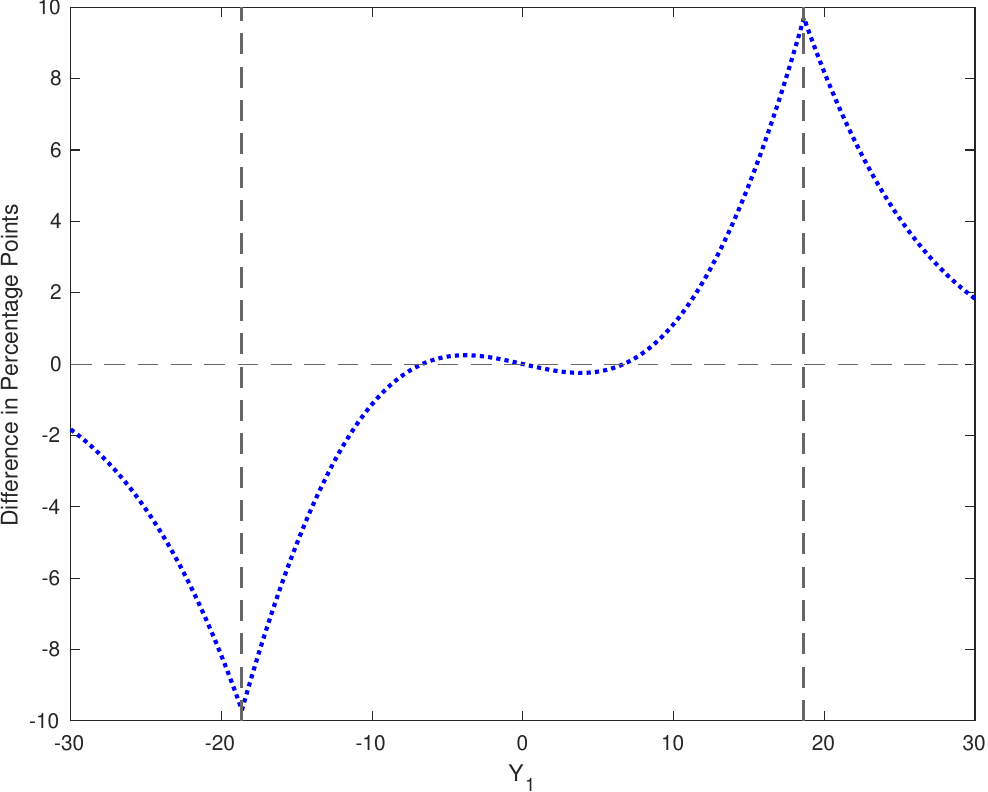}
\caption{$d^*_{\textrm{linear}}-d^*_{\textrm{RT}}$}    
\end{subfigure}
\caption{Panel a): Visualization of rules \eqref{eq:yata.rule} and \eqref{eq:linear.rule} in the running example with estimates from two countries (setting $x_0=0$, $x_1=-7.5$, $\sigma_1=3.9$, $x_2=7.9$, $\sigma_2=2.4$,  $C=2.5$). Panel b): Difference between rules as a function of the nearest neighbor's outcome.}  \label{figure:NewRule.1}
\end{figure}

We conclude that different MMR rules can lead to rather different policy actions for the same data. This difference is illustrated in Figure \ref{figure:NewRule.1} for parameter values calibrated to \citeauthor{ishihara2021}'s (\citeyear{ishihara2021}) empirical example. How serious a challenge it is depends on one's view. If one truly thinks of MMR as encoding a decision maker's complete preferences, and hence of competing optimal rules as mutually indifferent, it is not much of a concern. However, it may make it harder to communicate MMR-based decisions to policy makers. In addition, our next results  below will show that one may plausibly have preferences among the different MMR rules.

\noindent {\scshape Running Example---Continued:} Recall that, for some parameter values, there  are infinitely many MMR rules depending on the data only through $Y_1$. Thus, to better compare and visualize different rules, consider the $w^*$-profiled regret for $w^*=(1,0,\ldots,0)^{\top}$ defined as follows (also see the general definition and more detailed discussions of profiled regret in Appendix \ref{sec:profile.regret}): 
\begin{eqnarray*}
\overline{R}_{w^*}(d,\gamma) = \sup_{(\mu_0,\mu_1,\ldots,\mu_n)^\top \in \mathbb{R}^{n+1}: \mu_1 = \gamma,\mu_0 \in I(\mu),\mu\in M } \mu_0\left(\mathbf{1}\{\mu_0\geq0\}-\mathbb{E}_\mu[d(Y)]\right),
\end{eqnarray*}
and we can easily verify that $ \gamma \in \mathbb{R}$.\hfill  $\square$

For the same parameters used in Figure \ref{figure:NewRule.1}, Figure \ref{figure:NewRule} depicts $w^*$-profiled regret  over the range $\gamma \in [-30,30]$ for four decision rules. The blue (solid) line  is $d^*_{\text{linear}}$ (see Equation \eqref{eq:linear.rule}); the red (dotted) line is $d^*_{\text{RT}}$ (see Equation \eqref{eq:yata.rule}); the black (bimodal, solid) line is the symmetric threshold rule $d^*_0(Y)=\mathbf{1}\{Y_1 \geq 0\}$; and the green (dashed) line is $d_{\text{coin-flip}}=1/2$.\footnote{Appendix \ref{sec:profile.regret.computation} presents algebraic and computational details that underlie this figure.}
\begin{figure}[h!]
\centering
\includegraphics[scale=0.4]{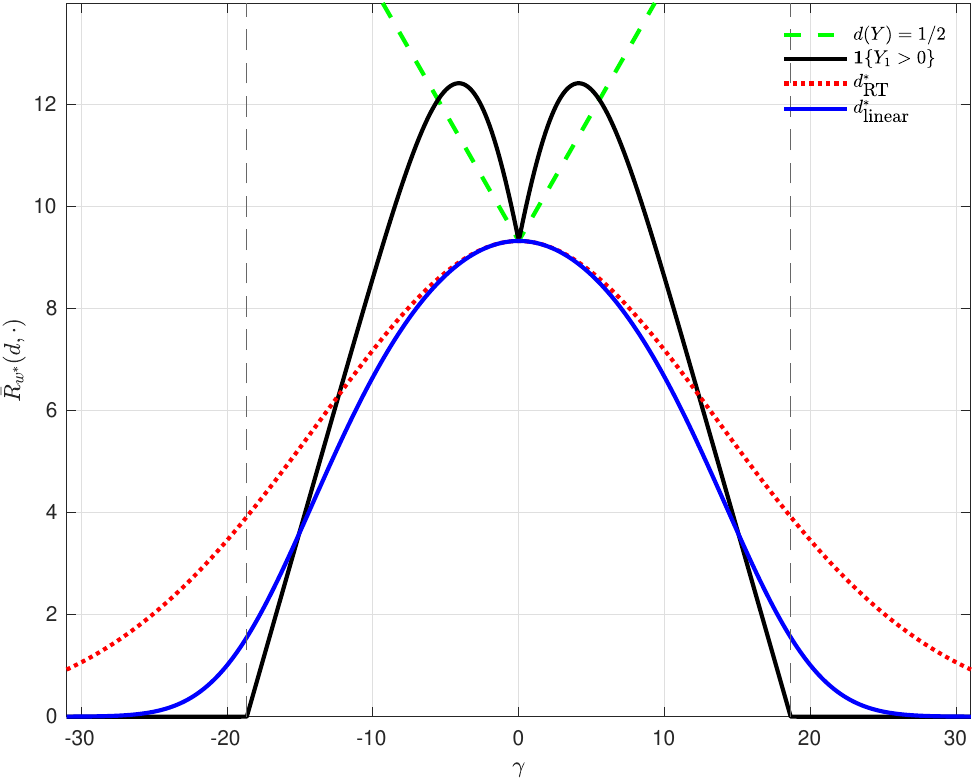}
\caption{$w^*$-profiled regret of four decision rules in the running example; parameter values are as in Figure \ref{figure:NewRule.1}}   
\label{figure:NewRule}
\end{figure}

An immediate use of Figure \ref{figure:NewRule} is to compare decision rules in terms of their worst-case regret. For example, consistent with Theorem \ref{thm:main.regret.1}-(ii), $d^*_0$ is not MMR optimal (the maximum of the black curve is clearly above those of $d^*_{\text{RT}}$ and $d^*_{\text{linear}}$).  For the specific parameter values used here, we can furthermore show that $d_0^*$ is minimax regret optimal among linear threshold rules; thus, the black line also illustrates the minimax regret efficiency loss from restricting attention to linear threshold rules. Figure \ref{figure:NewRule} furthermore reveals interesting differences among MMR optimal rules. In particular, $d^*_{\text{linear}}$ appears to have smaller $w^*$-profiled regret than $d^*_{\text{RT}}$ virtually everywhere. Indeed, Proposition \ref{prop:linear.vs.RT} in Appendix \ref{sec:profile.regret} shows that whenever condition (iii) of Proposition \ref{thm:running.example} holds, $d^*_{\text{linear}}$ dominates $d^*_{\text{RT}}$ in terms of $w^*$-profiled regret for a large range of values of $C \left\Vert x_1-x_0 \right\Vert$ and $\sigma_1$, including those used in Figure \ref{figure:NewRule} (though there exists parameter values under which the 
dominance does not hold for small nonzero $\gamma$).
  Moreover, the considerable difference between profiled regret functions in Figure \ref{figure:NewRule} continues beyond the figure: The ratio $\overline{R}_{w^*}(d^*_{\text{linear}},\gamma)/\overline{R}_{w^*}(d^*_{\text{RT}},\gamma)$ decays to zero at exponential rate as $\gamma \to \pm \infty$.  Finally, Figure \ref{figure:NewRule} illustrates that $d_{\text{coin-flip}}$ is $w^*$-profiled regret inadmissible in this example. In fact, profiled-regret inadmissibility of $d_{\text{coin-flip}}$ holds in a more general framework; see Proposition \ref{prop:profiled.regret.dominance} in Appendix \ref{sec:profile.regret} for additional results.

\subsection{Least Randomizing MMR Optimal Rules}\label{sec:least.randomizing} 

We next argue that further refining the MMR criterion presents an interesting research opportunity and may even lead to unique recommendations. To this purpose, we propose consideration of, and characterize, the \emph{least randomizing} MMR rule. Our main motivation is that, despite the wide adoption of randomized treatment allocations in economics and the social sciences, policy makers might shy away from exposing only a fraction of a population to the new policy. Thus, we attempt to recommend actions $a \in (0,1)$ as infrequently as possible. We justify our approach by a decision maker with a lexicographic preference (See Remark \ref{rem:lexicographic} below).  Alternatively, one may also explicitly incorporate aversion to randomization in the loss function and try to solve the MMR criterion under a modified  risk function. See Section \ref{sec:aversion.randomization} for detailed discussions.

To formalize this, observe that both $d^*_\text{RT}$ and $d^*_{\text{linear}}$ can be considered smoothed versions of $d^*_0$ in a sense that we now make precise. Let $F:\mathbb{R}\rightarrow[0,1]$ be a c.d.f. and consider a decision rule of form $F \circ w^*:=F((w^*)^{\top}Y)\in \mathcal{D}_n$; that is, the step function $d^*_0$ was smoothed into a c.d.f. We will restrict attention to c.d.f.'s whose associated distributions are \emph{symmetric} (i.e., $F(-x)=1-F(x)$) and \emph{unimodal} (i.e., $F(\cdot)$ is convex for $x\leq 0$ and concave otherwise). Let $\mathcal{F}$ be the set of all such c.d.f's and let
\[
\tilde{\mathcal{D}}_{n}:=\{F\circ w^*\in\mathcal{D}_{n}: F\in \mathcal{F}\}.
\]
Note that each rule $F\circ w^* \in \tilde{\mathcal{D}}_n$ depends on the data only via $(w^*)^\top Y$ and is nondecreasing in $(w^*)^\top Y$. Moreover, for each $F\circ w^*\in \tilde{\mathcal{D}}_n$, the interval on which treatment assignment is randomized equals (up to closure)
\begin{equation}\label{eq:randomization.region}
V(F\circ w^*):=\ensuremath{\bigl(\sup\bigl\{ x\in\mathbb{R}:F(x)=0\bigr\},\inf\bigl\{ x\in\mathbb{R}:F(x)=1\bigr\}\bigr)}.   
\end{equation}
All MMR decision rules considered in this paper are in $\tilde{\mathcal{D}}_{n}$. We next show that $d^*_{\text{linear}}$ is least randomizing among them and among all other MMR decision rules that might exist in this class. 
\begin{thm}\label{thm:main.regret.2}
Suppose all conditions of Theorem \ref{thm:main.regret.1} hold. If $F\circ w^*\in\tilde{\mathcal{D}}_n$ is MMR optimal, then $V(d^*_{\text{linear}}\circ w^*)\subseteq V(F\circ w^*)$, with equality if and only if $F=d^*_{\text{linear}}$.
\end{thm}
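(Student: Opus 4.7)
The plan is to reduce MMR optimality within $\tilde{\mathcal{D}}_n$ to a question about the support of a mixing measure and to read off both the inclusion and the equality case from a single monotonicity observation. Using Khintchine's representation of a symmetric unimodal distribution, if $T\sim F$ with $F\in\mathcal{F}$, then $T\stackrel{d}{=}UV$ where $U\sim\mathrm{Uniform}(-1,1)$ is independent of $V\geq 0\sim Q$, giving
\[
f_T(t)=\int_0^\infty \frac{1}{2s}\,\mathbf{1}_{|t|\leq s}\,dQ(s).
\]
The randomization region is $V(F\circ w^*)=(-b,b)$ with $b=\sup\operatorname{supp}(Q)$, and $d^*_{\text{linear}}$ corresponds to $Q=\delta_{\rho^*}$. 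The goal becomes: MMR optimality forces $b\geq\rho^*$, with equality iff $Q=\delta_{\rho^*}$.

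Next I would extract a tight derivative condition at $\mu_Z=0$, where $\mu_Z:=(w^*)^\top m(\theta)$ and $\sigma_Z^2:=(w^*)^\top\Sigma w^*$. Let $g_F(\mu_Z):=\mathbb{E}_{m(\theta)}[F((w^*)^\top Y)]$; then $g_F(\mu_Z)$ equals the c.d.f.\ at $\mu_Z$ of $T+N(0,\sigma_Z^2)$, so $g_F'(0)=h(0)=\int\phi_{\sigma_Z}(t)\,f_T(t)\,dt$ with $\phi_{\sigma_Z}$ the $N(0,\sigma_Z^2)$ density. Setting $\overline{J}(\mu_Z):=\sup\{U(\theta):(w^*)^\top m(\theta)=\mu_Z,\,\theta\in\Theta\}$, centrosymmetry yields $\inf\{U(\theta):(w^*)^\top m(\theta)=\mu_Z\}=-\overline{J}(-\mu_Z)$, and under Assumption~\ref{asm:yata.1} this infimum is attained. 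Properties \eqref{eq:yata.property.1}--\eqref{eq:yata.property.2} pin down the MMR value $R^*=\overline{I}(\mathbf{0})/2$ and imply $\overline{J}(0)=\overline{I}(\mathbf{0})$, so the regret constraint
\[
\overline{J}(-\mu_Z)\,g_F(\mu_Z)\leq \frac{1}{2}\overline{I}(\mathbf{0})
\]
is saturated at $\mu_Z=0$. Differentiating from the right at $0$ yields $h(0)\leq\alpha:=\overline{J}'(0^-)/(2\,\overline{I}(\mathbf{0}))$.

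Substituting the Khintchine representation into $h(0)$ gives
\[
h(0)=\int_0^\infty \xi(s)\,dQ(s),\qquad \xi(s):=\frac{2\Phi(s/\sigma_Z)-1}{2s},
\]
and a direct derivative computation shows that $\xi$ is continuous and strictly decreasing on $(0,\infty)$. The characterization of $\rho^*$ in the proof of Theorem~\ref{thm:main.regret.1} (Equation~\eqref{eq:rho.star.main.proof}) rearranges to $\xi(\rho^*)=\alpha$. If $b<\rho^*$, then $\xi(s)>\alpha$ on $\operatorname{supp}(Q)\subseteq[0,b]$, forcing $h(0)>\alpha$ and contradicting the bound derived above; hence $b\geq\rho^*$, which is exactly $V(d^*_{\text{linear}}\circ w^*)\subseteq V(F\circ w^*)$. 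If $b=\rho^*$, then $\xi(s)\geq\alpha$ on $[0,\rho^*]$ with equality only at $s=\rho^*$, so $h(0)\geq\alpha$ with equality iff $Q=\delta_{\rho^*}$; combined with $h(0)\leq\alpha$ this forces $Q=\delta_{\rho^*}$, i.e.\ $T\sim\mathrm{Uniform}(-\rho^*,\rho^*)$, which is $d^*_{\text{linear}}$.

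I expect the main obstacle to be the rigorous derivation of the derivative bound $h(0)\leq\alpha$: one must check that $\overline{J}$ has a well-defined one-sided derivative at $0$ matching the constant $\alpha$ that characterizes $\rho^*$ in Appendix~\ref{sec:app.a.3}, and that the state $U(\theta)=-\overline{J}(-\mu_Z)$ is genuinely feasible near $\mu_Z=0$ so the regret inequality is not vacuous. Once those reduction technicalities are in place, the Khintchine decomposition and the strict monotonicity of $\xi$ close the argument in a routine way.
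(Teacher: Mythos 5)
Your proposal is correct and follows the same skeleton as the paper's proof: both exploit that any $F\circ w^*\in\tilde{\mathcal{D}}_n$ attains the MMR value exactly at $(w^*)^\top m(\theta)=0$, extract a stationarity condition there, and then compare the resulting $\phi$-weighted integral of $dF$ against that of the uniform density on $[-\rho^*,\rho^*]$. The differences are in how the two key steps are executed, and in both cases your route is arguably cleaner. For the stationarity condition, the paper invokes Clarke's generalized gradient and chain rule applied to $g_F(\gamma)=\overline{k}_{w^*}(\gamma)\int\Phi\bigl(\tfrac{s-\gamma}{\sigma}\bigr)dF(s)$, which requires verifying regularity hypotheses; you instead take a one-sided difference quotient of the saturated constraint $\overline{J}(-\mu_Z)g_F(\mu_Z)\leq\overline{I}(\mathbf{0})/2$, which delivers exactly the needed upper bound $h(0)\leq s_{w^*}(0)/(2\overline{I}(\mathbf{0}))$ with the \emph{largest} supergradient appearing automatically as the left derivative of the concave function $\overline{J}$ (the feasibility and differentiability caveats you flag are harmless: the constraint is a supremum so no attainment is needed, and one-sided derivatives of a concave function always exist). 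For the comparison step, the paper asserts in one line that symmetry and unimodality of $F$ with support inside $(-\rho^*,\rho^*)$ force $\int\phi(s/\sigma)dF(s)>\int_{-\rho^*}^{\rho^*}\phi(s/\sigma)\tfrac{1}{2\rho^*}ds$; your Khintchine decomposition $h(0)=\int_0^\infty\xi(s)\,dQ(s)$ with $\xi(s)=\tfrac{2\Phi(s/\sigma)-1}{2s}$ strictly decreasing, together with the identity $\xi(\rho^*)=\alpha$ read off from \eqref{eq:rho.star.main.proof}, turns that assertion into a transparent scalar monotonicity argument and simultaneously delivers the uniqueness clause ($Q=\delta_{\rho^*}$) in the equality case. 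What the paper's version buys is brevity and no need to invoke the mixture representation; what yours buys is a fully self-contained proof of the key inequality and a cleaner handle on the "if and only if" part. The only point worth making explicit in a write-up is that $F\in\mathcal{F}$ cannot have an atom at $0$ (symmetry $F(-x)=1-F(x)$ plus right-continuity forces $F(0^-)=F(0)=1/2$), so $Q(\{0\})=0$ and the behavior of $\xi$ at the origin causes no trouble.
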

\begin{proof}
See Appendix \ref{sec:app.a.4}.
\end{proof}

In words, any symmetric, weakly increasing and unimodal MMR optimal rule that depends on data only via $(w^*)^{\top} Y$ must have a randomization area that is wider than that of $d^*_{\text{linear}}$, strictly so if it is a meaningfully distinct rule. Thus, the least randomizing criterion provides a pragmatic and unique refinement among the set of known MMR optimal rules.

To establish Theorem \ref{thm:main.regret.2}, we first show that for any rule $F\circ w^* \in \tilde{\mathcal{D}}_n$, its expected regret at any $\theta$ for which $(w^*)^\top m(\theta)=0$ equals the MMR value of the problem. If $F\circ w^* \in \tilde{\mathcal{D}}_n$ is MMR optimal, its expected regret must therefore be maximized at $(w^*)^\top m(\theta)=0$. For any symmetric and unimodal c.d.f. $F$ with $V(d^*_{\text{linear}}\circ w^*)\nsubseteq V(F\circ w^*)$, we can show that a necessary condition for this maximization fails. 

\noindent{\scshape Running Example---Continued:} Recall that, applied to the running example and for $C$ large enough, $d^*_{\text{linear}}$ can be expressed as \eqref{eq:linear.rule.running.example}.
Of note, an identified set for $\mu_0$ given the true mean of $Y_1$ (i.e., $\mu_1$) is 
\[
\left [ \: \mu_1 - C \left\Vert x_1-x_0 \right\Vert\:,\: \mu_1 + C \left\Vert x_1-x_0 \right\Vert\: \right], 
\]
which can be estimated naturally by 
\begin{equation}\label{eq:id.set.estimator.example}
\left [ \: Y_1 - C \left\Vert x_1-x_0 \right\Vert\:,\: Y_1 + C \left\Vert x_1-x_0 \right\Vert\: \right].  
\end{equation}

As $d^{*}_{\text{linear}}$ only randomizes when $\left\vert Y_1 \right\vert < \rho^*$, we see that interval \eqref{eq:id.set.estimator.example}
contains $0$ whenever $d^*_{\text{linear}}$ randomizes. Equivalently, $d^*_{\text{linear}}$ always (never) implements the new policy when \eqref{eq:id.set.estimator.example} 
is to the right (left) of zero. In this sense, the estimated identified set is explicitly used for decision making.\footnote{One may consider a scenario in which, to aid interpretability of decision rules, the decision maker does not wish to randomize whenever the estimated identified set for $\mu_0$ \eqref{eq:id.set.estimator.example} does not contain zero, i.e. the estimate suggests that the sign of $\mu_0$ is identified. This restricts decision rules to the following set:
\begin{equation}\label{eq:set.interpretability}
\{d\in \tilde{\mathcal{D}}_n:F\in\mathcal{F}, F(x)=0\text{ for all }x\leq -C \left\Vert x_1-x_0 \right\Vert\,\text {and } F(x)=1\text{ for all }x\geq C \left\Vert x_1-x_0 \right\Vert\}   
\end{equation}
As \eqref{eq:set.interpretability} is a subset of $\tilde{\mathcal{D}}_n$ containing $d^*_{\text{linear}}$, Theorem \ref{thm:main.regret.2} implies that $d^*_{\text{linear}}$ is also least-randomizing MMR in \eqref{eq:set.interpretability}. Moreover,  $d^*_{\text{linear}}$ is so far the only known rule in the literature  in set \eqref{eq:set.interpretability}  that is also globally MMR optimal.
 } In contrast, $d^*_\text{RT}$ always randomizes the policy recommendation, although for large $Y_1$ the fraction of population assigned to treatment will be large.\hfill $\square$
\begin{rem}\label{rem:lexicographic}
Consider a decision maker who wishes to pick an optimal rule according to MMR but is hesitant to implement randomized rules due to  additional inconvenience cost from implementation or concerns over \emph{ex-post} fairness (i.e., units in the same population may receive different treatments).  In this case, it is indeed possible to define the MMR problem among nonrandomized rules. However, if we do not place any restriction to the class of nonrandomized rules, game-theoretic purification arguments \citep{dww,purify} suggest existence of nonrandomized solutions that emulate arbitrarily well the risk profile of the randomized MMR optimal rule. Moreover, these solutions would be unattractive and unnatural; for example, they cannot be monotone due to Theorem \ref{thm:main.regret.1}(iii). In light of these observations, we think there is little value in pursuing solutions over nonrandomized rules in our setup, unless one focuses on a specific class of benign or reasonable nonrandomized rules, e.g. as in \cite{ishihara2021}. Furthermore, within the class of rules $\tilde{\mathcal{D}}_n$, our least randomizing MMR rule can be justified by a decision maker who possesses a lexicographic preference with a priority given to minimizing MMR criterion over minimizing the inconvenience cost of randomization (in our case, the cost is proxied by the Lebesgue measure of the realization of $(w^*)^\top Y$ taking a value in $(0,1)$). Such lexicographic ordering of statistical decisions has ample precedents in the literature, e.g., for statistical decision rules, we first remove dominated rules and then choose among undominated ones according to further optimality criteria \citep{Wald50,manski2021econometrics}; In the Neyman-Pearson paradigm of selecting hypothesis tests, one first controls size and then maximizes power; In choosing estimators, it is customary to focus on unbiased estimators, among which the one that minimizes variances is regarded as optimal. 
 \end{rem}

\section{Further Applications}\label{sec:examples}  

\subsection{Extrapolating Local Average Treatment Effects}\label{sec:iv.example}

We next apply our analysis to extrapolation of Local Average Treatment Effects \citep{mogstad2018using,mogstad2018identification}. Let $Z\in\{0,1\}$ be a binary instrument, $D\in\{0,1\}$ a binary treatment assignment, and $(Y(1),Y(0))$ potential outcomes under treatment and control.  As usual, the observed outcome is $Y=DY(1)+(1-D)Y(0)$. To simplify exposition, we assume that there are no covariates and that $Y(1),Y(0)\in\{0,1\}$. Following \citet{heckman1999local,heckman2005structural},\footnote{See, for example, Assumption I and Equation (2) in \cite{mogstad2018identification}. Also see \cite{imbens1994identification} for additional references.} let $p(z):=P\{D=1 \mid Z=z\}$ be the propensity score and write $D = \mathbf{1}\{ V \leq  p(Z)\}$, where $(V\mid Z=z) \sim \textrm{Unif}[0,1]$. The parameter space $\Theta$  contains all tuples $\theta:=(p(1),p(0),\text{MTE}(\cdot))$, where  $p(1)\in[0,1]$, $p(0)\in[0,1]$, $p(1)\geq p(0)$, and $\text{MTE}(\cdot)$ is the marginal treatment effect function
\[\text{MTE}(v):=\mathbb{E}[Y(1)-Y(0)\mid V=v].\]
The policy maker observes 
\begin{equation}\label{eq:IV.data}
\left(\begin{array}{c}
\hat{m}_1\\
\hat{m}_2
\end{array}\right)\sim N\left(\left(\begin{array}{c}
m_1(\theta)\\
m_2(\theta)
\end{array}\right),\Sigma\right),    
\end{equation}
where
\begin{eqnarray*}
m_1(\theta) &:=& \mathbb{E}[Y\mid Z=1]-\mathbb{E}[Y\mid Z=0]=\int_{p(0)}^{p(1)}\text{MTE}(v)dv \\
m_2(\theta) &:=& \mathbb{E}[D\mid Z=1]-\mathbb{E}[D\mid Z=0]=p(1)-p(0)
\end{eqnarray*}
are the population reduced-form and first-stage coefficients and $\Sigma$ is positive definite. We assume that the policy of interest would expand the complier subpopulation through an additive shift of size $\alpha>0$ in the propensity score. \cite{mogstad2018using} show that the payoff relevant parameter then is the ``policy-relevant treatment effect'' \citep{heckman2005structural} 
\[\text{PRTE}(\alpha)=\mathbb{E}[Y(1)-Y(0)\mid V\in(p(0),p(1)+\alpha]].  \]
Suppose the welfare contrast $U(\theta)$ equals $\text{PRTE}(\alpha)-\text{PRTE}(0)$, which can be written as 
\begin{equation}\label{eq:IV.welfare}
U(\theta)=\frac{m_1(\theta)}{\alpha+m_2(\theta)}-\frac{m_1(\theta)}{m_2(\theta)}+\frac{1}{\alpha+m_2(\theta)}\int_{p(1)}^{p(1)+\alpha}\textrm{MTE}(v)dv.
\end{equation}
Hence, the decision maker wants to find an optimal treatment policy given partial identification of parameter \eqref{eq:IV.welfare} in model \eqref{eq:IV.data}. In Online Appendix \ref{sec:iv.example.verification}, we verify that Theorem \ref{thm:admin} applies. Therefore, any decision rule is admissible in this example. For example, implementing a policy for large values of the IV estimator would be admissible, as would be the approach of \citet{christensen2022optimal}, who discuss the same application.

\subsection{Decision-theoretic Breakdown Analysis} 
Consider a policy maker who uses quasi-experimental data but is worried about confounding. More specifically, she assumes a constant treatment effect model and unconfoundedness given covariates $(X,W)$, motivating the linear regression model
\[
Y=\gamma_{0}+\beta_{\text{long}}D+\gamma_{1}^{\top}X+\gamma_{2}^{\top}W+e,
\]
where $Y$ is observed outcome, $D$ is the binary treatment, and $e$ is a projection residual. The infeasible optimal treatment policy is $\mathbf{1}\{\beta_{\text{long}}\geq0\}$. However, $W$ is unobserved, so that the policy maker can only estimate the ``medium'' regression
\[
Y=\pi_{0} +\beta_{\text{med}} D+\pi_{1}^{\top} X+u,
\]
where $u$ is a projection residual.\footnote{We express all regressions as projections for alignment with the literature and because only projection algebra is used. However, motivating $\mathbf{1}\{\beta_{\text{long}}\geq 0\}$ as optimal usually requires causal interpretation and therefore slightly stronger assumptions on $e$; in other words, readers may want to think of the long regression as causal and the medium one as best linear prediction. See \citet[][ Chapter 2]{Hansen}, whose notation we also borrow, for a lucid discussion.} In general, if there exists selection on unobservables (that is, $D$ is correlated with $W$), then $\beta_{\text{long}}$ is only partially identified. Specifically, \citet[][Theorem 4]{diegert2022assessing} show that the identified set of $\beta_{\text{long}}$ given $\beta_{\text{med}}$ is
\[
\beta_{\text{long}}\in[\beta_{\text{med}}-k,\beta_{\text{med}}+k],
\]
where 
\[
k:=\begin{cases}
\sqrt{\frac{\operatorname{var}\left(Y^{\bot D,X}\right)}{\operatorname{var}\left(D^{\bot X}\right)}\frac{\overline{r}_{D}^{2}R_{D\sim X}^{2}}{1-R_{D\sim X}^{2}-\overline{r}_{D}^{2}}},& \quad\text{ if } 0\leq\overline{r}_{D}< \sqrt{1-R^2_{D\sim X}},\\
\infty,&\quad \text{ if } \overline{r}_{D}\geq \sqrt{1-R^2_{D\sim X}},
\end{cases}
\]
and where $\operatorname{var}\left(Y^{\bot D,X}\right)$ is the variance of
the residual from projecting $Y$ onto $(1,D,X)$, $\operatorname{var}\left(D^{\bot X}\right)$
is the variance of the residual from projecting $D$ onto $(1,X)$,
$R_{D\sim X}^{2}$ is the $R^{2}$ from projecting $D$ onto $(1,X)$,
and $\overline{r}_{D}\geq0$ is a user-specified sensitivity parameter that measures the relative importance of selection on unobservables versus selection on observables.

\citet{diegert2022assessing} use this result to ask: How strong does omitted variables bias have to be to potentially overturn findings based on $\beta_{\text{med}}$? At population level, the answer is that this can happen if $\left\vert k \right\vert > \left\vert \beta_{\text{med}}\right\vert$, a condition that can be related to primitive parameters through the above display and for which \citet{diegert2022assessing} provide estimation and inference theory.

Suppose now that there is an estimator $\hat{\beta}_{\text{med}} \sim N(\beta_{\text{med}},\sigma^{2})$. The results of \citet{diegert2022assessing} imply an estimated breakdown point
$\tilde{k}(\hat{\beta}_\text{med}):=\hat{\beta}_{\text{med}}$ for positive $\hat{\beta}_{\text{med}}$. Our results apply upon letting $\theta =(\beta_{\text{long}},\beta_{\text{med}})^\top\in \mathbb{R}^2$, $U(\theta)=\beta_{\text{long}}$ and $m(\theta)=\beta_{\text{med}}$. In particular, when $k>\sqrt{\frac{\pi}{2}}\sigma$, there are infinitely many MMR optimal rules, with the least randomizing one among known ones being
\begin{equation}\label{eq:ovb.linear.rule}
d^{*}_{\text{linear}}(\hat{\beta}_{\text{med}}):=\begin{cases}
0, & \hat{\beta}_{\text{med}}<-\rho^{*},\\
\frac{\hat{\beta}_{\text{med}}+\rho^{*}}{2\rho^{*}}, & -\rho^{*}\leq \hat{\beta}_{\text{med}}\leq\rho^{*},\\
1, & \hat{\beta}_{\text{med}}>\rho^{*},
\end{cases}
\end{equation}
where $\rho^*>0$ uniquely solves $\rho^*=k(1-2\Phi(-\rho^*/\sigma))$.  When $k\leq\sqrt{\frac{\pi}{2}}\sigma$, we know from \cite{stoye2012minimax} that $d^*_0(\hat{\beta}_{\text{med}}) = \mathbf{1}\{\hat{\beta}_{\text{med}}\geq0\}$ is essentially uniquely MMR optimal.

\begin{figure}
    \centering
    \includegraphics[scale=0.6
]{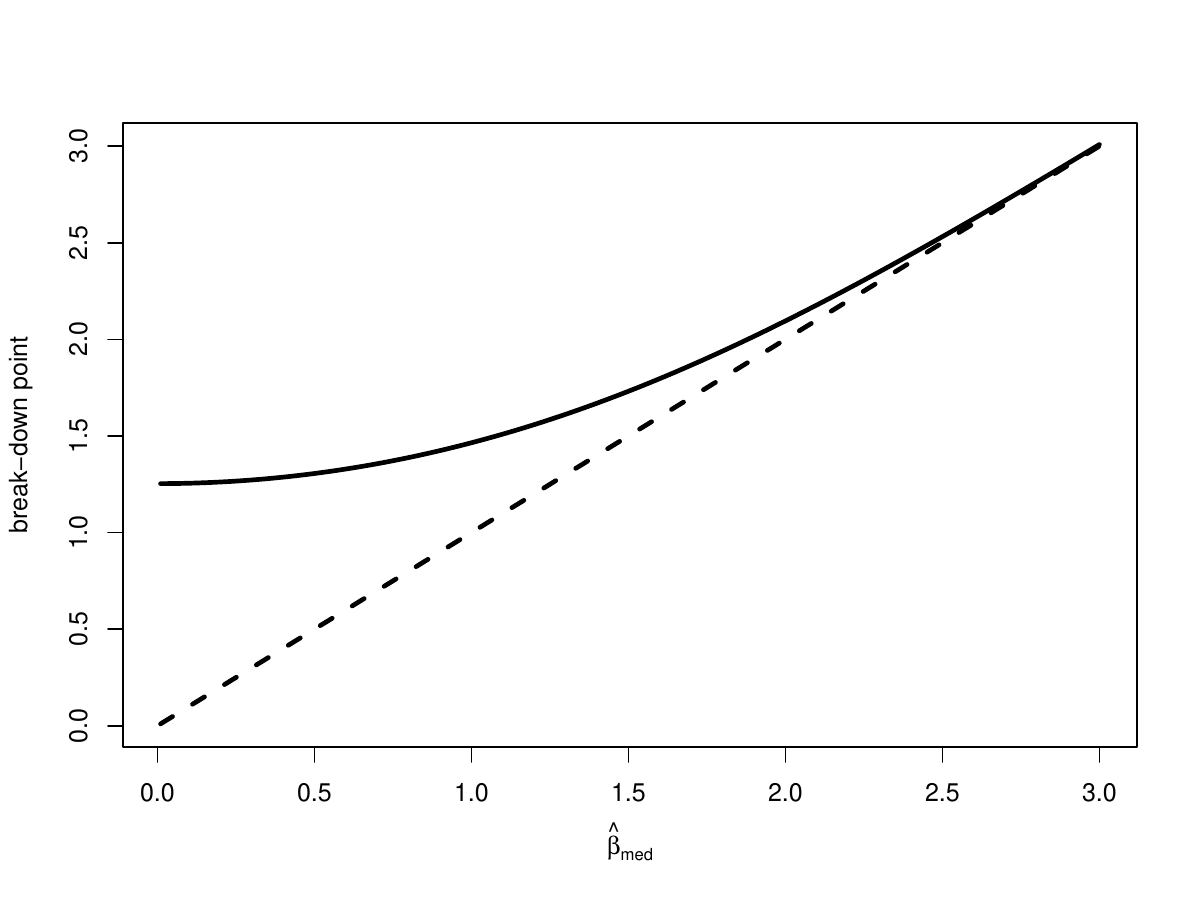}
    \caption{$\overline{k}$ (solid) and $\tilde{k}$ (dashed) as a function of $\hat{\beta}_{\text{med}}$}\label{figure:breakdown}
\end{figure}

These results motivate a complementary breakdown analysis guided by statistical decision theory. For given $\hat{\beta}_{\text{med}}>0$, we can ask: How large could $k$ have to be so that the MMR optimality criterion still supports assigning the new policy without any hedging?\footnote{Informal exploration of this question goes back at least to \citet[][see Table 3]{stoye2009partial}.} Due to its least randomizing property, $d_{\text{linear}}^*$ implies the tightest possible answer to this question. Specifically, MMR supports non-randomized policy assignment up to the ``decision theoretic breakdown point'' 
\begin{eqnarray*}
    \overline{k}(\hat{\beta}_\text{med}) &:= &\sup\{k>0: d^*(\hat{\beta}_\text{med})=1\} \\
    d^*(\hat{\beta}_\text{med})&:= &\begin{cases}
  \mathbf{1}\{\hat{\beta}_{\text{med}}\geq0\}&\text{ if } k\leq\sqrt{\frac{\pi}{2}}\sigma\\
  d^*_{\text{linear}}(\hat{\beta}_{\text{med}})&\text{ if } k>\sqrt{\frac{\pi}{2}}\sigma \end{cases}.
\end{eqnarray*}

In contrast, the implied breakdown point of $d^*_{\text{RT}}$ is a constant across all values of $\hat{\beta}_{\text{med}}$, as $d^*_{\text{RT}}$ always randomizes regardless of data realizations when 
$k>\sqrt{\frac{\pi}{2}}\sigma$. Figure \ref{figure:breakdown} displays both $\tilde{k}(\hat{\beta}_\text{med})$ and  $\bar{k}(\hat{\beta}_\text{med})$  when $\sigma=1$.  It turns out that the decision theoretic breakdown point tolerates more ambiguity; this difference is salient for smaller values of $\hat{\beta}_{\text{med}}$ and vanishes as $\hat{\beta}_{\text{med}}$ diverges.

\section{Conclusion}\label{sec:conclude}

In this paper, we used statistical decision theory to study treatment choice problems with partial identification. For a large and empirically relevant class of such problems, we show that every decision rule is admissible, that maximin welfare optimality criterion often select no-data decision rules, and that there are infinitely many minimax regret optimal rules, all of which randomize the policy action at least for some data realizations. These results stand in stark contrast with treatment choice problems with point-identified welfare. 

We also provide a decision rule that is \emph{least randomizing} in a large class of MMR optimal rules including all known ones. We show, in the context of our running example, that our least-randomizing rule can profiled-regret dominate other MMR rules.

We illustrate our results in three applications that arise in applied work: extrapolation of experimental estimates for policy adoption, policy-making with quasi-experimental data when omitted variable bias is a concern, and extrapolation of Local Average Treatment Effects.

\appendix

\section{Proofs of Main Results}\label{sec:app.main}

\subsection{Proof of Theorem \ref{thm:admin}}\label{sec:app.a.1}

It suffices to verify conditions A1-A2 in Theorem \ref{thm:general} for the class
of distributions $N(m(\theta),\Sigma),\theta\in\Theta$.

\textsc{A1}: Consider the class of distributions $N(m(\theta),\Sigma),\theta\in m^{-1}(\mathcal{S})$.
As $\theta$ ranges over $m^{-1}(\mathcal{S})$, $m(\theta)$ ranges
over $\mathcal{S}$. Thus, it suffices to show that 
\begin{equation}
Y\sim N(\mu,\Sigma),\quad\mu\in\mathcal{S},\label{eq:normal.location}
\end{equation}
is \emph{complete} \citep[][ Definition 6.2.21]{casella2002statistical} and therefore also bounded complete. Define the vector $\tilde{\mu} : = \Sigma^{-1} \mu$ ranging over the set \[ \tilde{\mathcal{S}} := \{ \tilde{\mu} \in \mathbb{R}^{n} \: | \: \tilde{\mu} = \Sigma^{-1} \mu, \quad \mu \in \mathcal{S}   \}.\] 
Note $\tilde{\mathcal{S}}$ is open under Definition \ref{asm:1}. Furthermore, the pdf $p_{\tilde{\mu}}$ of $Y$ given $\tilde{\mu}$ is of the form
\[ p_{\tilde{\mu}}(Y) =  h(Y)C(\tilde{\mu}) \exp \left[ \tilde{\mu}^{\top} Y  \right]. \] Thus, the family of
distributions for $Y$ is complete by \citet[][ Theorem 6.2.25]{casella2002statistical}.

\textsc{A2}: Consider any set $E$ such that $P(E)=0$ for all $P\sim N(m(\theta),\Sigma)$
as $\theta$ ranges over $m^{-1}(\mathcal{S})$. By \citet{skorohod2012integration}
(Theorem 2, p. 83), the Gaussian measures in $\mathbb{R}^{n}$ with
the same positive definite covariance matrix $\Sigma$ are equivalent (mutually absolute continuous with each other),
implying $P(E)=0$ for all $P\sim N(m(\theta),\Sigma)$ such that
$\theta\in\Theta\setminus m^{-1}(\mathcal{S})$ as well.

\subsection{Proof of Theorem \ref{thm:maximin}}\label{sec:app.a.2}

First, we can bound
\begin{multline*}
\sup_{d\in\mathcal{D}_n}\inf_{\theta\in\Theta}\mathbb{E}_{m(\theta)}\left[W(d(Y),\theta)\right]=\sup_{d\in\mathcal{D}_n}\inf_{\theta\in\Theta}\left[W(0,\theta)+U(\theta)\mathbb{E}_{m(\theta)}\left[d(Y)\right]\right]\\
\leq\sup_{d\in\mathcal{D}_n}\inf_{\theta\in\Theta:U(\theta)\leq0}\left[W(0,\theta)+U(\theta)\mathbb{E}_{m(\theta)}\left[d(Y)\right]\right] 
  \leq\inf_{\theta\in\Theta:U(\theta)\leq0}W(0,\theta),
\end{multline*}
using that $\mathbb{E}_{m(\theta)}\left[d(Y)\right]\geq0$. To see that this bound is tight, write
\begin{align*}
\inf_{\theta\in\Theta}\mathbb{E}_{m(\theta)}\left[W(d_{\text{no-data}},\theta)\right]=\inf_{\theta\in\Theta}\left[W(0,\theta)+U(\theta)\mathbb{E}_{m(\theta)}\left[d_{\text{no-data}}(Y)\right]\right] & =\inf_{\theta\in\Theta}W(0,\theta).
\end{align*}
and recall that $\inf_{\theta\in\Theta}W(0,\theta)=\inf_{\theta\in\Theta:U(\theta)\leq0}W(0,\theta)$ by assumption.

\subsection{Proof of Theorem \ref{thm:main.regret.1}}\label{sec:app.a.3}

\subsubsection{Proof of Part (i) of Theorem \ref{thm:main.regret.1}} 

Let $\textbf{R}$ denote the minimax value of the policy maker's decision problem:
\begin{equation}\label{eq:MMR.value}
\textbf{R}:= \inf_{d \in \mathcal{D}_n} \sup_{\theta \in \Theta} \left\{  U(\theta) \left(  \mathbf{1}\{U(\theta) \geq 0\} - \mathbb{E}_{m(\theta)}[d(Y)] \right) \right\}.
\end{equation}

\noindent {\scshape Step 1} (Minimax Regret Value): We first show that, if Assumption \ref{asm:yata.1} holds and there exists an MMR optimal rule that depends on the data only through $(w^*)^{\top} Y$ and satisfies Equations \eqref{eq:yata.property.1} and \eqref{eq:yata.property.2}, then
\begin{equation*} 
\textbf{R} = (1/2) \cdot \overline{k}_{w^*}(0),
\end{equation*}
where
\begin{equation*} 
\overline{k}_{w^*}(0) := \sup_{\theta \in \Theta} U(\theta) \quad \textrm{s.t }  \quad (w^*)^{\top} m(\theta) = 0. 
\end{equation*}

Lemma \ref{lem:yata.lower.bound} in Appendix \ref{sec:technical} shows that, under this step's premise, Equation \eqref{eq:yata.property.1} implies
\[(1/2) \cdot \overline{k}_{w^*}(0) \leq \textbf{R}.\]
Since $d^*$ is MMR optimal, Equations \eqref{eq:yata.property.1}-\eqref{eq:yata.property.2} and centrosymmetry of $\Theta$ imply
\[
\textbf{R} = \sup_{\theta\in\Theta,m(\theta)=\mathbf{0}}R(d^*,\theta)= \sup_{\theta \in \Theta, m(\theta)= \mathbf{0}} U(\theta) \left( \mathbf{1}\{ U(\theta) \geq 0 \} - \frac{1}{2} \right)=\frac{\overline{I}(\mathbf{0})}{2}.
\]
By definition, $\overline{I}(\mathbf{0})\leq\overline{k}_{w^*}(0)$. Thus, we have
\[ (1/2) \cdot \overline{k}_{w^*}(0) \leq \textbf{R} = (1/2) \cdot \overline{I}(\mathbf{0}) \leq  (1/2) \cdot \overline{k}_{w^*}(0).\] 

\noindent {\scshape Step 2} (Upper bound for the worst-case regret of decision rules that depend on the data only through $(w^*)^{\top} Y$): We obtain an upper bound for the worst-case regret of such rules by linearizing the parameter space. We introduce some notation to formalize this step.    

Let  $\Gamma_{w^*}:=\left\{ \gamma\in\mathbb{R}|(w^*)^{\top} m(\theta) =\gamma,\theta\in\Theta\right\}$ be the image of the transformation $\theta \mapsto (w^*)^{\top} m(\theta)$. We define the identified set for the welfare contrast $U(\theta)$ given $\gamma\in \Gamma_{w^*}$ as 
\begin{equation*}
ISU_{w^*}(\gamma) := \{ u \in \mathbb{R} \: | \: U(\theta)=u, (w^*)^{\top} m(\theta) = \gamma, \theta \in \Theta  \}.
\end{equation*}

Any decision rule that depends on the data only through $(w^*)^{\top} Y$ can be identified with a (measurable) function $d$ from $\mathbb{R}$ to $[0,1]$. For future reference, let $\mathcal{D}$ collect all such functions. The worst-case expected regret of any $d \in \mathcal{D}$ can be expressed as
\begin{align} 
& \sup_{\theta \in \Theta} \left( U(\theta)\left(\mathbf{1}\{U (\theta)\geq0\}-\mathbb{E}_{\gamma}[d((w^{*})^{\top}Y)]\right)\right)  \label{eq:worst.case.of.linear.d}  \\
=& \sup_{\gamma\in\Gamma_{w^{*}}}\left(\sup_{ \theta \in \Theta, \: (w^*)^{\top} m(\theta) = \gamma } U(\theta)\left(\mathbf{1}\{U (\theta)\geq0\}-\mathbb{E}_{\gamma}[d((w^{*})^{\top}Y)]\right)\right) \notag\\
= & \sup_{\gamma\in\Gamma_{w^{*}}}\left(\sup_{U^{*}\in ISU_{w^*}(\gamma) }U^{*}\left(\mathbf{1}\{U^{*}\geq0\}-\mathbb{E}_{\gamma}[d((w^{*})^{\top}Y)]\right)\right) \label{eq:R.isu.bound.1},
\end{align} 
where the expectation $\mathbb{E}_\gamma[\cdot]$ is taken over $(w^*)^\top Y \sim N(\gamma,(w^*)^\top \Sigma w^*)$. For  $\gamma\in \Gamma_{w^*}$, define
\begin{align}
\underline{k}_{w^*}(\gamma)  := & \inf ISU_{w^*}(\gamma) =  \inf \{U(\theta): (w^*)^{\top} m(\theta) = \gamma, \theta \in \Theta  \}  \notag \\
\overline{k}_{w^*}(\gamma) := & \sup ISU_{w^*}(\gamma) =  \sup \{U(\theta): (w^*)^{\top} m(\theta) = \gamma, \theta \in \Theta  \}  \label{eq:isu}.
\end{align}
By centrosymmetry of $\Theta$ and linearity of $U(\theta)$ and $m(\theta)$, we have that
\begin{align*}
    & \inf \{U(\theta): (w^*)^{\top} m(\theta) = \gamma, \theta \in \Theta  \} =- \sup \{U(\theta): (w^*)^{\top} m(\theta) = -\gamma, \theta \in \Theta  \} \\
    \implies & \underline{k}_{w^*}(\gamma)=-\overline{k}_{w^*}(-\gamma)
\end{align*}
and therefore that $ISU_{w^*}(\gamma) \subseteq \left [ - \overline{k}_{w^*}(-\gamma), \:   \overline{k}_{w^*}(\gamma)  \right],  \forall \gamma \in \Gamma_{w^*}$.
Equation \eqref{eq:R.isu.bound.1} then implies that \eqref{eq:worst.case.of.linear.d} is bounded above by
\begin{equation}\label{eq:R.isu.bound.2}
\sup_{\gamma\in\Gamma_{w^{*}}}\left(\sup_{U^{*}\in \left [ - \overline{k}_{w^*}(-\gamma), \:   \overline{k}_{w^*}(\gamma)  \right]}U^{*}\left(\mathbf{1}\{U^{*}\geq0\}-\mathbb{E}_{\gamma}[d((w^{*})^{\top}Y)]\right)\right).
\end{equation}

Lemma \ref{lem:ex.1.3} shows that, under Assumption \ref{asm:yata.1}, $\overline{k}_{w^*}(\gamma) $ is concave (and therefore $- \overline{k}_{w^*}(-\gamma)$ is convex). Furthermore, Lemma \ref{lem:ex.1.4} shows that, under Assumption \ref{asm:yata.1}, the superdifferential of the function $\overline{k}_{w^*}(\cdot)$ at $\gamma=0$, 
\begin{equation*}
 \partial \overline{k}_{w^*}(0):= \{s \in \mathbb{R} \: | \: \overline{k}_{w^*} (\gamma) \leq \overline{k}_{w^*}(0) + s  \gamma, \quad \forall \gamma \in \Gamma_{w^*} \},      
\end{equation*}
(see \citet[][p. 214-215]{rockafellar1997convex}) is nonempty, bounded, and closed. 

Let $s_{w^*}(0)$ be the largest element of $\partial \overline{k}_{w^*}(0)$ and suppose without loss of generality that $s_{w^*}(0) \geq 0$. Step 3 in the proof of Lemma \ref{lem:ex.1.4} established that $\Gamma_{w^*}$ is symmetric around $0$. The definition of superdifferential then gives   
\begin{equation*}\label{pf:linear.lower.bound}
s_{w^*}(0) \gamma -\overline{k}_{w^*}(0)  \leq  -\overline{k}_{w^*}(-\gamma)  \quad \forall \gamma \in \Gamma_{w^*}.
\end{equation*}
It follows that
\[  [-\overline{k}_{w^*} (-\gamma) \: , \:  \overline{k}_{w^*} (\gamma)] \subseteq [s_{w^*}(0) \gamma -\overline{k}_{w^*}(0) \: , \: s_{w^*}(0) \gamma + \overline{k}_{w^*}(0)   ]. \]
Substituting into \eqref{eq:R.isu.bound.2} then implies that \eqref{eq:worst.case.of.linear.d} is further bounded above by
\begin{equation*} 
  \sup_{\gamma\in\Gamma_{w^{*}}}\left(\sup_{U^{*}\in \left [s_{w^*}(0) \gamma -\overline{k}_{w^*}(0) \: , \: s_{w^*}(0) \gamma + \overline{k}_{w^*}(0)   \right ]}U^{*}\left(\mathbf{1}\{U^{*}\geq0\}-\mathbb{E}_{\gamma}[d((w^{*})^{\top}Y)]\right)\right).
\end{equation*}
We note that the choice set for $U^*$ in this expression is linear in $\gamma$.

{\scshape Step 3} (``Linear Embedding'' Minimax Regret Problem): The previous step and the fact that $\Gamma_{w^*} \subseteq \mathbb{R}$ imply that \eqref{eq:worst.case.of.linear.d} is bounded above by \begin{equation} \label{eq:linear.embedding.1}
  \inf_{d \in \mathcal{D}} \sup_{\gamma\in \mathbb{R}}\left(\sup_{U^{*}\in \left [s_{w^*}(0) \gamma -\overline{k}_{w^*}(0) \: , \: s_{w^*}(0) \gamma + \overline{k}_{w^*}(0)   \right ]}U^{*}\left(\mathbf{1}\{U^{*}\geq0\}-\mathbb{E}_{\gamma}[d\left( \widehat{\gamma} \right)]\right)\right),
\end{equation}
where 
\[ \widehat{\gamma} \sim N \left( \gamma \: , \: (w^*)^\top\Sigma w^* \right), \quad \gamma \in \mathbb{R}. \]

Lemma \ref{lem:linear.embedding} in Appendix \ref{sec:technical} shows that if $s_{w^*}(0)>0$ and $\overline{k}_{w^*}(0) > \sqrt{\frac{\pi}{2}} \sqrt{(w^*)^\top\Sigma w^*}\cdot s_{w^*}(0)$, then (\ref{eq:linear.embedding.1}) equals $\overline{k}_{w^*}(0)/2$ and there are infinitely many rules that give such value. In particular, a solution is given by any convex combination of the following rules: 
\begin{eqnarray} 
d^*_{\text{RT}}(\widehat{\gamma}) & := &\Phi\left(\widehat{\gamma}/\sqrt{\frac{2 \cdot \overline{k}_{w^*}(0)^2}{\pi \cdot  s_{w^*}(0)^2}-(w^*)^\top\Sigma w^*}\right) \label{eq:yata.stoye.appendix} \\
d^*_{\text{linear}}(\widehat{\gamma}) & := &\begin{cases}
0, & \widehat{\gamma}<-\rho^{*},\\
\frac{\widehat{\gamma}+\rho^{*}}{2\rho^{*}}, & -\rho^{*}\leq\widehat{\gamma}\leq\rho^{*},\\
1, & \widehat{\gamma}>\rho^{*},
\end{cases}   \label{eq:linear.appendix} 
\end{eqnarray}
where $\rho^* \in \left(0,\frac{ \overline{k}_{w^*}(0)}{s_{w^*}(0)} \right)$ is the unique solution to
\begin{eqnarray}
\left( \frac{s_{w^*}(0)}{2 \cdot \overline{k}_{w^*}(0)} \right) \rho^{*} -\frac{1}{2} + \Phi\left(-\frac{\rho^{*}}{\sqrt{(w^*)^{\top}\Sigma w^*}}\right) =0.
\label{eq:rho.star.main.proof} 
\end{eqnarray}

{\scshape Step 4} (Rules that solve the ``Linear Embedding'' minimax regret problem also solve the original problem). If $s_{w^*}(0)>0$, $\overline{k}_{w^*}(0) > \sqrt{\frac{\pi}{2}} \sqrt{(w^*)^\top\Sigma w^*}\cdot s_{w^*}(0)$, and $d^{\bigstar} \in D$ solves the linear embedding problem, then 
\begin{equation*}
d^{\bigstar} \circ w^* (Y) := d^{\bigstar}((w^*)^{\top} Y) \in \mathcal{D}_n 
\end{equation*}
is MMR optimal in the original decision problem (\ref{eq:MMR.value}). 

This is because Step 1 implies that
\[ (1/2) \overline{k}_{w^*}(0) = \textbf{R} \leq \sup_{\theta \in \Theta} \left\{  U(\theta) \left(  \mathbf{1}\{U(\theta) \geq 0\} - \mathbb{E}_{m(\theta)}[d^{\bigstar}((w^*)^{\top} Y)] \right) \right\} \]

and Step 2 implies that
\begin{eqnarray*}
&& \sup_{\theta \in \Theta} \left\{  U(\theta) \left(  \mathbf{1}\{U(\theta) \geq 0\} - \mathbb{E}_{m(\theta)}[d^{\bigstar}((w^*)^{\top} Y)] \right) \right\} \\
&\leq &  \sup_{\gamma\in \mathbb{R}}\left(\sup_{U^{*}\in \left [s_{w^*}(0) \gamma -\overline{k}_{w^*}(0) \: , \: s_{w^*}(0) \gamma + \overline{k}_{w^*}(0)   \right ]}U^{*}\left(\mathbf{1}\{U^{*}\geq0\}-\mathbb{E}_{\gamma}[d^{\bigstar}\left( \widehat{\gamma} \right)]\right)\right) \\
&= & \frac{\overline{k}_{w^*}(0)}{2},
\end{eqnarray*}
where the last equality follows from Step 3. Consequently, 
\[ \sup_{\theta \in \Theta} \left\{  U(\theta) \left(  \mathbf{1}\{U(\theta) \geq 0\} - \mathbb{E}_{m(\theta)}[d^{\bigstar}((w^*)^{\top} Y)] \right) \right\} =  \frac{\overline{k}_{w^*}(0)}{2}.\]

{\scshape Step 5}: Finally, we show that the assumptions of Theorem \ref{thm:main.regret.1} imply 
\[ s_{w^*}(0)>0 \] 
and
\[ \overline{k}_{w^*}(0) > \sqrt{\pi/2} \cdot \sqrt{(w^*)^\top\Sigma w^*}\cdot s_{w^*}(0).\]

First, we show that $s_{w^*}(0)>0$. The definitions of $\overline{I}(\cdot)$ and $\overline{k}_{w^*}(\cdot)$ imply 
\[  \overline{I}(\mu) \leq \overline{k}_{w^*} ((w^*)^{\top} \mu), \text{ for all }\mu\in M. \]
As $s_{w^*}(0)$ is a supergradient of $\overline{k}_{w^*}(0)$,
\[
\overline{k}_{w^*}((w^*)^{\top} \mu) \leq \overline{k}_{w^*}(0) + s_{w^*}(0) ((w^*)^{\top} \mu), \text{ for all }\mu\in M. 
\]
Step 1 showed that $\overline{I}(\mathbf{0})=\overline{k}_{w^*}(0)$. Hence, combining the above equations yields
\begin{equation} \label{pf:lem.main.regret.1}
\overline{I}(\mu) \leq \overline{I}(\mathbf{0}) + s_{w^*}(0) (w^*)^{\top} \mu, \text{ for all }\mu\in M.
\end{equation}
If $s_{w^*}(0)=0$, Equation (\ref{pf:lem.main.regret.1}) then implies $\overline{I}(\mu) \leq \overline{I}(\mathbf{0})$ for all $\mu\in M$, contradicting the assumption that there exists $\mu \in M$ such that $\overline{I}(\mu) > \overline{I}(\mathbf{0})$.

Second, since Step 1 showed that $\overline{I}(\mathbf{0})=\overline{k}_{w^*}(0)$, 
then 
\[ \overline{k}_{w^*}(0) > \sqrt{\pi/2} \cdot \sqrt{(w^*)^\top\Sigma w^*}\cdot s_{w^*}(0) \] 
holds when $\overline{I}(\mathbf{0})$ is large enough, in particular whenever
\[\overline{I}(\mathbf{0}) > \sqrt{\pi/2} \cdot \sqrt{(w^*)^\top\Sigma w^*}\cdot s_{w^*}(0).\]

\noindent {\scshape Conclusion:} Steps 1-5 imply there are infinitely many rules that solve the problem \eqref{eq:MMR.value}.

\subsubsection{Proof of Part (ii) of Theorem \ref{thm:main.regret.1}}
Consider any decision rule $d_{\text{m}}(\cdot)$ that depends on the data only as nondecreasing function of $w^\top Y$ (for some $w \neq \mathbf{0}$ that is not necessarily $w^*$) and such that $d_{\text{m}}(\cdot)\in\{0,1\}$ for all data realizations. Then we must have $d_{\text{m}}(w^\top Y)=\mathbf{1}\left\{ w^\top Y \geq c\right\}$ for some $-\infty\leq c\leq\infty$. The worst-case expected regret of such a rule satisfies
\begin{eqnarray*}
 &  & \sup_{\theta\in\Theta}\:U(\theta)\left(\mathbf{1}\{U(\theta)\geq0\}-\mathbb{E}_{m(\theta)}[d_{m}(Y)]\right) \notag\\
 & \geq & \sup_{\theta\in\Theta:m(\theta)=\mathbf{0}}\:U(\theta)\left(\mathbf{1}\{U(\theta)\geq0\}-\mathbb{E}_{m(\theta)}[d_{m}(Y)]\right) \nonumber \\
 & = & \max\left\{-\underline{I}(\mathbf{0}) \mathbb{E}_{0}[d_{m}(Y)],\overline{I}(\mathbf{0}) (1-\mathbb{E}_{0}[d_{m}(Y)])\right\} \notag \\
&\geq & \overline{I}(\mathbf{0})/2,\end{eqnarray*}
where we used that $\underline{I}(\mathbf{0})=-\overline{I}(\mathbf{0})$ by centrosymmetry, and the last inequality is strict unless $c=0$. As $\overline{I}(\mathbf{0})/2$ is the MMR value of the problem, $d_{\text{m}}^*(\cdot)$ cannot be MMR optimal if $c \neq 0$. For $w=w^*$, substantial additional algebra that we relegate to Lemma \ref{lem:erm} extends the result to $c=0$ (by showing that a first-order condition cannot hold at $(w^*)^\top Y=0$).

\subsubsection{Proof of Part (iii) of Theorem \ref{thm:main.regret.1}}

The preceding argument established the claim for rules of form $\mathbf{1}\{w^{\top}Y\geq c\}$, where $c \neq 0$. It remains to consider symmetric threshold rules $\mathbf{1}\{w^{\top}Y\geq0\}$. To this end, bound the worst-case
expected regret of such rules as follows: 
\begin{align*}
 & \sup_{\theta\in\Theta}U(\theta)\left(\mathbf{1}\{U(\theta)\geq0\}-\mathbb{E}_{m(\theta)}[\mathbf{1}\{w^{\top}Y\geq0\}]\right)\\
\geq & \sup_{\theta\in\Theta,m(\theta)=\mu,U(\theta)\geq0,\mu\in M}U(\theta)\left(\mathbf{1}\{U(\theta)\geq0\}-\mathbb{E}_{\mu}[\mathbf{1}\{w^{\top}Y\geq0\}]\right)\\
= & \sup_{\mu\in M,\overline{I}(\mu)>0}\overline{I}(\mu)\left(1-\mathbb{E}_{\mu}[\mathbf{1}\{w^{\top}Y\geq0\}]\right)\\
= & \sup_{\mu\in M,\overline{I}(\mu)>0}\overline{I}(\mu)\Phi\left(-\frac{w^{\top}\mu}{\sqrt{w^{\top}\Sigma w}}\right) \\
:= &  \sup_{\mu\in M,\overline{I}(\mu)>0}\mathrm{g}_{w}(\mu).
\end{align*}
Note that $\mathrm{g}_{w}(\mathbf{0})=\overline{I}(\mathbf{0})/2$ is the MMR value of the problem, implying that $\mu=\mathbf{0}$ attains this value under any symmetric threshold rule. For such a rule to be MMR optimal, $\mu=\mathbf{0}$ must then be a local constrained maximum point of $\mathrm{g}_{w}(\mu)$. Because $M$ contains an open set including $\mathbf{0}$ by Definition \ref{asm:1} and Assumption \ref{asm:yata.1} and since we assumed differentiability of $\overline{I}(\mu)$ at $\mathbf{0}$, this requires a first-order condition
\begin{align}
\frac{\partial\mathrm{g}_{w}(\mu)}{\partial\mu_{j}} & \mid_{\mu=\mathbf{0}}~=~\frac{1}{2}\frac{\partial\overline{I}(\mathbf{0})}{\partial\mu_{j}}-\frac{w_{j}}{\sqrt{w^{\top}\Sigma w}}\overline{I}(\mathbf{0})\phi\left(0\right) ~=~ 0, ~j=1\ldots n. \label{eq:FOC_T3}
\end{align}
To simplify expressions, change co-ordinates (if necessary) so that $w^{*}=(1,0,\ldots,0)^\top$. Because $w^*$ must fulfil \eqref{eq:FOC_T3}, we have that $\frac{\partial\overline{I}(\mathbf{0})}{\partial\mu_{j}}=0$ for $j=2,\ldots, n.$ But this, in turn, means that \eqref{eq:FOC_T3} requires $w_2=\ldots=w_n=0$. Next, noting that $w$ in a symmetric threshold rule is determined only up to scale, we restrict attention to $w_1 \in \{-1,0,1\}$. But if $w^*=(1,0,\ldots,0)$ solves \eqref{eq:FOC_T3} for $j=1$, then $-w^*$ cannot solve it because the sign change does not affect the denominator dividing $w_1$. Finally, $w=\mathbf{0}$, i.e. never adopting treatment, is excluded by part (ii) (it is the same as setting $c=\infty$ there) and is also easily seen directly to not be MMR optimal.

\subsection{Proof of Theorem \ref{thm:main.regret.2}}\label{sec:app.a.4}

{\scshape Step 1}: If  $F\circ w^* \in\tilde{\mathcal{D}}_n$  is MMR optimal, then $V(d^*_{\text{linear}}\circ w^*)\subseteq V(F\circ w^*)$. 

To see this, pick any $F\circ w^* \in\tilde{\mathcal{D}}_n$ that is MMR optimal. Then we can write
\[F\circ w^*(Y)=F((w^*)^{\top }Y)=F(\widehat{\gamma}),\] 
where $F\in \mathcal{F}$ is a symmetric and unimodal c.d.f (thus weakly increasing as well),   $\widehat{\gamma}:=(w^*)^{\top}Y\sim N(\gamma, \sigma^2)$, with $\gamma \in \Gamma_{w^*}$ defined in Step 2 of the proof for Theorem \ref{thm:main.regret.1}(i),  and $\sigma^2=(w^*)^\top\Sigma w^*$. The worst-case expected regret of rule $F\circ w^*$ is
\begin{align*}
\sup_{\theta\in\Theta}U(\theta)\left(\mathbf{1}\{U(\theta)\geq0\}-\mathbb{E}_{m(\theta)}[F((w^{*})^{\top}Y)]\right) 
=\sup_{\gamma\in\Gamma_{w^{*}}:\overline{k}_{w^{*}}(\gamma)>0}\overline{k}_{w^{*}}(\gamma)\left(1-\mathbb{E}_{\gamma}[F(\widehat{\gamma})]\right),
\end{align*}
where $\overline{k}_w^*$ is defined in \eqref{eq:isu}. Letting $g_{F}(\gamma):=\overline{k}_{w^{*}}(\gamma)\left(1-\mathbb{E}_{\gamma}[F(\widehat{\gamma})]\right)$.
Since $\widehat{\gamma}\sim N(\gamma,\sigma^{2})$, we may further calculate (using integration by parts)
\begin{align*}
\mathbb{E}_{\gamma}[F(\widehat{\gamma})] & =\int F(s)d\Phi(\frac{s-\gamma}{\sigma})\\
& =\Phi\left(\frac{s-\gamma}{\sigma}\right)F(s)|_{-\infty}^{\infty}-\int\Phi\left(\frac{s-\gamma}{\sigma}\right)dF(s)\\
& =1-\int\Phi\left(\frac{s-\gamma}{\sigma}\right)dF(s).
\end{align*}
Therefore, $g_{F}(\gamma)=\overline{k}_{w^{*}}(\gamma)\int\Phi(\frac{s-\gamma}{\sigma})dF(s)$.
Note that $F(-x)=1-F(x)$ for all $x\in \mathbb{R}$; hence, $\int\Phi(\frac{s}{\sigma})dF(s)=\frac{1}{2}$ and therefore $g_{F}(0)=\frac{\overline{k}_{w^{*}}(0)}{2}$. By Step 1 for the proof of Theorem \ref{thm:main.regret.1}(i),  $\frac{\overline{k}_{w^{*}}(0)}{2}$ is the MMR value of the problem. MMR optimality of $F\circ w^*$ implies 
\begin{equation*}
0\in\arg\sup_{\gamma\in\Gamma_{w^{*}},\overline{k}_{w^{*}}(\gamma)>0}g_{F}(\gamma).  
\end{equation*}

By Lemma \ref{lem:ex.1.4}, $0$ is an interior point of $\{\gamma \in \mathbb{R}:\overline{k}_{w^{*}}(\gamma)>0,\gamma \in\Gamma_{w^{*}}\}$. Thus, let $\partial g_{F}(0)$
denote the generalized gradient of $g_{F}(\cdot)$ at $0$.\footnote{\label{foot:general.gradient}The generalized gradient of $g:\mathbb{R}\rightarrow \mathbb{R}$ at $x$ equals $\partial g(x) := \{ \xi \in \mathbb{R}: \limsup_{y \rightarrow x,t\downarrow0}\frac{f(y+tv)-f(y)}{t} \geq \xi v,\forall v\in\mathbb{R}\}$. See \citet[][p. 27]{clarke1990optimization}.} Then $0\in \partial g_{F}(0)$ is necessary for optimality. To show that it fails, compute the generalized gradient as\footnote{We can verify, following the same steps in the proof for Lemma \ref{lem:erm}, that conditions of Proposition 2.3.13 in \cite{clarke1990optimization} are satisfied, so that the chain rule can be applied.}
\begin{align*}
& \tilde{s}_{w^*}(0)\int\Phi(\frac{s}{\sigma})dF(s)-\frac{\overline{k}_{w^*}(0)}{\sigma}\int\phi\left(\frac{s}{\sigma}\right)dF(s)\\
=& \frac{\tilde{s}_{w^*}(0)}{2}-\frac{\overline{k}_{w^*}(0)}{\sigma}\int\phi\left(\frac{s}{\sigma}\right)dF(s),
\end{align*}
where $\tilde{s}_{w^*}(0)$ is a supergradient of $ \overline{k}_{w^*}(\gamma)$ at $\gamma=0$.
Therefore, we conclude 
\[
\frac{\tilde{s}_{w^*}(0)}{2}-\frac{\overline{k}_{w^*}(0)}{\sigma}\int\phi\left(\frac{s}{\sigma}\right)dF(s)=0
\Longleftrightarrow
\int\phi\left(\frac{s}{\sigma}\right)dF(s)=\frac{\tilde{s}_{w^*}(0)\sigma}{2\overline{k}_{w^*}(0)}
\]
for some $\tilde{s}_{w^*}(0)>0$.

Next, $d_{\text{linear}}^{*}\in\mathcal{F}$ can be verified to solve the linear embedding problem (\ref{eq:linear.embedding.1}) (Lemma \ref{lem:linear.embedding}). In particular, evaluating $g_{\text{linear}}^{(1)}(\gamma)$ at $\gamma=0$, where $g_{\text{linear}}(\gamma)$ is defined in Lemma \ref{lem:linear.embedding.upper.bound} with $k=\frac{\overline{k}_{w^*}(0)}{s_{w^*}(0)}$ and $\sigma^2=(w^*)^{\top}\Sigma w^*$, one finds
\begin{equation}\label{eq:foc.linear.rule}
\int\phi\left(\frac{s}{\sigma}\right)d(d_{\text{linear}}^{*}(s))=\int_{-\rho^{*}}^{\rho^{*}}\phi\left(\frac{s}{\sigma}\right)\frac{1}{2\rho^{*}}ds=\frac{s_{w^*}(0)\sigma}{2\overline{k}_{w^*}(0)},
\end{equation}
where $s_{w^*}(0)>0$ is the largest  supergradient of $ \overline{k}_{w^*}(\gamma)$ at $\gamma=0$. 

Since $f$ is symmetric around $0$, $V(F\circ w^*)$ is as well. Write $V(F\circ w^*):=(-a_{F},a_{F})$ for some $a_{F}$ and $V(d_{\text{linear}}^{*}\circ w^*):=(-\rho^{*},\rho^{*})$. Suppose by contradiction that $V(d_{\text{linear}}^{*}\circ w^*)\nsubseteq V(F\circ w^*)$.
Then $a_{F}<\rho^{*}$, but that would imply
\begin{equation*}
\int\phi(\frac{s}{\sigma})dF(s)=\int_{-a_{F}}^{a_{F}}\phi(\frac{s}{\sigma})dF(s)>\frac{s_{w^*}(0)\sigma}{2\overline{k}_{w^*}(0)}\geq\frac{\tilde{s}_{w^*}(0)\sigma}{2\overline{k}_{w^*}(0)},\label{eq:pf.2}
\end{equation*}
where the first inequality follows by the assumption that $dF(x)=0$ for all $x\notin(-a_{F},a_{F})$
and $F(x)$ is symmetric and unimodal, and the second inequality follows as $s_{w^*}(0)$ is the largest supergradient of $\overline{k}_{w^*}(0)$. Thus, $0\notin \partial g_{F}(0)$, a contradiction.

{\scshape Step 2}: Next, $V(d_{\text{linear}}^{*}\circ w^*)=V(F\circ w^*)$ if and only if $F=d_{\text{linear}}^{*}$. The ``if'' direction is obvious. To see ``only if,'' suppose by contradiction that there exists some $\tilde{F}\in\mathcal{F}$ not equal to $d^*_{\text{linear}}$ but such that  $V(d_{\text{linear}}^{*}\circ w^*)=V(F\circ w^*)$ and $\tilde{F}\circ w^*$ is MMR optimal. Then
\begin{equation*}\int\phi\left(\frac{s}{\sigma}\right)d\tilde{F}(s)=\int_{-\rho^{*}}^{\rho^{*}}\phi\left(\frac{s}{\sigma}\right)d\tilde{F}(s)>\int_{-\rho^{*}}^{\rho^{*}}\phi\left(\frac{s}{\sigma}\right)\frac{1}{2\rho^{*}}ds=\frac{s_{w^*}(0)\sigma}{2\overline{k}_{w^*}(0)}\geq \frac{\tilde{s}_{w^*}(0)\sigma}{2\overline{k}_{w^*}(0)},
\end{equation*}
where the first step uses $V(d_{\text{linear}}^{*}\circ w^*)=V(F\circ w^*)$, the second one that $\tilde{F}$ is symmetric and unimodal, the third one uses \eqref{eq:foc.linear.rule}, and the last one that $s_{w^*}(0)$ is the largest supergradient of $\overline{k}_{w^*}(0)$. Thus, $\tilde{F}\circ w^*$ cannot be MMR optimal, a contradiction.

\subsection{Proof of Theorem \ref{thm:general}: General Theorem on Admissibility.}\label{sec:general}
In this section, we prove a general theorem regarding the admissibility
of treatment choice rules in partially identified models. Let
$\theta\in\Theta$, $U(\cdotp)$ and $m(\cdotp)$ be defined as in
Section \ref{sec:framework}. We observe a random vector $Y\in\mathcal{Y}\subseteq\mathbb{R}^{n}$
that follows a distribution $P_{m(\theta)}$, a member of the class
of distributions 
\begin{equation}
\mathcal{P}:=\left\{ P_{m(\theta)}:\theta\in\Theta\right\} \label{eq:general.distribution}
\end{equation}
defined over the common sample space $\mathcal{Y}$ (endowed with a $\sigma$-algebra $\mathcal{A}$).
\begin{thm}
\label{thm:general} Consider a treatment choice problem with payoff
function \eqref{eq:welfare} and a class of statistical models \eqref{eq:general.distribution}
that exhibit nontrivial partial identification in the sense of Definition
\ref{asm:1}. In addition, suppose the following two conditions hold
true:
\begin{itemize}
    \item[A1] The family of distributions $\mathcal{P}_{\mathcal{S}}:=\left\{ P_{m(\theta)},\theta\in m^{-1}(\mathcal{S})\right\} $
is bounded complete, i.e., for all bounded functions $f:\mathbb{R}^{n}\rightarrow\mathbb{R}$,
$\mathbb{E}_{m(\theta)}[f(Y)]=0$ for all $\theta\in m^{-1}(\mathcal{S})$
implies $f(y)=0$, a.e. $\mathcal{P}_{\mathcal{S}}$.\footnote{A statement is said to hold a.e. $\mathcal{P}$ if it holds except
on a set $N$ with $P(N)=0$ for all $P\in\mathcal{P}$.} 

\item[A2] For any set $E\in\mathcal{A}$, $P(E)=0$ for all $P\in\mathcal{\mathcal{P}_{\mathcal{S}}}$ implies $P(E)=0$ for all $P\in\mathcal{P}\setminus\mathcal{P}_{\mathcal{S}}$.
\end{itemize}
Then, every decision rule $d\in\mathcal{D}_{n}$ is (welfare-)admissible. 
\end{thm}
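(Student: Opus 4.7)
The plan is to adapt the proof sketch of Theorem \ref{thm:admin} by replacing Gaussian completeness with bounded completeness (condition A1) and mutual absolute continuity of Gaussian measures with condition A2. I would argue by contradiction: suppose some rule $d\in\mathcal{D}_n$ is inadmissible, so that some $d'\in\mathcal{D}_n$ weakly dominates $d$ at every $\theta\in\Theta$ and strictly at some $\theta_\star\in\Theta$. The goal is to show that $d'=d$ $P$-almost surely for every $P\in\mathcal{P}$, which contradicts strict dominance at $\theta_\star$.

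First, I would use linearity of welfare in the action together with \eqref{eq:welfare} and \eqref{eq:oracle} to rewrite the dominance condition as
\[
U(\theta)\bigl(\mathbb{E}_{m(\theta)}[d'(Y)]-\mathbb{E}_{m(\theta)}[d(Y)]\bigr)\geq 0\quad\text{for all }\theta\in\Theta,
\]
with strict inequality at $\theta_\star$. The crucial step then leverages Definition \ref{asm:1}: for every $\mu\in\mathcal{S}$, the identified set straddles zero, so one can pick $\theta_+,\theta_-\in m^{-1}(\mu)$ with $U(\theta_+)>0$ and $U(\theta_-)<0$. Because $\mathbb{E}_{m(\theta)}[d(Y)]$ depends on $\theta$ only through $m(\theta)$, applying the dominance inequality at $\theta_+$ and $\theta_-$ pins down
\[
\mathbb{E}_{P}[d'(Y)-d(Y)]=0\quad\text{for every }P\in\mathcal{P}_{\mathcal{S}}.
\]

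Since $d'-d$ is measurable and bounded (taking values in $[-1,1]$), bounded completeness A1 then yields a measurable set $N$ with $d'=d$ off $N$ and $P(N)=0$ for every $P\in\mathcal{P}_{\mathcal{S}}$. Condition A2 propagates the null-set property to the whole family $\mathcal{P}$, so that for every $\theta\in\Theta$ we have $\mathbb{E}_{m(\theta)}[d'(Y)]=\mathbb{E}_{m(\theta)}[d(Y)]$ and hence $\mathbb{E}_{m(\theta)}[W(d'(Y),\theta)]=\mathbb{E}_{m(\theta)}[W(d(Y),\theta)]$ at every $\theta$, including $\theta_\star$ — the desired contradiction.

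The only subtle point, which motivates stating A1 and A2 separately, is the propagation of the ``almost everywhere'' statement: bounded completeness alone gives $d'=d$ only a.e.\ with respect to $\mathcal{P}_{\mathcal{S}}$, and without A2 the two rules could differ on a set that is null for $\mathcal{P}_{\mathcal{S}}$ while carrying positive mass under some $P_{m(\theta)}$ with $\theta\notin m^{-1}(\mathcal{S})$, so strict dominance could still hold at such a $\theta$ without contradiction. Assumption A2 closes exactly this loophole, and in the Gaussian setting of Theorem \ref{thm:admin} it is automatically supplied by the mutual absolute continuity of normal distributions sharing a positive-definite covariance.
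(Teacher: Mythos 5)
Your proof is correct and follows essentially the same route as the paper's: the paper's Steps 1--2 (forcing equality of $\mathbb{E}_{m(\theta)}[d(Y)]$ and $\mathbb{E}_{m(\theta)}[d'(Y)]$ on $m^{-1}(\mathcal{S})$ by exploiting that each $\mu\in\mathcal{S}$ admits both a positive and a negative welfare contrast) are exactly your $\theta_+/\theta_-$ argument, and Step 3 is your use of A1 followed by A2. Your closing remark on why A2 cannot be dropped matches the role it plays in the paper's argument.
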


\begin{proof}
Suppose by contradiction that some rule $d$ is dominated. Then there
exists $d^{\prime}$ such that 
\begin{equation}
U(\theta)\mathbb{E}_{m(\theta)}\left[d^{\prime}(Y)\right]\geq U(\theta)\mathbb{E}_{m(\theta)}\left[d(Y)\right]\label{eq:thm1.1}
\end{equation}
for all $\theta\in\Theta$, with a strict inequality for some $\theta$. 

\textsc{Step 1:} We first show that \eqref{eq:thm1.1} must hold with
equality for any $\theta\in m^{-1}(\mathcal{S})$. Suppose not, then
there exists $\theta^{*}\in m^{-1}(\mathcal{S})$ such that 
\begin{equation*}
U(\theta^{*})\mathbb{E}_{m(\theta^{*})}\left[d^{\prime}(Y)\right]>U(\theta^{*})\mathbb{E}_{m(\theta^{*})}\left[d(Y)\right],\label{eq:thm1.2}
\end{equation*}
which implies that $U(\theta^{*})\neq0$. Without loss of generality,
assume $U(\theta^{*})>0$. Then 
\begin{equation}
\mathbb{E}_{m(\theta^{*})}\left[d^{\prime}(Y)\right]>\mathbb{E}_{m(\theta^{*})}\left[d(Y)\right].\label{eq:thm1.4}
\end{equation}
Define $\mu^{*}:=m(\theta^{*})$. Since $\mu^{*}\in\mathcal{S}$,
there must exist some $\theta_{\mu^{*}}\in\Theta$, such that $m(\theta_{\mu^{*}})=\mu^{*}=m(\theta^{*})$
and $U(\theta_{\mu^{*}})<0$. Therefore, (\ref{eq:thm1.4}) implies
\begin{equation*}
U(\theta_{\mu^{*}})\mathbb{E}_{m(\theta_{\mu^{*}})}\left[d^{\prime}(Y)\right]<U(\theta_{\mu^{*}})\mathbb{E}_{m(\theta_{\mu^{*}})}\left[d(Y)\right],\label{eq:thm1.3}
\end{equation*}
contradicting \eqref{eq:thm1.1}. We conclude that 
\[
U(\theta)\mathbb{E}_{m(\theta)}\left[d^{\prime}(Y)\right]=U(\theta)\mathbb{E}_{m(\theta)}\left[d(Y)\right]
\]
for any $\theta\in m^{-1}(\mathcal{S})$.

\textsc{Step 2:} We next show that for any $\mu\in\mathcal{S}$, $\mathbb{E}_{\mu}\left[d^{\prime}(Y)-d(Y)\right]=0.$
Because of nontrivial partial identification, for any $\mu\in\mathcal{S}$
there exists $\theta_{\mu}$ such that $m(\theta_{\mu})=\mu$ and $U(\theta_{\mu})\neq0$. Step
1 then implies that for any $\mu\in\mathcal{S}$, 
\[
U(\theta_{\mu})\mathbb{E}_{\mu}\left[d^{\prime}(Y)\right]=U(\theta_{\mu})\mathbb{E}_{\mu}\left[d(Y)\right].
\]
Since $U(\theta_{\mu})\neq0$, the desired result follows.

\textsc{Step 3:} The conclusion from step 2 implies that $\mathbb{E}_{m(\theta)}\left[d^{\prime}(Y)-d(Y)\right]=0$
for any $\theta\in m^{-1}(\mathcal{S})$. Consider now the family
of distributions 
\[
Y\sim\mathcal{P}_{\mathcal{S}}=\left\{ P_{m(\theta)},\theta\in m^{-1}(\mathcal{S})\right\} ,
\]
which is bounded complete under A1.  Let $E:=\left\{ y\in\mathcal{Y}:d^{\prime}(y)-d(y)\neq0\right\} $.
The definition of bounded completeness implies that $P(E)=0$ for
all $P\in\mathcal{P}_{\mathcal{S}}$. It follows by A2 that $P(E)=0$
for all $P\in\mathcal{P}\setminus\mathcal{P}_{\mathcal{S}}$. Therefore,
$P(E)=0$ for all $P\in\mathcal{P}$.

{\scshape Conclusion}: We find that $\mathbb{E}_{m(\theta)}[d^{\prime}(Y)]=\mathbb{E}_{m(\theta)}[d(Y)]$
for all $\theta\in\Theta$. This means that
$d^{\prime}$ cannot dominate $d$. A contradiction. 
\end{proof} 

\section{Profiled Risk}\label{sec:profile.regret}

Given the  findings in Section \ref{sec:challenges}, in this section, we show that profiling out some parameters of expected welfare or regret may yield interesting insights. We illustrate this idea by exploring a \emph{profiled expected regret} criterion. Beyond allowing to better visualize risk profiles of decision rules, there are several additional values the profiling approach can provide in our problem. First, as we have shown in Figure \ref{figure:NewRule}, two MMR optimal rules may display drastically different regret profiles. In fact, as we show in Proposition \ref{prop:linear.vs.RT}, one dominates the other in terms of profiled regret for many parameter values.  Therefore, the notion of profiled risk may help decision makers better distinguish two rules that cannot otherwise be  ranked according to the MMR criterion alone. Second, as we will show below, the profiled approach may render some rules, e.g., the no-data rule $d_{\text{coin-flip}}$, inadmissible (in a redefined sense, and see Proposition \ref{prop:profiled.regret.dominance} for a general result).\footnote{Although $d_{\text{coin-flip}}$ may also be ruled out by the MMR criterion, being inadmissible is a stronger statement than not being MMR optimal. For example, it would be unclear whether $d_{\text{coin-flip}}$ can be optimal for a decision maker with a robust Bayes criterion for a class of priors advocated by \cite{GiacominiKitagawa}. As a Bayes rule with respect to our profiled risk may be viewed as a robust Bayes criterion  under some conditions, $d_{\text{coin-flip}}$ being inadmissible under the profiled risk criterion implies that it also cannot be Robust Bayes optimal for a large class of priors in \cite{GiacominiKitagawa}'s framework (see Remark \ref{rem:robust.bayes} below for related discussions).} While we could profile out any known function $h(m(\theta))$, we simplify exposition by restricting attention to linear indices.\footnote{Alternatively, one may take $h(\cdot)$ as an identity function to profile out with respect to the entire reduced form parameter $m(\theta)$.   } Thus, for a vector $w \in \mathbb{R}^{n} \backslash \{ \mathbf{0} \}$, let  \[ \Gamma_{w}:=\left\{ \gamma\in\mathbb{R}:w^{\top} m(\theta) =\gamma, \theta\in\Theta\right\}\] 
be the image of the transformation $\theta \mapsto w^{\top} m(\theta)$.  Then the worst-case expected regret of a rule $d$ can be expressed as
\begin{equation*} \label{eq:regret.outer.inner}
\sup_{\theta \in \Theta} R(d,\theta)=\sup_{\gamma\in\Gamma_{w{}}}\left(\sup_{ \theta \in \Theta : w^{\top} m(\theta) = \gamma } R(d,\theta)\right).
\end{equation*}
That is, we split \eqref{eq:expected.regret} into an inner optimization problem with fixed level set $\{w^\top m(\theta) = \gamma\}$ and an outer optimization over $\gamma$. The value function of the inner problem may be of interest, thus define:

\begin{defn}[$w$-Profiled Regret]\label{def:profile.regret}
The $w$-profiled regret function $\overline{R}_{w}:\mathcal{D}_n \times \Gamma_{w} \rightarrow \mathbb{R}$ is given by
\begin{equation}\label{eq:profile.regret}
\overline{R}_{w}(d,\gamma):=\sup_{ \theta \in \Theta: w^{\top} m(\theta) = \gamma } U(\theta)\left(\mathbf{1}\{U (\theta)\geq0\}-\mathbb{E}_{m(\theta)}[d(Y)]\right), 
\end{equation}
where $Y \sim N(m(\theta),\Sigma)$.
\end{defn}

In problems where MMR optimal rules depend on the data only through some linear combination $(w^*)^{\top} Y$, it seems reasonable to set $w$ equal to $w^*$. However, there could be other vectors $w$ of interest. For example, when extrapolating local average treatment effects in Section \ref{sec:iv.example}, Theorem \ref{thm:main.regret.1} does not apply and there are multiple reasonable alternatives for parameters to be profiled out. One may use 
\[ w =  (1, -\beta_0)^{\top} / \sqrt{(1, -\beta_0)\Sigma (1,-\beta_0)^{\top}} \]
for some $\beta_0 \in \mathbb{R}$. For motivation, note that the square of
\[ (m_1(\theta) -\beta_0 m_2(\theta) ) / \sqrt{(1, -\beta_0)\Sigma (1,-\beta_0)^{\top}}    \]
can be viewed as the population \cite{anderson49} statistic for the null hypothesis of the local average treat effect equal to $\beta_0$. Thus, the profiled regret function reports the worst-case regret as one keeps constant the population analog of that statistic.  One may also profile out the intention-to-treat effect $m_1(\theta)$, the
local average treatment effect $m_1(\theta)/m_2(\theta)$, or even the entire reduced-form parameter vector $(m_1(\theta),m_2(\theta))^\top$. In this paper, we do not argue for one specific criterion for profiling out, but believe researchers should have the freedom to choose the device judiciously depending on the specific contexts of their problems. 

\begin{rem}[Not all decision rules are admissible with respect to $w$-profiled regret] One could say that a decision rule $d$ is \emph{$w$-profiled regret admissible} if there is no other rule $d'$ for which 
\[ \overline{R}_{w}(d',\gamma) \leq \overline{R}_{w}(d,\gamma)  \]
for every $\gamma \in \Gamma_{w}$, with strict inequality for some $\gamma \in \Gamma_{w}$. Even if Theorem \ref{thm:admin} applies, a decision rule can fail to be $w$-profiled regret admissible for some $w$. Indeed, the no-data rule $d_{\text{coin-flip}}(Y)=1/2$ is $w^*$-profiled regret dominated by any MMR optimal rule in our running example.\footnote{Simple algebra shows that that the profiled regret of $d_{\text{coin-flip}}$ is \emph{minimized} at $\gamma=0$, where it coincides with the \emph{maximized} regret of any MMR optimal rule. This dominance can also be verified if $C \left\Vert x_1 - x_0 \right\Vert \leq \sqrt{\pi / 2} \cdot \sigma_1$. See Figure \ref{figure:NewRule}.} Moreover, the following proposition shows that, in the running example, $d^*_{\text{linear}}$ indeed $w^*$-profiled regret dominates $d^*_{\text{RT}}$ for a large range of parameter values, including those used in Figure \ref{figure:NewRule}. Let $\phi(\cdot)$ denote the p.d.f. of a standard normal random variable, and let $0 < \rho^* \leq C \Vert x_1-x_0 \Vert$ be the threshold after which $d^*_{\text{linear}}$ implements the policy with probability one. Let $\gamma$ be the effect of the policy of interest in the nearest neighbor to Country 0 (which, by assumption, we have set to be Country 1). Let $\overline{R}_{w
^*}(d,\gamma)$ be the profiled-regret associated to the nearest-neighbor weights; that is, the largest value of expected regret fo decision rule $d$ for a fixed value of $\gamma$.  

\begin{prop}\label{prop:linear.vs.RT}
Suppose $C\left\Vert x_{1}-x_{0}\right\Vert >\sqrt{\frac{\pi}{2}}\sigma_{1}$.

\begin{itemize}
\item[(i)] If $\phi\left(\frac{\rho^{*}}{\sigma_{1}}\right)\left(\frac{C\left\Vert x_{1}-x_{0}\right\Vert }{\sqrt{\frac{\pi}{2}} \sigma_1}\right)^{3}\leq\phi(0)$,
then $\overline{R}_{w^{*}}(d_{\text{linear}}^{*},\gamma)\leq\overline{R}_{w^{*}}(d_{\text{RT}}^{*},\gamma)$
for all $\gamma$ with the inequality strict for $\gamma\neq0$. 
\item[(ii)] If $\phi\left(\frac{\rho^{*}}{\sigma_{1}}\right)\left(\frac{C\left\Vert x_{1}-x_{0}\right\Vert }{\sqrt{\frac{\pi}{2}} \sigma_1}\right)^{3}> \phi(0)$,
then $\overline{R}_{w^{*}}(d_{\text{linear}}^{*},\gamma)<\overline{R}_{w^{*}}(d_{\text{RT}}^{*},\gamma)$
for all $\left|\gamma\right|>\underline{\gamma}$, where $\underline{\gamma}>0$
is unique in $(0,\infty)$ such that $\Phi\left(-\frac{\underline{\gamma}}{\frac{C\left\Vert x_{1}-x_{0}\right\Vert}{\sqrt{\frac{\pi}{2}}}}\right)=\int_{0}^{1}\Phi\left(\frac{2\rho^{*}x-\rho^{*}-\underline{\gamma}}{\sigma}\right)dx$.
\end{itemize}
\end{prop}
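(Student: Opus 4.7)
The plan is to reduce the comparison of the two profiled regrets to a one-dimensional comparison of expected treatment probabilities, and then exploit the closed-form expressions for $d^{*}_{\text{RT}}$ and $d^{*}_{\text{linear}}$ available in the running example. Both rules are symmetric around zero, so $\overline{R}_{w^{*}}(d,-\gamma)=\overline{R}_{w^{*}}(d,\gamma)$ and it suffices to consider $\gamma\geq 0$. In the running example the identified set for $\mu_{0}$ given $\mu_{1}=\gamma$ equals $[\gamma-k,\gamma+k]$ with $k:=C\left\Vert x_{1}-x_{0}\right\Vert$, so splitting the inner supremum in Definition~\ref{def:profile.regret} over $\mu_{0}\geq 0$ and $\mu_{0}\leq 0$ yields
\[
\overline{R}_{w^{*}}(d,\gamma)=\max\bigl\{(k+\gamma)(1-\mathbb{E}_{\gamma}[d(Y_{1})]),\ (k-\gamma)_{+}\,\mathbb{E}_{\gamma}[d(Y_{1})]\bigr\}.
\]
I would first argue that the first argument of the $\max$ is the binding one for $\gamma\geq 0$ and for both rules, i.e.\ $\mathbb{E}_{\gamma}[d(Y_{1})]\leq \tfrac{1}{2}+\tfrac{\gamma}{2k}$. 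Equality at $\gamma=0$ is symmetry, and the matched derivative $\tfrac{d}{d\gamma}\mathbb{E}_{0}[d(Y_{1})]=1/(2k)$ follows from the fixed-point equation for $\rho^{*}$ in the case of $d^{*}_{\text{linear}}$ and from $\tilde{\sigma}^{2}=2k^{2}/\pi-\sigma_{1}^{2}$ in the case of $d^{*}_{\text{RT}}$. Nonpositivity of the second derivative on $(0,\infty)$ is then immediate: $\Phi(\gamma/\tau)$ with $\tau=k\sqrt{2/\pi}$ is concave on $[0,\infty)$, and for $d^{*}_{\text{linear}}$ the second derivative is proportional to $\phi((\rho^{*}+\gamma)/\sigma_{1})-\phi((\rho^{*}-\gamma)/\sigma_{1})\leq 0$ for $\gamma\geq 0$. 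The claim then collapses to $\mathbb{E}_{\gamma}[d^{*}_{\text{linear}}(Y_{1})]\geq \mathbb{E}_{\gamma}[d^{*}_{\text{RT}}(Y_{1})]$ for $\gamma>0$.

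Write $H_{L}(\gamma):=\mathbb{E}_{\gamma}[d^{*}_{\text{linear}}(Y_{1})]$, $H_{\text{RT}}(\gamma):=\Phi(\gamma/\tau)$, and $f:=H_{L}-H_{\text{RT}}$. Both $H$'s are CDFs of symmetric unimodal laws (a uniform--Gaussian convolution and $N(0,\tau^{2})$), so $f$ is odd, giving $f(0)=f''(0)=0$; the density matching $g_{L}(0)=g_{\text{RT}}(0)=1/(2k)$ established above gives $f'(0)=0$ as well. A direct computation of the next nontrivial derivative yields
\[
f'''(0)=\frac{\pi}{4k^{3}}-\frac{\phi(\rho^{*}/\sigma_{1})}{\sigma_{1}^{3}},
\]
and algebraic simplification (using $\phi(0)=1/\sqrt{2\pi}$) shows that this is nonnegative precisely under the condition of part~(i). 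Thus $f(\gamma)>0$ in a right-neighborhood of $0$ under~(i).

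The main obstacle is upgrading this local sign to global nonnegativity of $f$ on $[0,\infty)$. The plan is to show that $f'=g_{L}-g_{\text{RT}}$ has exactly one sign change on $(0,\infty)$, going from positive to negative. The ingredients are: (a) $\int_{0}^{\infty}(g_{L}-g_{\text{RT}})\,d\gamma=0$, since both are symmetric probability densities, forcing at least one sign change; (b) local positivity near $0$ from the third-derivative analysis above; and (c) the tail-ratio limit $g_{L}(\gamma)/g_{\text{RT}}(\gamma)\to 0$ as $\gamma\to\infty$, a Mills-ratio computation that uses the standing assumption $C\left\Vert x_{1}-x_{0}\right\Vert>\sqrt{\pi/2}\,\sigma_{1}$, equivalent to $\tau>\sigma_{1}$, so that the uniform--Gaussian convolution has a thinner Gaussian tail than $N(0,\tau^{2})$. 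Ruling out additional crossings is the binding step; since $g_{L}-g_{\text{RT}}$ arises as the convolution with $N(0,\sigma_{1}^{2})$ of the signed measure $\mathrm{Unif}[-\rho^{*},\rho^{*}]-N(0,\tilde{\sigma}^{2})$, the plan is to first count sign changes of this inner signed measure on $(0,\infty)$ and then invoke the variation-diminishing property of Gaussian convolution to transfer the count to $g_{L}-g_{\text{RT}}$. Integrating then delivers $f(\gamma)\geq 0$ on $[0,\infty)$, strict for $\gamma>0$, which completes part~(i).

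For part~(ii), condition~(i) is reversed so $f'''(0)<0$ and $f<0$ in a right-neighborhood of $0$; the tail argument of the previous paragraph is unchanged and yields $f>0$ eventually. Integration by parts verifies $\int_{0}^{1}\Phi((2\rho^{*}x-\rho^{*}-\gamma)/\sigma_{1})\,dx=1-H_{L}(\gamma)$, so the defining equation for $\underline{\gamma}$ in the statement is precisely $H_{L}(\underline{\gamma})=H_{\text{RT}}(\underline{\gamma})$. Uniqueness of $\underline{\gamma}\in(0,\infty)$ and the strict dominance for $\left|\gamma\right|>\underline{\gamma}$ then reduce to showing that under condition~(ii), $g_{L}-g_{\text{RT}}$ has exactly two sign changes on $(0,\infty)$ (of pattern $-+-$); the argument parallels the previous paragraph and the crossing-count remains the main technical burden.
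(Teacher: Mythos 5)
Your overall architecture matches the paper's: reduce the profiled regret to $(k+\gamma)(1-\mathbb{E}_{\gamma}[d(Y_1)])$ for $\gamma\geq 0$ via the concavity-plus-matched-slope argument, compare the two adoption probabilities, pin down the local sign at $0$ through the third derivative (your $f'''(0)$ computation and its equivalence to the stated condition are correct, as is the identification of the $\underline{\gamma}$ equation with $H_L(\underline{\gamma})=H_{\text{RT}}(\underline{\gamma})$), establish the tail ordering from $C\Vert x_1-x_0\Vert>\sqrt{\pi/2}\,\sigma_1$, and close the argument with a variation-diminishing crossing count. The one genuine difference is where you apply the variation-diminishing property: the paper applies \citet[Theorem 3]{karlin1957polya} directly to the difference of the decision rules, $d^*_{\text{linear}}(x)-d^*_{\text{RT}}(x)$, whose sign changes are bounded by three by elementary shape considerations (piecewise-linear versus strictly convex-then-concave, both equal to $1/2$ at $0$), so that $f=\pi_{\text{linear}}-\pi_{\text{RT}}$ itself has at most three sign changes with $0$ forced to be one of them; you instead apply it one derivative down, to $\mathrm{Unif}[-\rho^*,\rho^*]-N(0,\tilde{\sigma}^2)$, and then integrate back up.

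Two cautions about your variant. First, the sign-change count of the density difference is not uniformly what you assert: on $(0,\infty)$ it is one change (pattern $+,-$) if $1/(2\rho^*)\geq\phi_{\tilde{\sigma}}(0)$ and two changes (pattern $-,+,-$) otherwise, and this dichotomy is governed by $1/(2\rho^*)$ versus $1/(\sqrt{2\pi}\,\tilde{\sigma})$, which does \emph{not} coincide with the proposition's condition (i) versus (ii) (that condition involves $\phi(\rho^*/\sigma_1)$, not $\tilde{\sigma}$). So your plan of proving ``exactly one'' sign change of $f'$ under (i) and ``exactly two'' under (ii) is mis-calibrated and would fail as stated. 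The fix is to use only the uniform bound of at most two sign changes of $f'$ on $(0,\infty)$, which via Rolle-type counting gives at most one interior zero of the (real-analytic, odd) function $f$ on $(0,\infty)$; combined with the sign of $f$ near $0^+$ (from $f'''(0)$) and near $+\infty$ (from the tail), this already delivers both parts. Second, the crossing count is left as a plan rather than executed; the paper's CDF-level count is the cleaner way to discharge it, since it avoids the case split entirely and bounds the sign changes of $f$ directly rather than passing through $f'$.
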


\begin{proof}
See Online Appendix \ref{sec:proof.prop.linear.vs.RT}.
\end{proof}

Part i) of Proposition \ref{prop:linear.vs.RT} gives conditions on the Lipschitz constant ($C$), the variance of the estimated effect in Country 1 ($\sigma^2_1$), and the difference between the covariates between Countries 1 and 0 $(\|x_1-x_0\|)$ under which $d^*_{\textrm{linear}}$ $w^*$-profiled regret dominates $d^*_{\textrm{RT}}$. Part ii) of Proposition \ref{prop:linear.vs.RT} shows that for those parameter configurations for which we cannot establish $w^*$-profiled regret dominance (which we later show can only happen when $C\left\Vert x_{1}-x_{0}\right\Vert\downarrow\sigma_1$), the decision rule $d^*_{\text{linear}}$ still has smaller $w^*$-profiled regrets than $d^*_{\text{RT}}$ for all $\left|\gamma\right|$ large enough. Therefore, our new result shows there is a strong sense that $d^*_{\text{linear}}$ performs better than $d^*_{\text{RT}}$ under the $w^*$-profiled regret criterion, despite the fact that both of them are equivalent under the standard minimax regret criterion. Finally, we also argue that Proposition  \ref{prop:linear.vs.RT} will continue to hold (with suitably redefined parameter values) in a model as general as those studied in Theorem \ref{thm:main.regret.1}, as long as the upper and lower bounds of the identified set of $U(\theta)$ given $\gamma \in \Gamma_{w^*}=\left\{ \gamma\in\mathbb{R}|(w^*)^{\top} m(\theta) =\gamma,\theta\in\Theta\right\}$ are affine  with the same positive gradients. Under some conditions, a sufficient condition for this to happen is for $\gamma$ to be unbounded.

Below, we also present a general result on the profiled-regret inadmissibility of  $d_{\text{coin-flip}}$ in a large class of general models when we profile out with respect to the reduced-form parameter $m(\theta)$.\footnote{We thank a referee for suggesting this result.} Just as before, let $\overline{I}(\mu)$ and $\underline{I}(\mu)$ be the upper and lower bounds of the identified set for the welfare contrast given the reduced-form parameter $\mu$. 

\begin{prop}\label{prop:profiled.regret.dominance}
In the setup of Section \ref{sec:minimax.regret.result}, let the profiled regret (with respect to
the reduced-form parameter $\mu\in M$) of a rule $d$ be
\[
\overline{R}_{r}(d,\mu):=\sup_{\theta\in\Theta,m(\theta)=\mu}R(d,\theta).
\]
Suppose all conditions of Theorem \ref{thm:main.regret.1} hold true, and $\overline{I}(\mathbf{0})\leq\max\left\{ \overline{I}(\mu),-\underline{I}(\mu)\right\} $
for all $\mu\in M$ with the inequality strict for some $\mu\in M$.
Then, $d_{\text{coin-flip}}$ is dominated in terms of
$\overline{R}_{r}(d,\mu)$. 
\end{prop}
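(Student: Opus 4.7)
The plan is to show that any MMR optimal rule supplied by Theorem \ref{thm:main.regret.1} profiled-regret dominates $d_{\text{coin-flip}}$, and the whole argument reduces to three short steps that only assemble objects already constructed in the paper.

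First, I would compute the profiled regret of $d_{\text{coin-flip}}$ in closed form. Because $\mathbb{E}_{m(\theta)}[d_{\text{coin-flip}}(Y)]=1/2$, definition \eqref{eq:expected.regret} gives $R(d_{\text{coin-flip}},\theta)=U(\theta)(\mathbf{1}\{U(\theta)\geq 0\}-1/2)=|U(\theta)|/2$ at every $\theta$. Taking the supremum over the level set $\{\theta\in\Theta:m(\theta)=\mu\}$ and invoking \eqref{eq:I_bound} then yields
\[
\overline{R}_{r}(d_{\text{coin-flip}},\mu)=\tfrac{1}{2}\max\{\overline{I}(\mu),\,-\underline{I}(\mu)\}.
\]
The only technicality here is to note that $\sup_{u\in I(\mu)}|u|=\max\{\overline{I}(\mu),-\underline{I}(\mu)\}$, which follows from a one-line case analysis on the signs of the two bounds.

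Second, I would take as the candidate dominating rule the MMR optimal rule $d^{*}_{\text{linear}}$ constructed in Theorem \ref{thm:main.regret.1}. Step 1 of the proof of that theorem (in Appendix \ref{sec:app.a.3}) establishes that the MMR value of the problem equals $\overline{I}(\mathbf{0})/2$. Profiling over a subset $\{\theta:m(\theta)=\mu\}$ can only lower the supremum, so
\[
\overline{R}_{r}(d^{*}_{\text{linear}},\mu)\leq \sup_{\theta\in\Theta}R(d^{*}_{\text{linear}},\theta)=\tfrac{1}{2}\overline{I}(\mathbf{0})
\]
for every $\mu\in M$, independent of $\mu$.

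Third, I would close the argument by combining the two displays with the hypothesis $\overline{I}(\mathbf{0})\leq\max\{\overline{I}(\mu),-\underline{I}(\mu)\}$. This gives
\[
\overline{R}_{r}(d^{*}_{\text{linear}},\mu)\leq\tfrac{1}{2}\overline{I}(\mathbf{0})\leq\tfrac{1}{2}\max\{\overline{I}(\mu),-\underline{I}(\mu)\}=\overline{R}_{r}(d_{\text{coin-flip}},\mu)
\]
for all $\mu\in M$, and the assumed strict inequality at some $\mu\in M$ transfers directly into a strict inequality in profiled regret at that same $\mu$, establishing the claimed dominance. I do not anticipate any genuine obstacle: the argument is pure bookkeeping once one invokes Theorem \ref{thm:main.regret.1}'s MMR value; the only spot that warrants care is the sign case analysis used to express $\overline{R}_{r}(d_{\text{coin-flip}},\mu)$ in terms of the identified set bounds.
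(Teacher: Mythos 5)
Your proposal is correct and follows essentially the same route as the paper: both compute $\overline{R}_{r}(d_{\text{coin-flip}},\mu)=\tfrac{1}{2}\max\{\overline{I}(\mu),-\underline{I}(\mu)\}$, bound the profiled regret of an MMR optimal rule by the MMR value $\overline{I}(\mathbf{0})/2$ (the paper routes this through $\overline{R}_{r}(d^{*},\mathbf{0})$ using \eqref{eq:yata.property.1}--\eqref{eq:yata.property.2}, which is the same content as your ``profiling can only lower the supremum'' step), and chain the inequalities using the hypothesis. No gaps.
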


\begin{proof}
See Online Appendix \ref{sec:proof.profiled.regret.dominance}.
\end{proof}

One important remark about Proposition \ref{prop:profiled.regret.dominance} is that, while it is possible to rule out $d_{\text{coin-flip}}$ using the minimax-regret criterion, such a rule could still be optimal under some other criteria such as Robust Bayes (for some class of priors). An implication of Proposition \ref{prop:profiled.regret.dominance} is that $d_{\text{coin-flip}}$ could never be Robust Bayes optimal for a class of priors where i) the prior over the reduced-form parameter is fixed, and ii) the priors of the partially identified parameters are arbitrary.   \end{rem}

\begin{rem}[Bayes rules with respect to $w$-profiled regret are $\Pi^*$-minimax under some conditions]\label{rem:robust.bayes} Consider any decision rule that minimizes 
\begin{equation}\label{eqn:Bayes-regret}
\inf_{d \in \mathcal{D}_n} \int_{\Gamma_{w}} \overline{R}_{w}(d,\gamma) d \pi^*(\gamma),     
\end{equation}
the weighted average of $w$-profiled regret for some prior $\pi^*$ over $\Gamma_{w}$. If $\pi^*$ has full support and profiled regret is continuous in $\gamma$ (for any $d$), then any solution to \eqref{eqn:Bayes-regret} is $w$-profiled-regret admissible; see \citet[][Theorem 3, Section 2, p. 62]{Ferguson67} and \citet[][Theorem 9, Section 4, p. 254]{berger1985statistical}.   

We next provide sufficient conditions under which decision rules that solve \eqref{eqn:Bayes-regret} can be interpreted as $\Pi^*$-minimax decision rules in the sense of \citet[][Definition 13, p. 216]{berger1985statistical}, where $\Pi^*$ is a class of priors over $\Theta$.\footnote{To avoid notational conflict, we write $\Pi^*$-minimax instead of the more common $\Gamma$-minimax.} Let $\mathcal{D}_{w}$ denote the class of decision rules that depend on the data only through $w^{\top} Y \sim N(w^{\top}m(\theta), w^{\top}\Sigma w)$. Consider the $\Pi^*$-minimax problem
\begin{equation}\label{eqn:Gamma-minimax}
\inf_{d \in \mathcal{D}_w} \left( \sup_{\pi \in \Pi^*}\int_{\Theta} R(d,\theta) d \pi(\theta) \right).      
\end{equation}
Let $\Pi^*$ collect all priors over $\Theta$ for which $w^{\top} m(\theta) \sim \pi^*$, where $\pi^*$ is the prior over $\Gamma_{w}$ used in \eqref{eqn:Bayes-regret}. This class of priors has recently been advocated by \citet{GiacominiKitagawa}. Their Theorem 2 establishes that, for any $d \in \mathcal{D}_{w}$, 
\begin{equation*} \label{eq:worst-expected-regret}
\sup_{\pi \in \Pi^*}\int_{\Theta} R(d,\theta) d \pi(\theta) = \int_{\Gamma_{w}} \overline{R}_{w}(d,\gamma) d \pi^*(\gamma). 
\end{equation*}
Thus, the problem in \eqref{eqn:Gamma-minimax} is equivalent to minimizing average $w$-profiled regret over decision rules that depend on the data only through $w^{\top} Y$. 
\end{rem}

Second, computation of profiled regret can frequently be simplified. Algebra shows that $\overline{R}_{w}(d,\gamma)$ equals the maximum between 

\begin{equation*}\label{eq:auxiliary.profiled.regret.1}
\overline{k}^{+}_{w} (\gamma) : = \sup_{\theta \in \Theta} \: U(\theta) (1-\mathbb{E}_{m(\theta)} [d(Y)] ) \quad \textrm{ s.t. } w^{\top} m(\theta) = \gamma, \quad U(\theta) \geq 0,  
\end{equation*}
and
\begin{equation*}\label{auxiliary.profiled.regret.2}
\overline{k}^{-}_{w} (\gamma) := \sup_{\theta \in \Theta} \: -U(\theta)  \mathbb{E}_{m(\theta)} [d(Y)] \quad \quad \textrm{ s.t. } w^{\top} m(\theta) = \gamma, \quad U(\theta) \leq 0.   
\end{equation*}
In principle, these are the value functions of two infinite-dimensional, nonlinear optimization problems. However, they can be recast as finite dimensional. For example, $\overline{k}^{+}_{w} (\gamma)$ equals
\begin{equation}\label{eq:auxiliary.profiled.regret.3}
\overline{I}^{+}_{w}(\gamma) : = \sup_{\mu \in M} \: \overline{I}(\mu)(1-\mathbb{E}_\mu[d(Y)])) \quad \textrm{ s.t. } w^{\top} \mu = \gamma, \quad \overline{I}(\mu) \geq 0. 
\end{equation}

This problem has a scalar choice variable, one linear equality constraint, and one potentially nonlinear inequality constraint. The bottleneck is evaluation of $\overline{I}(\mu)$. The running example admits a closed-form solution for $\overline{I}(\mu)$, so that evaluating \eqref{eq:auxiliary.profiled.regret.3} is easy. 
More generally, the computational cost of evaluating \eqref{eq:auxiliary.profiled.regret.3} can be reduced by imposing more structure on the parameter space $\Theta$. For instance, when $\Theta$ is convex, the set $M$ is convex as well; the optimization problem is then over a convex subset of $\mathbb{R}^{n}$. Moreover, if $m(\cdot)$ is linear and $U(\cdot)$ is concave, the function $\overline{I}(\mu)$ can be shown to be concave. This means that under Assumption \ref{asm:yata.1}, the optimization problem in \eqref{eq:auxiliary.profiled.regret.3} is convex.

\newpage

\section{Online Appendix }\label{sec:technical}

\subsection{Aversion to Randomization in the Loss Function}\label{sec:aversion.randomization}
In this section, we present some thoughts on explicitly modeling aversion to random treatment assignment.\footnote{We thank the referees for raising this question.} To do so, we distinguish between two interpretations of randomization: (1) as \emph{fractional assignment} of a new policy or treatment, wherein a fraction $a \in [0,1]$ of the population of interest gets treated (see \cite{manski2007admissible}); (2) as randomization over non-fractional assignments, i.e. the decision maker treats either everybody or nobody but randomizes this choice using probability $a \in [0,1]$. As a result, $d^*_{\text{linear}}$ and $d^*_{\text{RT}}$ can be interpreted as i) \emph{non-randomized fractional rules}, where the action space is the unit interval (i.e., $a\in [0,1]$), or as ii) randomized rules when there are only two actions $(a \in \{0,1\})$. Since the loss function is usually defined over actions, in case i) one can use it to capture what we will term \emph{aversion to fractional assignment}. We will first present a simple (and practical) framework to do so and then argue that there are a number of important caveats to such an analysis.

\subsubsection*{Incorporating Aversion to Fractional Assignment}

Consider then a treatment choice problem where an action $a\in[0,1]$ is interpreted as a \emph{non-randomized fractional assignment}. We would like to posit a loss function that captures \emph{aversion to fractional assignment} which could arise, for example, due to concerns for ex-post equity, or simply inconvenience cost. To do so, we first introduce a cost function $c(a):[0,1]\rightarrow\mathbb{R}^{+}$ that penalizes fractional actions $a\in(0,1)$. More precisely, we assume that $c(a)=0$ for
$a\in\left\{ 0,1\right\}$, but that $c(a) \geq 0$ for all $0<a<1$, with at least some strict inequality (we provide examples below). 

Assuming the policy maker incorporates the cost linearly in his/her welfare calculations, we define the \emph{net-of-cost welfare} of action $a\in[0,1]$ as
\[
W^{c}(a,\theta):=aW(1,\theta)+(1-a)W(0,\theta)-c(a).
\]
Since cost is only positive when $a\in (0,1)$, the oracle action for the policy maker based on the modified welfare is still to expose everyone to the policy ($a=1$) if the sign of the welfare contrast is nonnegative, and to preserve the status quo ($a=0$) otherwise; that is, $\mathbf{1}\left\{ U(\theta)\geq0\right\}$,
where the welfare contrast is defined as $U(\theta)=W(1,\theta)-W(0,\theta)$.  

Just as before, we also define a regret loss based on the net-of-cost welfare as follows: 
\begin{align*}
L^{c}(a,\theta) & := \max_{a \in [0,1]} W^{c}(a,\theta) -W^{c}(a,\theta)\\
& = \max\{W(1,\theta),W(0,\theta)\} - W^{c}(a,\theta)\\
& =L(a,\theta)+c(a),
\end{align*}
where $L(a,\theta)=U(\theta)(\mathbf{1}\left\{U(\theta)\geq0\right\}-a)$ is the usual regret loss for our original problem. 

Finally, interpreting $\mathcal{D}_n$ explicitly as the class of fractional assignment rules, the \emph{net-of-cost} risk function of a fractional rule $d\in\mathcal{D}_n$ is modified as
\begin{align*}
R^{c}(d,\theta)	&:=\mathbb{E}_{m(\theta)}\left[L^{c}(d(Y),\theta)\right]\\	&=R(d,\theta)+\mathbb{E}_{m(\theta)}\left[c(d(Y))\right],
\end{align*}
where $R(d,\theta)$ is the standard expected regret without
aversion to fractional treatment defined in \eqref{eq:expected.regret}. This shows that, under our assumptions, \emph{aversion to fractional assignment} can be captured by a penalized regret function, where the penalty for fractional assignment is captured by the second term in the net-of-cost risk function, $R^{c}(d,\theta)$.

In order to illustrate the effects of the penalty term in the net-of-cost risk function, $R^{c}(d,\theta)$, it is helpful to consider our running example and compare $d_{\text{RT}}^{*},d_{\text{linear}}^{*},d^*_{0}$ for different penalty functions.  

To visualize the risk comparisons, we once again rely on our $w^{*}$-profiled regret device discussed
in Section \ref{sec:profile.regret}. More specifically, with the modified risk function $R^{c}(d,\theta)$,
the $w^{*}$-profiled regret of a decision rule $d$ 
is written as 
\begin{align*}
\overline{R}_{w^{*}}^{c}(d,\gamma)&:=\sup_{\theta\in\Theta\text{ s.t. }m_{1}(\theta)=\gamma} R^c(d,\theta) \\&=\overline{R}_{w^{*}}(d,\gamma)+\mathbb{E}_{\gamma}\left[c(d(Y_{1}))\right],
\end{align*}
where $\overline{R}_{w^{*}}(d,\gamma)$ is the original profiled regret
in Definition \ref{def:profile.regret}, and $\mathbb{E}_{\gamma}\left[c(d(Y_{1}))\right]$
can be calculated either numerically or analytically once the functional
form of $c(\cdot)$ is given. Note as $d^*_{0}$ is non-fractional, $\mathbb{E}_{\gamma}\left[c(d^*_{0}(Y_{1}))\right]=0$. 

In Figure \ref{fig:profiled.regret.cost}, we present the $w^*$-profiled regret of $d_{\text{RT}}^{*}$, $d_{\text{linear}}^{*}$,
$d^*_{0}$ with three different types of cost functions: 

\begin{enumerate}
\item [(a)] \emph{Linear cost:} $c(a)=c\left(a\mathbf{1}\left\{ a\leq0.5\right\} +\left(1-a\right)\mathbf{1}\left\{ a>0.5\right\} \right),$
\item [(b)] \emph{Quadratic cost:} $c(a)=ca(1-a),$
\item [(c)] \emph{Constant cost:} $c(a)=c\mathbf{1}\left\{0<a<1\right\}$.
\end{enumerate}

Figure \ref{fig:profiled.regret.cost} shows that our least randomizing MMR rule, $d^*_{\textrm{linear}}$ has better profiled regret than $d^*_{\text{RT}}$ in all cost schemes considered. We find this interesting because the cost function captures both i) how frequently the rules yield fractional assignment, and also ii) how this fractional assignment is chosen. Since the least randomizing MMR rule is only concerned with i), it is unlikely to be optimal for \emph{every} cost function. In fact, we see that even when the cost function is constant (so that only i) affects the profiled regret), and the cost is large, the decision rule $d^*_0$ has better profiled regret than $d^*_{\textrm{linear}}$. However, we next present conditions under which the least randomizing decision rule is \emph{net-of-cost minimax regret optimal}.

\begin{figure}[h!]
\centering
\includegraphics[width=1\linewidth]{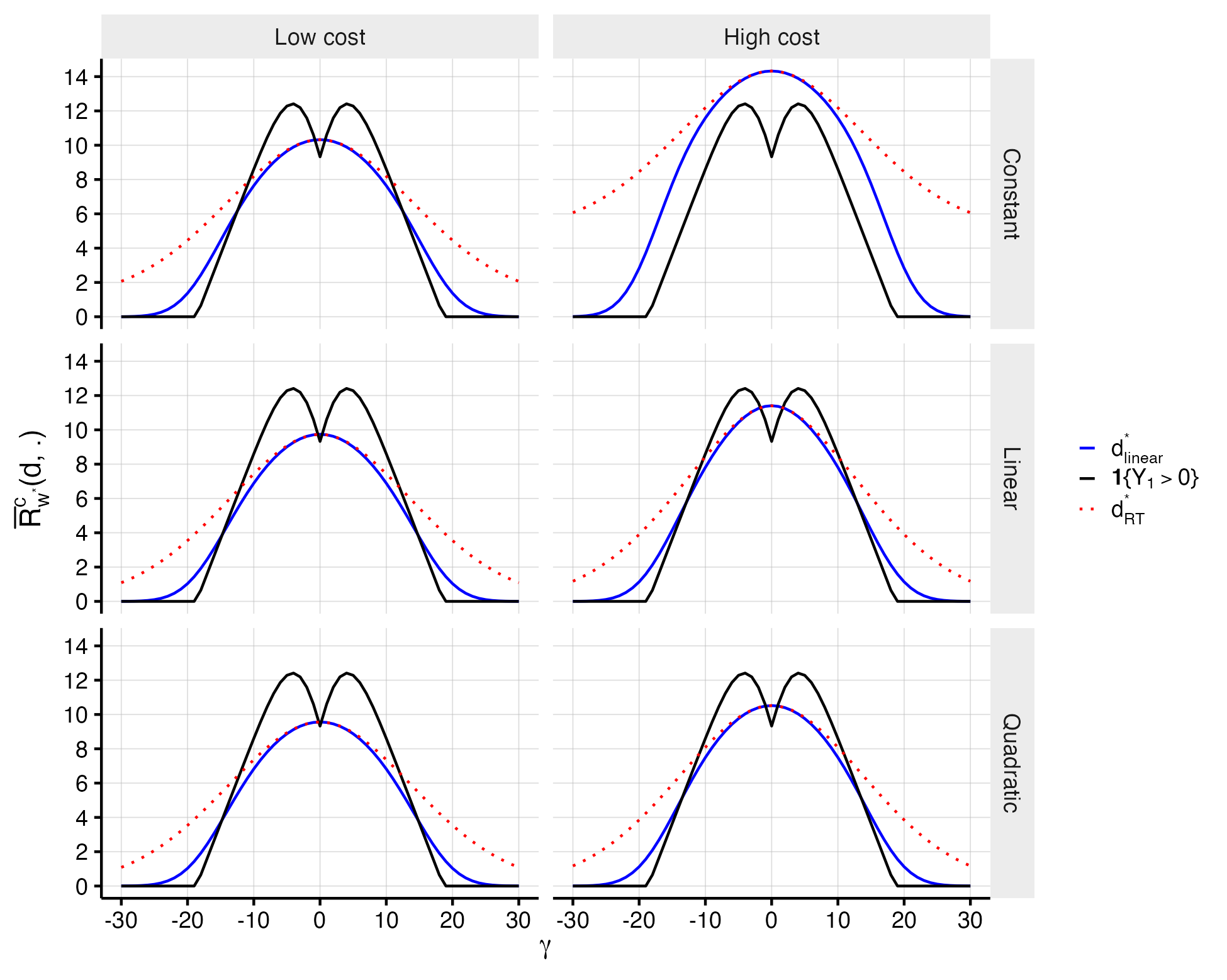}
\caption{$w^*$-profiled regret of $d^*_0,d^*_{\text{RT}}$ and $d^*_{\text{linear}}$. Upper left: constant cost function with a small cost ($c=1$); upper right: constant cost function with a large cost ($c=5$). Middle left: linear cost function with a small cost ($c=1$); middle right: linear cost function with a large cost ($c=5$). Down left: quadratic cost function with a small cost ($c=1$); down right: quadratic cost function with a large cost ($c=5$).}
    \label{fig:profiled.regret.cost}
\end{figure}

Let $\mathcal{\overline{D}}_{n}:=\mathcal{D}_{n,1}\cup\mathcal{D}_{n,2}$,
where 
\begin{align*}
\mathcal{D}_{n,1} & :=\left\{ d\in\tilde{\mathcal{D}}_{n}:\sup_{\theta}R(d,\theta)=\inf_{\tilde{d}\in\mathcal{D}_{n}}\sup_{\theta}R(\tilde{d},\theta)\right\} ,\\
\mathcal{D}_{n,2} & :=\left\{ d\in\mathcal{D}_{n}:d(Y)=\mathbf{1}\left\{ \left(w^{*}\right)^{\top}Y\geq t\right\} ,t\in\mathbb{R}\bigcup\left\{ -\infty,\infty\right\} \right\} .
\end{align*}
In words, $\mathcal{D}_{n,1}$ is the class of MMR optimal rules (for
the original no-cost risk function) that are in $\tilde{\mathcal{D}}_{n}$
defined in Section \ref{sec:least.randomizing}; $\mathcal{D}_{n,2}$ is the class of threshold
rules that depend on data only via $\left(w^{*}\right)^{\top}Y$.
We show that, in a framework as general as that in Theorem \ref{thm:main.regret.1},  $d_{\text{linear}}^{*}$ remains minimax optimal within
$\mathcal{\overline{D}}_{n}$ when we consider a constant cost function
with $c$ sufficiently small. Write $P_{0}\left\{ (-\rho^{*},\rho^{*})\right\} :=Pr\left\{ \left(w^{*}\right)^{\top}Y\in(-\rho^{*},\rho^{*})\right\} $
when $\left(w^{*}\right)^{\top}Y\sim N(0,\left(w^{*}\right)^{\top}\Sigma w^{*})$, then we have:
\begin{prop}\label{prop:aversion.fraction}
Under the conditions of Theorem \ref{thm:main.regret.1}, consider the net-of-cost risk function
$R^c$ with $c(a)=c\mathbf{1}\left\{ 0<a<1\right\} $. Let
\[
\mathrm{g}_{w}(\mu)=\overline{I}(\mu)\Phi\left(-\frac{w^{\top}\mu}{\sqrt{w^{\top}\Sigma w}}\right).
\]
If 
\[
c\leq\frac{\sup_{\mu\in M,\overline{I}(\mu)>0}\mathrm{g}_{w^{*}}(\mu)-\overline{I}(0)/2}{P_{0}\left\{ (-\rho^{*},\rho^{*})\right\} },
\]
then 
\[
\sup_{\theta}R^{c}(d_{\text{linear}}^{*},\theta)=\inf_{d\in\mathcal{\overline{D}}_{n}}\sup_{\theta}R^{c}(d,\theta).
\]
\end{prop}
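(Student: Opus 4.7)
The plan is to establish minimax optimality of $d^*_{\text{linear}}$ over $\overline{\mathcal{D}}_n=\mathcal{D}_{n,1}\cup\mathcal{D}_{n,2}$ by evaluating $\sup_\theta R^c(d^*_{\text{linear}},\theta)$ exactly and then bounding $\sup_\theta R^c(d,\theta)$ from below on each of the two classes separately. Throughout, let $\sigma=\sqrt{(w^*)^\top\Sigma w^*}$.

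\emph{Step 1: evaluating $\sup_\theta R^c(d^*_{\text{linear}},\theta)$.} Since $c(a)=c\mathbf{1}\{0<a<1\}$ and $d^*_{\text{linear}}(Y)\in(0,1)$ exactly on the event $\{(w^*)^\top Y\in(-\rho^*,\rho^*)\}$, the expected cost $c\cdot P_{m(\theta)}\{(w^*)^\top Y\in(-\rho^*,\rho^*)\}$ is maximized in $\theta$ when $(w^*)^\top m(\theta)=0$ (the normal density is symmetric about its mean and the interval is symmetric about $0$), where it equals $c\cdot P_0\{(-\rho^*,\rho^*)\}$. Theorem \ref{thm:main.regret.1} simultaneously identifies $m(\theta)=\mathbf{0}$ as a maximizer of $R(d^*_{\text{linear}},\cdot)$ with value $\overline{I}(\mathbf{0})/2$, so both summands of $R^c$ are jointly maximized at the same $\theta$, giving
\[
\sup_\theta R^c(d^*_{\text{linear}},\theta)=\overline{I}(\mathbf{0})/2+c\cdot P_0\{(-\rho^*,\rho^*)\}.
\]

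\emph{Step 2: lower bound on $\mathcal{D}_{n,1}$.} For any $d=F\circ w^*\in\mathcal{D}_{n,1}$, MMR optimality in the no-cost problem forces $\sup_\theta R(d,\theta)=\overline{I}(\mathbf{0})/2$, attained at some $\theta^*$ with $m(\theta^*)=\mathbf{0}$ (hence $(w^*)^\top m(\theta^*)=0$). The expected cost at $\theta^*$ equals $c\cdot P_0\{V(F\circ w^*)\}$, and Theorem \ref{thm:main.regret.2} yields $V(d^*_{\text{linear}}\circ w^*)=(-\rho^*,\rho^*)\subseteq V(F\circ w^*)$, so $P_0\{V(F\circ w^*)\}\geq P_0\{(-\rho^*,\rho^*)\}$. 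Adding regret and cost at $\theta^*$ delivers $\sup_\theta R^c(d,\theta)\geq R^c(d,\theta^*)\geq\sup_\theta R^c(d^*_{\text{linear}},\theta)$.

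\emph{Step 3: lower bound on $\mathcal{D}_{n,2}$.} Any $d=\mathbf{1}\{(w^*)^\top Y\geq t\}\in\mathcal{D}_{n,2}$ is non-randomized, so $R^c(d,\theta)=R(d,\theta)$. For any $\mu\in M$ with $\overline{I}(\mu)>0$, centrosymmetry of $\Theta$ and linearity of $U,m$ supply $\theta_1,\theta_2\in\Theta$ with $(m(\theta_i),U(\theta_i))=(\pm\mu,\pm\overline{I}(\mu))$. Writing $\alpha=(w^*)^\top\mu$ and using the elementary identity $\max\{\Phi((t-\alpha)/\sigma),\Phi((-t-\alpha)/\sigma)\}=\Phi((|t|-\alpha)/\sigma)$,
\[
\sup_\theta R(d,\theta)\geq\overline{I}(\mu)\Phi\bigl((|t|-\alpha)/\sigma\bigr)\geq\overline{I}(\mu)\Phi(-\alpha/\sigma)=\mathrm{g}_{w^*}(\mu),
\]
where the second inequality uses $|t|\geq 0$. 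Taking the supremum over $\mu$ and invoking the hypothesized bound on $c$,
\[
\sup_\theta R^c(d,\theta)\geq\sup_{\mu\in M,\overline{I}(\mu)>0}\mathrm{g}_{w^*}(\mu)\geq\overline{I}(\mathbf{0})/2+c\cdot P_0\{(-\rho^*,\rho^*)\}=\sup_\theta R^c(d^*_{\text{linear}},\theta).
\]
Combining Steps 1--3 with $d^*_{\text{linear}}\in\mathcal{D}_{n,1}\subseteq\overline{\mathcal{D}}_n$ proves the claim. The only nonroutine step is the centrosymmetry-based two-point construction in Step 3, which is needed to produce a lower bound on worst-case regret of an arbitrary threshold rule that is uniform in $t$ and that connects cleanly to $\mathrm{g}_{w^*}$; Steps 1 and 2 chain together Theorems \ref{thm:main.regret.1} and \ref{thm:main.regret.2} with elementary calculations.
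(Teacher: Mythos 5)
Your proof is correct and, on the class $\mathcal{D}_{n,1}$, essentially coincides with the paper's argument: both evaluate $\sup_\theta R^c(d^*_{\text{linear}},\theta)=\overline{I}(\mathbf{0})/2+c\,P_0\{(-\rho^*,\rho^*)\}$ by noting that regret and expected cost are simultaneously maximized on the fiber $m(\theta)=\mathbf{0}$, and both invoke the least-randomizing property from Theorem \ref{thm:main.regret.2} to show that any MMR rule $F\circ w^*\in\mathcal{D}_{n,1}$ has the same regret $\overline{I}(\mathbf{0})/2$ at $\gamma=0$ but weakly larger expected cost there. Where you genuinely depart from the paper is on $\mathcal{D}_{n,2}$: the paper writes $\sup_\theta R^c(d,\theta)\geq\inf_{d\in\mathcal{D}_{n,2}}\sup_\theta R^c(d,\theta)$ and asserts that this infimum is attained by the symmetric rule $d_0^*$, whose worst-case regret is $\sup_{\mu\in M,\overline{I}(\mu)>0}\mathrm{g}_{w^*}(\mu)$; the optimality of $d_0^*$ within the threshold class is stated rather than derived. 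You instead give a direct two-point lower bound: for an arbitrary threshold $t$ and any $\mu$ with $\overline{I}(\mu)>0$, evaluating regret at (near-)maximizers over the fibers $\mu$ and $-\mu$ and using $\max\{\Phi((t-\alpha)/\sigma),\Phi((-t-\alpha)/\sigma)\}=\Phi((|t|-\alpha)/\sigma)\geq\Phi(-\alpha/\sigma)$ gives $\sup_\theta R(d,\theta)\geq\mathrm{g}_{w^*}(\mu)$ uniformly in $t$. This is more self-contained and in fact proves, as a byproduct, the paper's unargued claim that $d_0^*$ is minimax within $\mathcal{D}_{n,2}$ (its worst-case regret equals $\sup_\mu\mathrm{g}_{w^*}(\mu)$ by centrosymmetry, matching your lower bound). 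The only blemishes are immaterial: the suprema defining $\overline{I}(\mu)$ and the worst case over the fiber $m(\theta)=\mathbf{0}$ need not be attained, so ``attained at some $\theta^*$'' and ``$U(\theta_i)=\pm\overline{I}(\mu)$'' should be read via approximating sequences, which changes nothing in the inequalities.
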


\begin{proof}
See Online Appendix \ref{sec:proof.aversion.fraction}.
\end{proof}

\subsubsection*{Caveats}

There are several important caveats to the framework we just introduced. 
First, in statistical decision theory and game theory, it is commonly assumed that the decision masker can randomize in the sense of ii) above, and such randomization is usually compounded with other, e.g. sampling, uncertainty. That is, the decision maker is not averse to randomization. For example, this assumption underlies mixed equilibria on game theory and maximin theorems. But if we allowed that here, the decision maker could avoid any cost of fractional assignment by using the decision rules from this paper's main body and just interpreting $a \in (0,1)$ as randomization.

In addition, while we are currently unable to characterize the exact unconstrained MMR solution for net-of-cost risk $R^c$, we conjecture that this characterization is only interesting if we add further restrictions to the class of fractional decision rules (as we did in Proposition \ref{prop:aversion.fraction}). This is because of the game theoretic ``purification'' arguments also alluded to in Remark \ref{rem:lexicographic}: Take any MMR solution $d^*$ to our original problem with risk function $R(d^*,\theta)$, then results in \citet{dww} suggest that, for any $\epsilon>0$, we can find a decision rule $d_{\epsilon}$ that only recommends allocations in $\{0,1\}$ and has a risk function at most $\epsilon$ away from the risk function of $d^*$, uniformly over the parameter space. Consequently, $d_{\epsilon}$ is a fractional allocation rule that only recommends actions in $\{0,1\}$. This means that this rule will approximately solve the MMR problem with net-of-cost risk function $R^c$. We currently investigate exact  conditions under which this holds. However, we know from results in the main paper that rules generated in this manner must violate monotonicity with respect to the linear index that $d^*$ depends on.

In summary, aversion to fractional assignment may be the simplest framework (within standard statistical decision theory) to formalize preferences against randomization. However, we suspect that this framework is not truly interesting without additional restrictions on the class of fractional decision rules. There may still be some gap between our framework and a true working definition of \emph{aversion to randomization}. This could be an interesting avenue for future research.

\subsection{Lemmas for Theorem \ref{thm:main.regret.1}}

\begin{lem} \label{lem:yata.lower.bound}
Consider a treatment choice problem with payoff function \eqref{eq:welfare} and statistical model \eqref{eq:normal_model} that exhibits nontrivial partial identification in the sense of Definition \ref{asm:1}. Suppose that Assumption \ref{asm:yata.1} holds and there exists an MMR optimal rule $d^*$ depending on the data only through $(w^{^*})^\top Y$. If Equation \eqref{eq:yata.property.1} holds, then
\[ \textbf{R}:= \sup_{\theta \in \Theta} U(\theta) \left(  \mathbf{1}\{U(\theta) \geq 0\} - \mathbb{E}_{m(\theta)}[d^*((w^*)^\top Y)] \right) \geq \frac{\overline{k}_{w^*}(0)}{2},\]
where
\[ \overline{k}_{w^*}(0) := \sup_{\theta \in \Theta} U(\theta) \quad \textrm{s.t }  \quad (w^*)^{\top} m(\theta) = 0. \]
\end{lem}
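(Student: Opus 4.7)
My plan is to exploit the restriction on the form of $d^*$ to reduce the expected assignment probability to a function of the scalar $(w^*)^\top m(\theta)$, and then bound the worst-case regret from below by restricting attention to the slice of parameter space on which this scalar vanishes.

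First, I will observe that because $d^*$ depends on $Y$ only through $(w^*)^\top Y$ and because $(w^*)^\top Y \sim N((w^*)^\top m(\theta),\,(w^*)^\top \Sigma w^*)$ whenever $Y \sim N(m(\theta),\Sigma)$, the quantity $\mathbb{E}_{m(\theta)}[d^*((w^*)^\top Y)]$ depends on $\theta$ only through $(w^*)^\top m(\theta)$. Evaluating \eqref{eq:yata.property.1} at any $\theta$ with $m(\theta)=\mathbf{0}$ (for which $(w^*)^\top m(\theta)=0$) then pins down this function at zero: for every $\theta \in \Theta$ with $(w^*)^\top m(\theta)=0$, I have $\mathbb{E}_{m(\theta)}[d^*((w^*)^\top Y)] = 1/2$.

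Second, I will lower-bound $\mathbf{R}$ by restricting the supremum to the slice $\Theta_0 := \{\theta \in \Theta : (w^*)^\top m(\theta)=0\}$. On $\Theta_0$, the expected regret equals
\[ U(\theta)\bigl(\mathbf{1}\{U(\theta)\geq 0\} - 1/2\bigr) = |U(\theta)|/2, \]
so $\mathbf{R} \geq \tfrac{1}{2}\sup_{\theta \in \Theta_0} |U(\theta)|$. By Assumption \ref{asm:yata.1}, $\Theta$ is centrosymmetric and both $m$ and $U$ are linear, hence $\theta \mapsto -\theta$ maps $\Theta_0$ to itself and flips the sign of $U$; therefore $\sup_{\theta \in \Theta_0} |U(\theta)| = \sup_{\theta \in \Theta_0} U(\theta) = \overline{k}_{w^*}(0)$, which yields the claimed inequality.

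The only technical point to handle is that the supremum defining $\overline{k}_{w^*}(0)$ need not be attained, which I will address by a standard $\epsilon$-argument: for every $\epsilon>0$, pick $\theta_\epsilon \in \Theta_0$ with $U(\theta_\epsilon) > \overline{k}_{w^*}(0) - \epsilon$, read off $\mathbf{R} \geq U(\theta_\epsilon)/2$, and let $\epsilon \downarrow 0$. I do not anticipate any real obstacle here; the proof is essentially a bookkeeping exercise built on the fact that $d^*$ is measurable with respect to the sufficient direction $(w^*)^\top Y$, together with centrosymmetry.
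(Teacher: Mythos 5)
Your proposal is correct and follows essentially the same route as the paper's proof of Lemma \ref{lem:yata.lower.bound}: restrict the supremum to the slice $\{(w^*)^\top m(\theta)=0\}$, use the fact that $d^*$ depends on the data only through $(w^*)^\top Y$ to propagate \eqref{eq:yata.property.1} to that whole slice, and invoke centrosymmetry plus linearity to identify $\sup|U|$ with $\overline{k}_{w^*}(0)$. The paper writes the last step as a maximum over the $U\geq 0$ and $U\leq 0$ branches rather than as $|U|/2$, but this is the same argument.
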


\begin{proof}
Since the distribution of ${w^*}^{\top}Y$ only depends on $m(\theta)$ through $(w^*)^{\top} m(\theta)$, we can write
\begin{align}
\textbf{R} &= \sup_{\theta\in \Theta} U(\theta) \left(  \mathbf{1}\{U(\theta) \geq 0\} - \mathbb{E}_{(w^*)^\top m(\theta)}[d^*((w^*)^{\top}Y)] \right) \notag \\
 &\geq \sup_{\theta\in \Theta:(w^*)^\top m(\theta)=0} U(\theta) \left(  \mathbf{1}\{U(\theta) \geq 0\} - \mathbb{E}_{0}[d^*((w^*)^{\top}Y)] \right) \notag \\  
& \overset{(1)}{=} \sup_{\theta\in \Theta,(w^*)^\top m(\theta)=0} U(\theta) \left(  \mathbf{1}\{U(\theta) \geq 0\} - \frac{1}{2} \right) \notag \\
& = \frac{1}{2} \max \left\{  \sup_{\theta\in \Theta:(w^*)^\top m(\theta)=0, U(\theta) \geq 0} U(\theta) \: , \:  \sup_{\theta\in \Theta:(w^*)^\top m(\theta)=0, U(\theta) \leq 0} -U(\theta)\right\} \notag  \\
&\overset{(2)}{=} \frac{1}{2} \max \left \{  \sup_{\theta\in \Theta:(w^*)^\top m(\theta)=0, U(\theta) \geq 0 } U(\theta) \: , \: \sup_{\theta\in \Theta:(w^*)^\top m(\theta)=0, U(-\theta) \geq 0} U(-\theta)  \right\} \notag \\
&\overset{(3)}{=} \frac{1}{2}  \sup_{\theta\in \Theta:(w^*)^\top m(\theta)=0,  } U(\theta), \notag
\end{align}
where (1) uses that $\mathbb{E}_{0}[d^*((w^*)^{\top}Y)]=1/2$ by \eqref{eq:yata.property.1}, (2) uses linearity of $U(\cdot)$, and (3) uses centrosymmetry of $\Theta$ and linearity of $m(\cdot)$ and $U(\cdot)$.
\end{proof}

\begin{lem}\label{lem:ex.1.3}
Suppose Assumption \ref{asm:yata.1} holds. Then $\overline{k}_{w^*}(\gamma)$ is concave in $\gamma\in  \Gamma_{w^*}$.
\end{lem}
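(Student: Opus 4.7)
The plan is to show concavity by a standard ``linear program value function'' argument: we are maximizing a linear functional ($U$) over a convex set ($\Theta$) subject to a linear equality constraint ($(w^*)^\top m(\theta) = \gamma$), and the value function of such a parameterized family is concave in the right-hand side of the constraint.

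First I would verify that the effective domain $\Gamma_{w^*}$ is itself convex. Given $\gamma_1,\gamma_2 \in \Gamma_{w^*}$, pick witnesses $\theta_1, \theta_2 \in \Theta$ with $(w^*)^\top m(\theta_i) = \gamma_i$. By convexity of $\Theta$, $\theta_\lambda := \lambda\theta_1 + (1-\lambda)\theta_2 \in \Theta$ for $\lambda \in [0,1]$, and by linearity of $m(\cdot)$, $(w^*)^\top m(\theta_\lambda) = \lambda\gamma_1 + (1-\lambda)\gamma_2$. Hence $\lambda\gamma_1+(1-\lambda)\gamma_2 \in \Gamma_{w^*}$.

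Next, for the concavity inequality itself: fix $\gamma_1,\gamma_2 \in \Gamma_{w^*}$, $\lambda \in [0,1]$, and $\epsilon > 0$. Assuming for the moment that $\overline{k}_{w^*}(\gamma_i) < \infty$, choose $\theta_i \in \Theta$ with $(w^*)^\top m(\theta_i) = \gamma_i$ and $U(\theta_i) \geq \overline{k}_{w^*}(\gamma_i) - \epsilon$. Setting $\theta_\lambda := \lambda\theta_1 + (1-\lambda)\theta_2$, the same convexity/linearity argument as above shows $\theta_\lambda \in \Theta$ with $(w^*)^\top m(\theta_\lambda) = \lambda\gamma_1 + (1-\lambda)\gamma_2$, while linearity of $U(\cdot)$ yields
\[
U(\theta_\lambda) = \lambda U(\theta_1) + (1-\lambda)U(\theta_2) \geq \lambda \overline{k}_{w^*}(\gamma_1) + (1-\lambda)\overline{k}_{w^*}(\gamma_2) - \epsilon.
\]
Since $\theta_\lambda$ is feasible for the problem defining $\overline{k}_{w^*}(\lambda\gamma_1 + (1-\lambda)\gamma_2)$, taking $\epsilon \downarrow 0$ gives the concavity inequality. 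The $+\infty$ case is handled analogously by replacing the $\epsilon$-optimizer with a sequence $\theta_i^{(n)}$ driving $U(\theta_i^{(n)}) \to \infty$ and passing to the limit.

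I do not anticipate a real obstacle here; the only mild nuisance is whether the supremum is attained, but the $\epsilon$-argument sidesteps that cleanly, and Assumption \ref{asm:yata.1} supplies exactly the convexity of $\Theta$ and linearity of $m,U$ that the argument needs. An equivalent, more conceptual presentation would be to note that the hypograph
\[
A := \{(\gamma, u) \in \mathbb{R}^2 : \gamma = (w^*)^\top m(\theta),\ u \leq U(\theta),\ \theta \in \Theta\}
\]
is convex by exactly the same linearity/convexity argument, and $\overline{k}_{w^*}$ is the upper envelope of $A$, hence concave; I would use whichever presentation the editor prefers.
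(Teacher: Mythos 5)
Your proposal is correct and takes essentially the same route as the paper's proof: pick $\epsilon$-optimizers for $\gamma_1$ and $\gamma_2$, use convexity of $\Theta$ together with linearity of $m(\cdot)$ and $U(\cdot)$ to show the convex combination is feasible for the intermediate $\gamma$, and let $\epsilon \downarrow 0$. The extra remarks on convexity of $\Gamma_{w^*}$ and the $+\infty$ case are harmless additions but not needed beyond what the paper does.
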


\begin{proof}
Fix any $\gamma_{1},\gamma_{2}\in\Gamma_{w^{*}}$ and $\varepsilon>0$. By the definition of $\overline{k}_{w^{*}}(\gamma_{1})$ and $\overline{k}_{w^{*}}(\gamma_{2})$,
there exist $\theta_{\gamma_{1},\varepsilon},\theta_{\gamma_{2},\varepsilon}\in\Theta$
such that 
\begin{align*}
(w^{*})^{\top}m(\theta_{\gamma_{1},\varepsilon}) & =\gamma_{1},\text{ }U(\theta_{\gamma_{1},\varepsilon})>\overline{k}_{w^{*}}(\gamma_{1})-\varepsilon,\\
(w^{*})^{\top}m(\theta_{\gamma_{2},\varepsilon}) & =\gamma_{2},\text{ }U(\theta_{\gamma_{2},\varepsilon})>\overline{k}_{w^{*}}(\gamma_{2})-\varepsilon,
\end{align*}
By convexity of $\Theta$, $t\theta_{\gamma_{1},\varepsilon}+(1-t)\theta_{\gamma_{2},\varepsilon}\in\Theta$
as well. Moreover, as $m(\cdot)$ is linear, 
\begin{align*}
(w^{*})^{\top}m\left(t\theta_{\gamma_{1},\varepsilon}+(1-t)\theta_{\gamma_{2},\varepsilon}\right) & =t(w^{*})^{\top}m(\theta_{\gamma_{1},\varepsilon})+(1-t)(w^{*})^{\top}m(\theta_{\gamma_{2},\varepsilon})\\
 & =t\gamma_{1}+(1-t)\gamma_{2}.
\end{align*}
Next, for $t\in[0,1]$, let $\tilde{\gamma}_{t}:=t\gamma_{1}+(1-t)\gamma_{2}$. Then 
\begin{align*}
\overline{k}_{w^{*}}(\tilde{\gamma}_{t}) & =\sup_{(w^{*})^{\top}m(\theta)=t\gamma_{1}+(1-t)\gamma_{2},\theta\in\Theta}U(\theta)\\
& \geq U(t\theta_{\gamma_{1},\varepsilon}+(1-t)\theta_{\gamma_{2},\varepsilon})\\
& =t U(\theta_{\gamma_{1},\varepsilon})+(1-t) U(\theta_{\gamma_{2},\varepsilon})\\
& >t\left(\overline{k}_{w^*}(\gamma_{1})-\varepsilon\right)+(1-t)\left(\overline{k}_{w^*}(\gamma_{2})-\varepsilon\right)\\
& =t\overline{k}_{w^{*}}(\gamma_{1})+(1-t)\overline{k}_{w^{*}}(\gamma_{2})-\varepsilon,     
\end{align*}
where the third relation follows from linearity of $U(\cdotp)$. This means that
\[ \overline{k}_{w^{*}}(\tilde{\gamma}_{t}) > t\overline{k}_{w^{*}}(\gamma_{1})+(1-t)\overline{k}_{w^{*}}(\gamma_{2})-\varepsilon,\]
for any $\varepsilon > 0$. We conclude that 
\[\overline{k}_{w^{*}}(\tilde{\gamma}_{t}) \geq t\overline{k}_{w^{*}}(\gamma_{1})+(1-t)\overline{k}_{w^{*}}(\gamma_{2}).\]
\end{proof}

\begin{lem}\label{lem:ex.1.4}
Consider a treatment choice problem with payoff function \eqref{eq:welfare} and statistical model \eqref{eq:normal_model} that exhibits nontrivial partial identification in the sense of Definition \ref{asm:1}. If Assumption \ref{asm:yata.1} holds, then the superdifferential of $\overline{k}_{w^*}(\gamma)$ at $\gamma=0$ is nonempty, bounded, and closed.
\end{lem}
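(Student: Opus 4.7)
The plan is to derive the three properties from standard convex analysis once we show (i) $\overline{k}_{w^*}$ is concave, (ii) $0$ lies in the interior of its domain $\Gamma_{w^*}$, and (iii) $\overline{k}_{w^*}(0)$ is finite. The concavity comes for free from Lemma \ref{lem:ex.1.3}, so the work is in verifying the structural properties of $\Gamma_{w^*}$ around $0$.

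First I would establish that $\Gamma_{w^*}$ is symmetric around $0$. Given $\gamma = (w^*)^\top m(\theta) \in \Gamma_{w^*}$, centrosymmetry of $\Theta$ (Assumption \ref{asm:yata.1}(i)) gives $-\theta \in \Theta$, and linearity of $m$ gives $(w^*)^\top m(-\theta) = -\gamma$, so $-\gamma \in \Gamma_{w^*}$. In particular, taking $\theta$ and $-\theta$ with $t=1/2$ by convexity yields $\mathbf{0} \in \Theta$, hence $0 \in \Gamma_{w^*}$.

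Second, I would show that $0$ is an interior point of $\Gamma_{w^*}$ (relative to $\mathbb{R}$). Definition \ref{asm:1} gives a nonempty open set $\mathcal{S} \subseteq M \subseteq \mathbb{R}^n$. Since $w^* \neq \mathbf{0}$, the linear functional $\mu \mapsto (w^*)^\top \mu$ is an open map, so $(w^*)^\top \mathcal{S}$ is a nonempty open subset of $\mathbb{R}$; pick $\gamma_0 \in (w^*)^\top \mathcal{S}$ together with a small neighborhood $(\gamma_0 - \delta, \gamma_0 + \delta) \subseteq \Gamma_{w^*}$. By the symmetry step, $(-\gamma_0 - \delta, -\gamma_0 + \delta) \subseteq \Gamma_{w^*}$ as well. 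Convexity of $M$ (hence of $\Gamma_{w^*}$) then forces the whole interval $(-|\gamma_0|-\delta, |\gamma_0|+\delta)$ to lie in $\Gamma_{w^*}$, placing $0$ in its interior.

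Third, I would note that $\overline{k}_{w^*}(0)$ is finite. This is implicit in the decision-theoretic setup in which the lemma is invoked: when combined with the existence of an MMR optimal rule $d^*$ satisfying \eqref{eq:yata.property.1}--\eqref{eq:yata.property.2}, Step 1 in the proof of Theorem \ref{thm:main.regret.1}(i) yields $\overline{k}_{w^*}(0) = \overline{I}(\mathbf{0}) < \infty$; concavity then propagates finiteness to the whole interior of $\Gamma_{w^*}$.

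Finally, I would close the argument by invoking a standard result for concave functions (e.g., \citealt[][Theorem 23.4]{rockafellar1997convex}): a proper concave function that is finite on a neighborhood of an interior point of its domain has a nonempty, convex, and compact superdifferential there. Applying this to $\overline{k}_{w^*}$ at $\gamma = 0$ yields that $\partial \overline{k}_{w^*}(0)$ is nonempty, bounded, and closed, which is the desired conclusion. The main obstacle is the interior-point argument in the second step, which requires carefully combining the open-map property of $(w^*)^\top$ with centrosymmetry and convexity of $M$; everything else reduces to quoting concavity and a textbook convex-analysis fact.
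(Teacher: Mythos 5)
Your proof is correct and follows essentially the same route as the paper's: establish that $0\in\operatorname{int}(\Gamma_{w^*})$ via centrosymmetry, linearity, and the openness of $\mathcal{S}$, then invoke \citet[][Theorem 23.4]{rockafellar1997convex} (the paper uses an explicit perturbation $m(\theta)+\epsilon w^*$ where you use the open-map property of $\mu\mapsto (w^*)^\top\mu$, and it proves closedness by a direct limit argument where you get it from compactness of the superdifferential — both immaterial differences). Your finiteness step is a reasonable extra precaution the paper omits, though note it leans on the existence of an MMR rule satisfying \eqref{eq:yata.property.1}--\eqref{eq:yata.property.2}, which is not among the lemma's stated hypotheses.
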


\begin{proof} To see closure, let $s_n \to s^*$ be a converging sequence of elements in the superdifferential. By definition, for every $\gamma \in \Gamma_{w^*}$ we have
\[ \overline{k}_{w^*} (\gamma) \leq \overline{k}_{w^*}(0) + s_n  \gamma. \]
But then, for every $\gamma \in \Gamma_{w^*}$
\[ \overline{k}_{w^*} (\gamma) \leq \overline{k}_{w^*}(0) + s^*  \gamma. \]
Thus, $s^* \in \partial \overline{k}_{w^*}(0)$. 

For nonemptiness and boundedness, by \citet[][Theorem 23.4]{rockafellar1997convex}, it suffices to show that $0\in \text{int}(\Gamma_{w^*})$.

{\scshape Step 1:} We first show that $0\in\Gamma_{w^*}$. As $\Theta$ is centrosymmetric, convex and nonempty, $\overline{\mathbf{0}}\in \Theta$, where $\overline{\mathbf{0}}$ is the zero vector in $\Theta$. As $m(\cdot)$ is linear, it follows $(w^*)^\top m(\overline{\mathbf{0}}) =(w^*)^\top\mathbf{0}=0$, where $\mathbf{0}:=0_{n\times 1}$. That is, $0\in\Gamma_{w^*}$. 

{\scshape Step 2:} We show that there exists some $\gamma\neq0$ such that $\gamma\in \Gamma_{w^*}$. Let $\mathcal{S}$ be an open set in $\mathbb{R}^{n}$ for which
\[ \underline{I}(\mu) < 0 < \overline{I}(\mu), \quad \forall \: \mu \in \mathcal{S}. \]
Such a set exists because the problem exhibits nontrivial partial identification. Pick any $\theta\in\Theta$ such that $m(\theta)\in\mathcal{S}$. If $(w^*)^\top m(\theta)\neq0$, then the initial claim of step 2 follows.  If $(w^*)^\top m(\theta)=0$, then note as $\mathcal{S}$ is open, we can pick some $\epsilon>0$ small enough such that $m(\theta)+\epsilon w^*\in\mathcal{S}$. It follows then $(w^*)^\top (m(\theta)+\epsilon w^*)=\epsilon>0$. Therefore, the claim of step 2 is verified.

{\scshape Step 3:} We show that $\Gamma_{w^*}$ is symmetric around zero. Using the conclusion from Step 2, pick any $\gamma\in \Gamma_{w^*}$ and $\gamma \neq0$. Then there exists $\theta \in \Theta$ such that $\gamma =(w^*)^\top m(\theta)$. Since $\Theta$ is centrosymmetric, $-\theta\in\Theta$ as well. As $(w^*)^\top m(-\theta)=-(w^*)^\top m(\theta)=-\gamma$ by linearity of $m(\cdot)$, one has $-\gamma\in \Gamma_{w^*}$. That is, $M$ is symmetric around zero. 

Steps 1-3 then imply that  $\Gamma_{w^*}$ is a symmetric interval around zero and that $0\in \text{int}(\Gamma_{w^*})$.
\end{proof}

\begin{lem}\label{lem:linear.embedding}
If $s_{w^*}(0)>0$ and $\overline{k}_{w^*}(0) > \sqrt{\frac{\pi}{2}} \sqrt{(w^*)^\top\Sigma w^*}\cdot s_{w^*}(0)$, then the right-hand side of (\ref{eq:linear.embedding.1}) equals $\overline{k}_{w^*}(0)/2$ and infinitely many rules attain this value. In particular, any convex combination of the rules in Equations \eqref{eq:yata.stoye.appendix}-\eqref{eq:linear.appendix} solves the problem in Expression (\ref{eq:linear.embedding.1}).
\end{lem}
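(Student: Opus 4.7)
The plan is to reformulate the problem as a one-dimensional minimax regret problem with a linear identified set, verify a trivial lower bound at $\gamma = 0$, check that each of the two listed rules attains that bound, and finally propagate to convex combinations using convexity of the regret in the averaged action. Write $k := \overline{k}_{w^*}(0) > 0$, $s := s_{w^*}(0) > 0$, and $\tau^2 := (w^*)^{\top}\Sigma w^*$. Performing the inner supremum over $U^*$ in \eqref{eq:linear.embedding.1} gives $V(\gamma, q) := \max\{(s\gamma+k)^{+}(1-q),\,(k-s\gamma)^{+} q\}$, which as a maximum of two affine functions of $q$ is convex in $q$. At $\gamma = 0$, $V(0, q) = \max\{k(1-q), kq\} \geq k/2$ for every $q \in [0,1]$, with equality iff $q = 1/2$. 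Hence the minimax value is at least $k/2$, and any attaining rule must satisfy $\mathbb{E}_0[d(\widehat{\gamma})] = 1/2$.

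For $d^{*}_{\text{RT}}$, I would invoke the Gaussian convolution identity $\mathbb{E}_\gamma[\Phi(\widehat{\gamma}/\tilde{\sigma})] = \Phi(\gamma/\sqrt{\tilde{\sigma}^2+\tau^2})$, which under the choice of $\tilde{\sigma}$ in \eqref{eq:yata.stoye.appendix} (well-defined precisely because $k > \sqrt{\pi/2}\,\tau s$) yields $q_{\text{RT}}(\gamma) = \Phi\bigl(\gamma s\sqrt{\pi/2}/k\bigr)$. After the rescaling $\gamma \mapsto s\gamma$, the linear embedding problem reduces exactly to the one-dimensional partially identified location problem analyzed in \cite{stoye2012minimax}, and $d^{*}_{\text{RT}}$ coincides with the minimax regret rule derived there. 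Its maximum regret is therefore $k/2$.

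For $d^{*}_{\text{linear}}$ I would proceed in three substeps. (i) \emph{Existence and uniqueness of} $\rho^* \in (0, k/s)$: study $f(\rho) := s\rho/(2k) - 1/2 + \Phi(-\rho/\tau)$ on $[0,\infty)$, noting $f(0) = 0$, $f'(0) = s/(2k) - 1/(\tau\sqrt{2\pi}) < 0$ by the standing assumption, $f'$ strictly increasing with limit $s/(2k) > 0$, and $f(k/s) = \Phi(-k/(s\tau)) > 0$, which pins down a unique positive zero in $(0, k/s)$. (ii) \emph{First-order matching at} $\gamma = 0$: evenness of the Gaussian density gives $q_{\text{linear}}(0) = 1/2$; integration by parts shows $q'_{\text{linear}}(\gamma) = (2\rho^*)^{-1}\bigl(\Phi((\rho^*-\gamma)/\tau) - \Phi((-\rho^*-\gamma)/\tau)\bigr)$, and \eqref{eq:rho.star.main.proof} translates precisely into $q'_{\text{linear}}(0) = s/(2k)$. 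Writing $h_1(\gamma) := (s\gamma+k)(1-q_{\text{linear}}(\gamma))$ and $h_2(\gamma) := (k-s\gamma)^{+}q_{\text{linear}}(\gamma)$, one gets $h_1(0) = h_2(0) = k/2$ and $h_1'(0) = h_2'(0) = 0$. (iii) \emph{Global bound} $h_1, h_2 \leq k/2$: the anti-symmetry $q_{\text{linear}}(-\gamma) = 1 - q_{\text{linear}}(\gamma)$ gives $h_2(\gamma) = h_1(-\gamma)$ on $[0, k/s]$, so it suffices to prove $h_1 \leq k/2$ on all of $\mathbb{R}$. I would rewrite $1 - q_{\text{linear}}(\gamma) = (2\rho^*)^{-1}\int_{-\rho^*-\gamma}^{\rho^*-\gamma}\Phi(v/\tau)\,dv$ by integration by parts, observe $h_1(\gamma) \to 0$ as $\gamma \to \pm\infty$, compute $h_1''(0) = -s^2/k < 0$, and then use a monotonicity argument on $h_1'$ together with the Gaussian log-concavity of the shifted integrand to conclude that $\gamma = 0$ is the unique interior critical point on $[0,\infty)$. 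This closing of the global bound is the main obstacle and the only genuinely new calculation relative to \cite{stoye2012minimax}.

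Finally, for any $\lambda \in [0,1]$, set $d_\lambda := \lambda d^{*}_{\text{RT}} + (1-\lambda)d^{*}_{\text{linear}}$. Linearity of expectation gives $\mathbb{E}_\gamma[d_\lambda(\widehat{\gamma})] = \lambda q_{\text{RT}}(\gamma) + (1-\lambda)q_{\text{linear}}(\gamma)$, and convexity of $V(\gamma, \cdot)$ yields $V(\gamma, \mathbb{E}_\gamma[d_\lambda(\widehat{\gamma})]) \leq \lambda V(\gamma, q_{\text{RT}}(\gamma)) + (1-\lambda)V(\gamma, q_{\text{linear}}(\gamma)) \leq k/2$ for every $\gamma$. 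Taking $\sup_\gamma$ and combining with the lower bound of Step~1 shows that $d_\lambda$ is minimax regret optimal with value $k/2$. Since $\lambda$ ranges over $[0,1]$ and the two rules are distinct (e.g., $d^{*}_{\text{RT}}$ is strictly between $0$ and $1$ everywhere while $d^{*}_{\text{linear}}$ is not), this produces infinitely many minimax regret rules, completing the proof.
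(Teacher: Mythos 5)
Your architecture matches the paper's: reduce to the one-dimensional problem with identified set $[s\gamma-k,\,s\gamma+k]$, get the lower bound $k/2$ from $\gamma=0$, import the optimality of $d^*_{\text{RT}}$ from \cite{stoye2012minimax} after rescaling, verify $d^*_{\text{linear}}$ directly, and close under convex combinations (your convexity-of-$V(\gamma,\cdot)$-in-$q$ argument is the standard justification for the paper's one-line remark that the set of MMR rules is convex). Your substeps (i) and (ii) for $d^*_{\text{linear}}$ are correct and essentially reproduce Lemmas \ref{lem:large.3} and the first-order computation in Lemma \ref{lem:linear.embedding.upper.bound}; your characterization of $\rho^*$ via the strictly convex function $f(\rho)=s\rho/(2k)-1/2+\Phi(-\rho/\tau)$ with $f(0)=0$, $f'(0)<0$, $f(k/s)>0$ is a clean variant of the paper's monotonicity argument.

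The one place you diverge---and the only place your write-up is not yet a proof---is substep (iii), the global bound $h_1\leq k/2$. The paper (Lemma \ref{lem:linear.embedding.upper.bound}) establishes this by computing $h_1''$ explicitly, factoring it as a positive quantity times $\frac{\gamma+k}{\sigma}\,\mathbb{E}\bigl[Z \mid \tfrac{\gamma-\rho^*}{\sigma}\leq Z\leq\tfrac{\gamma+\rho^*}{\sigma}\bigr]-2$, and using monotonicity of the truncated normal mean to show $h_1$ is concave-then-convex, hence unimodal given its boundary values, with the FOC at $0$ pinning down the maximizer. You instead gesture at ``monotonicity of $h_1'$ together with Gaussian log-concavity.'' That hint can in fact be turned into a complete (and arguably cleaner) argument: $1-q_{\text{linear}}(\gamma)=\Pr\{V-\sigma Z\geq\gamma\}$ with $V\sim U[-\rho^*,\rho^*]$ independent of $Z\sim N(0,1)$, so it is the survival function of a random variable with log-concave density (convolution of log-concave densities) and is therefore log-concave in $\gamma$; multiplying by the log-concave factor $(s\gamma+k)$ on $(-k/s,\infty)$ makes $h_1$ log-concave there, and a positive log-concave function vanishing at the left endpoint and at $+\infty$ is unimodal, after which $h_1'(0)=0$ identifies $\gamma=0$ as the unique maximizer. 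Two small slips to fix: $h_1(\gamma)\to-\infty$ (not $0$) as $\gamma\to-\infty$, which is harmless because only $\gamma\geq -k/s$ matters after your reduction via $h_2(\gamma)=h_1(-\gamma)$; and you need unimodality on all of $(-k/s,\infty)$, not merely uniqueness of the critical point on $[0,\infty)$. With substep (iii) fleshed out along either the paper's second-derivative route or the log-concavity route, the proof is complete.
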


\begin{proof}
As $s_{w^*}(0)>0$, the value of the linear embedding problem in (\ref{eq:linear.embedding.1}) equals $s_{w^*}(0)$ times
\begin{equation}\label{eq:linear.embedding.2}
\inf_{d \in \mathcal{D}}\sup_{\gamma \in \mathbb{R}} \left( \sup_{\tilde{U}^*  \in \left [  -k + \gamma, \:   k +  \gamma  \right]  } \tilde{U}^*  \left(  \mathbf{1}\{\tilde{U}^* \geq 0\} - \mathbb{E}_{\gamma}[d(\widehat{\gamma})] \right)  \right),
\end{equation}
where 
\[ \widehat{\gamma} \sim N(\gamma, \sigma^2), \quad \sigma^2 \:= (w^*)^{\top}\Sigma w^*, \quad k := \frac{\overline{k}_{w^*}(0)}{s_{w^*}(0)}.  \]
\cite{stoye2012minimax} shows that, if $k> \sqrt{\pi/2}\: \sigma$ (which holds if and only if $\overline{k}_{w^*}(0) > \sqrt{\frac{\pi}{2}} \sqrt{(w^*)^\top\Sigma w^*}\cdot s_{w^*}(0)$), then 
\[
d_{\text{Gaussian}}^{*}:=\Phi\bigl(\hat{\gamma}/\sqrt{2k^{2}/\pi-\sigma^{2}}\bigr)
\]
solves \eqref{eq:linear.embedding.2} and its worst-case regret is attained at $\gamma=0$; that is, \eqref{eq:linear.embedding.2} equals
\[ \sup_{\gamma \in \mathbb{R}} \left( \sup_{\tilde{U}^*  \in \left [  -k + \gamma, \:   k +  \gamma  \right]  } \tilde{U}^*  \left(  \mathbf{1}\{\tilde{U}^* \geq 0\} - \mathbb{E}_{\gamma}[d^*_{\textrm{Gaussian}}(\widehat{\gamma})] \right)  \right), \]
and this expression equals
\[\sup_{\tilde{U}^*  \in \left [  -k , \:   k  \right]  } \tilde{U}^*  \left(  \mathbf{1}\{\tilde{U}^* \geq 0\} - \mathbb{E}_{\gamma}[d^*_{\textrm{Gaussian}}(\widehat{\gamma})] \right).\]
Moreover, the MMR value \eqref{eq:linear.embedding.2} equals $k/2$. Since $d^*_{\text{RT}} = d_{\text{Gaussian}}^{*}$, this implies that $d^*_{\text{RT}}$ solves the problem in Equation (\ref{eq:linear.embedding.1}) and that this problem has value $\overline{k}_{w^*}(0)/2$.

Lemma \ref{lem:new.minimax} establishes that $d^*_{\text{linear}}$ and $d^*_{\text{mixture}}$ equally attain MMR. Since the set of MMR optimal rules is closed under convex combination, this establishes the claim.
\end{proof}

\begin{lem}\label{lem:new.minimax} If $k > \sqrt{\pi/2} \sigma$, then following rule solves the linear embedding minimax problem defined in \eqref{eq:linear.embedding.2}:
\begin{equation*}\label{eq:d.linear}
d^{*}_{\text{linear}}:=\begin{cases}
0, & \hat{\gamma}<-\rho^{*},\\
\frac{\hat{\gamma}+\rho^{*}}{2\rho^{*}}, & -\rho^{*}\leq\hat{\gamma}\leq\rho^{*},\\
1, & \hat{\gamma}>\rho^{*},
\end{cases}
\end{equation*}
where $\rho^* \in (0,k)$ is the unique solution of
\begin{equation*}
\left(\frac{\rho^{*}}{2\cdot k}\right) -\frac{1}{2} + \Phi\left(-\frac{\rho^{*}}{\sigma}\right) =0. 
\end{equation*} 
\end{lem}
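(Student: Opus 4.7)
My plan is to verify directly that $d^*_{\text{linear}}$ attains worst-case regret equal to $k/2$, which is the MMR value established in Lemma \ref{lem:linear.embedding} via the Stoye (2012) result on $d^*_{\text{RT}}$. This reduces the problem to (a) establishing existence and uniqueness of $\rho^*$, and (b) bounding the worst-case regret of $d^*_{\text{linear}}$ from above by $k/2$.

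For step (a), I would define $h(\rho) := \rho/(2k) - 1/2 + \Phi(-\rho/\sigma)$, so that $\rho^*$ is a positive root of $h$. Compute $h(0) = 0$, $h(k) = \Phi(-k/\sigma) > 0$, and $h''(\rho) = (\rho/\sigma^3)\phi(\rho/\sigma) > 0$ for $\rho > 0$; hence $h$ is strictly convex on $(0,\infty)$. Under the standing assumption $k > \sqrt{\pi/2}\,\sigma$, one has $h'(0) = 1/(2k) - 1/(\sigma\sqrt{2\pi}) < 0$, so $h$ first decreases below zero before rising to $h(k) > 0$. Strict convexity forces a unique positive root $\rho^* \in (0,k)$.

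For step (b), integration by parts gives
\[ p(\gamma) := \mathbb{E}_\gamma[d^*_{\text{linear}}(\widehat{\gamma})] = 1 - \frac{1}{2\rho^*}\int_{-\rho^*}^{\rho^*} \Phi\bigl((s-\gamma)/\sigma\bigr)\,ds, \]
and the substitution $s \mapsto -s$ combined with $\Phi(-x) = 1-\Phi(x)$ yields the symmetry $p(-\gamma) = 1 - p(\gamma)$. Substituting into the inner sup over $\tilde{U}^*$ in \eqref{eq:linear.embedding.2} gives the worst-case regret function
\[ R(\gamma) = \max\bigl\{(k+\gamma)^+(1-p(\gamma)),\ (k-\gamma)^+ p(\gamma)\bigr\}, \]
which inherits symmetry $R(-\gamma) = R(\gamma)$; it thus suffices to bound $A(\gamma) := (k+\gamma)(1-p(\gamma))$ on $\gamma \geq 0$, since the second branch is the mirror image. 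Differentiating under the integral sign gives $p'(0) = (1/\rho^*)[\Phi(\rho^*/\sigma) - 1/2]$, which by the defining relation $\rho^* = k(1 - 2\Phi(-\rho^*/\sigma))$ simplifies to $p'(0) = 1/(2k)$. Thus $A(0) = k/2$, $A'(0) = 1/2 - k \cdot 1/(2k) = 0$, i.e., the threshold $\rho^*$ is chosen precisely to encode the FOC that makes $\gamma = 0$ a critical point of $A$; a short second-derivative computation (using that $p''(0) = 0$ by symmetry) yields $A''(0) = -2p'(0) = -1/k < 0$, so the critical point is a strict local maximum.

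The main obstacle is to promote this local maximality to a global bound $A(\gamma) \leq k/2$ on $\gamma \geq 0$. I would split into large and moderate $\gamma$. For $\gamma \to \infty$, the representation $1 - p(\gamma) = (2\rho^*)^{-1}\int_{-\rho^*}^{\rho^*}\Phi((s-\gamma)/\sigma)\,ds$ together with the super-polynomial decay of the Gaussian tail gives $A(\gamma) \to 0$, so $A \leq k/2$ outside some compact set. On the compact interval I would show the critical-point equation $A'(\gamma) = 0$, equivalently
\[ \int_{-\rho^*}^{\rho^*}\Phi\bigl((s-\gamma)/\sigma\bigr)\,ds = \frac{k+\gamma}{\sigma}\int_{-\rho^*}^{\rho^*}\phi\bigl((s-\gamma)/\sigma\bigr)\,ds, \]
has no root on $(0,\infty)$. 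Using the antiderivative $\int \Phi(t)\,dt = t\Phi(t) + \phi(t)$ and the endpoint identity at $\gamma = 0$ from the definition of $\rho^*$, one obtains a closed form whose sign should be strict for $\gamma > 0$; combined with $A(0) = k/2$, $A'(0) = 0$, $A''(0) < 0$, and $A \to 0$, this rules out any competing local max and yields the global bound. With $\sup_\gamma R(\gamma) = k/2$ equal to the MMR value, $d^*_{\text{linear}}$ is MMR optimal, completing the proof. The delicate algebra of the closed-form sign check is the step most likely to require the most care.
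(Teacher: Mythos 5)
Your overall strategy is the same as the paper's: show that the worst-case regret of $d^*_{\text{linear}}$ equals $k/2$, the MMR value already pinned down via \cite{stoye2012minimax}, and your existence/uniqueness argument for $\rho^*$ (strict convexity of $h$, $h(0)=0$, $h'(0)<0$, $h(k)>0$) is a correct, if slightly different, substitute for Lemma \ref{lem:large.3}(i). The local computation at $\gamma=0$ ($A(0)=k/2$, $A'(0)=0$ via the defining relation for $\rho^*$, $A''(0)=-1/k$) also matches Lemma \ref{lem:linear.embedding.upper.bound}. However, there are two genuine gaps. First, the reduction step is incomplete: writing the worst-case regret as $\sup_\gamma \max\{A(\gamma),B(\gamma)\}$ with $B(\gamma)=(k-\gamma)^+p(\gamma)=A(-\gamma)$, the symmetry $R(-\gamma)=R(\gamma)$ only lets you restrict to $\gamma\geq 0$; it does \emph{not} let you drop the branch $B$ there. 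Bounding $A$ on $[0,\infty)$ alone is therefore not enough -- you must either also show $A(\gamma)\leq k/2$ on $[-k,0)$, or show $B(\gamma)\leq A(\gamma)$ for $\gamma\geq0$ (the paper does the latter in the proof of Proposition \ref{prop:linear.vs.RT} via the inequality $p(\gamma)\leq(\gamma+k)/(2k)$, which itself requires an argument). Your plan addresses neither.

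Second, and more importantly, the step that carries all the weight -- promoting the local maximum at $0$ to a global bound -- is left as a sketch whose proposed execution is doubtful. You propose to show directly, via the antiderivative $\int\Phi(t)\,dt=t\Phi(t)+\phi(t)$, that the first-order condition $A'(\gamma)=0$ has no root on $(0,\infty)$; you yourself flag this ``delicate algebra'' as the risky step. A brute-force sign check of that transcendental expression is unlikely to close without additional structure. The paper's Lemma \ref{lem:linear.embedding.upper.bound} instead computes $A''(\gamma)$ and rewrites its sign as that of $\frac{\gamma+k}{\sigma}\,\mathbb{E}\bigl[Z\mid\frac{\gamma-\rho^*}{\sigma}\leq Z\leq\frac{\gamma+\rho^*}{\sigma}\bigr]-2$, a quantity that is negative at $\gamma=0$ and increasing in $\gamma$ with the sign of the conditional mean matching that of $\gamma$; this shows $A$ is concave-then-convex on $[-k,\infty)$, which combined with $A(-k)=0$, $A\to0$ at $+\infty$, and $A'(0)=0$ forces the unique global maximum at $\gamma=0$ (and simultaneously closes the first gap, since it covers $[-k,0)$). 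You would need to supply an argument of this kind -- the local analysis plus the tail limit do not by themselves rule out a competing interior maximum exceeding $k/2$.
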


\begin{proof}
Lemma \ref{lem:large.3}(i) shows that $\rho^*\in(0,k)$ exists and is unique. Recall again from \cite{stoye2012minimax} that, if $k> \sqrt{\pi/2}\: \sigma$, the MMR value of the problem is $k/2$, where 
\begin{equation}\label{eq:one.dimensional.id.set.and.risk}
 I(\gamma) := \left [  -k + \gamma, \:   k +  \gamma  \right], \quad  R(d,\gamma,\gamma^{*}) :=  \gamma^*  \left(  \mathbf{1}\{\gamma^* \geq 0\} - \mathbb{E}_{\gamma}[d] \right).    
\end{equation}

Furthermore, by definition of the minimax problem, 
\[
\inf_{d\in\mathcal{D}}\underset{\gamma^{*}\in I(\gamma),\gamma\in\mathbb{R}}{\sup}R(d,\gamma,\gamma^{*})\leq \underset{\gamma^{*}\in I(\gamma),\gamma\in\mathbb{R}}{\sup}R(d^*_{\text{linear}},\gamma,\gamma^{*}).
\]
And Lemma \ref{lem:linear.embedding.upper.bound} shows that, when $k>\sqrt{\pi/2}\sigma$, 
\[ \underset{\gamma^{*}\in I(\gamma),\gamma\in\mathbb{R}}{\sup}R(d^*_{\text{linear}},\gamma,\gamma^{*}) = k/2\] 
and
\[ \sup_{\gamma^* \in I(0)} R(d^*_{\text{linear}},0,\gamma^{*}) = k/2\] 
Hence, we conclude that 
\[
\inf_{d\in\mathcal{D}}\underset{\gamma^{*}\in I(\gamma),\gamma\in\mathbb{R}}{\sup}R(d,\gamma,\gamma^{*})= \underset{\gamma^{*}\in I(\gamma),\gamma\in\mathbb{R}}{\sup}R(d^*_{\text{linear}},\gamma,\gamma^{*})=\frac{k}{2},
\]
and thus $d^*_{\text{linear}}$ is MMR optimal, and its worst-case regret is achieved at $\gamma=0$.
\end{proof}

\begin{lem}\label{lem:large.3}
Consider $\rho^*$ defined in Lemma \ref{lem:new.minimax}.
\begin{itemize}
    \item[(i)] $\rho^{*}\in(0,k)$ exists and is uniquely defined when $\frac{k}{\sigma}>\sqrt{\frac{\pi}{2}}$.
    \item[(ii)] The value of $\rho^{*}$ is strictly decreasing in $\sigma$. Moreover, $\rho^{*}\rightarrow k$ when $\sigma\rightarrow0$.
    \item[(iii)] The value of $\rho^{*}$ is strictly increasing in $k$.  
\end{itemize}
\end{lem}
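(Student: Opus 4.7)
The plan is to study the map $f(\rho;\sigma,k):=\rho/(2k)-1/2+\Phi(-\rho/\sigma)$ whose positive zero is $\rho^*$, and to read off all three parts from the shape of $f$ plus the implicit function theorem. First, I would observe the harmless but useful fact $f(0;\sigma,k)=0$, so the defining equation always has a trivial root at the origin; the claim is existence and uniqueness of a second root in $(0,k)$. Differentiating gives $\partial_\rho f=1/(2k)-\phi(\rho/\sigma)/\sigma$, which is strictly increasing on $[0,\infty)$ since $\phi(\rho/\sigma)$ is strictly decreasing there, so $f$ is strictly convex on $[0,\infty)$. The assumption $k/\sigma>\sqrt{\pi/2}$ is algebraically equivalent to $1/(2k)<\phi(0)/\sigma$, which gives $\partial_\rho f(0;\sigma,k)<0$; combined with $f(0)=0$, this forces $f$ to dip strictly below zero just to the right of the origin.

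For part (i), I would close the existence argument with $f(k;\sigma,k)=\Phi(-k/\sigma)>0$ and intermediate value. Uniqueness in $(0,k)$ is then immediate from strict convexity on $[0,\infty)$: any strictly convex function that starts at $0$ with negative slope and is later positive has exactly one positive zero. As a by-product, $\rho^*$ lies strictly to the right of the unique minimizer of $f$ on $[0,\infty)$, so $\partial_\rho f(\rho^*;\sigma,k)>0$; this positivity is what unlocks parts (ii) and (iii).

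For parts (ii) and (iii), I would apply the implicit function theorem. A direct computation yields $\partial_\sigma f=(\rho^*/\sigma^2)\phi(\rho^*/\sigma)>0$ and $\partial_k f=-\rho^*/(2k^2)<0$, both evaluated at the positive root $\rho^*$. Combining with the sign of $\partial_\rho f(\rho^*;\sigma,k)>0$ obtained above,
\[
\frac{d\rho^*}{d\sigma}=-\frac{\partial_\sigma f}{\partial_\rho f}<0,\qquad \frac{d\rho^*}{dk}=-\frac{\partial_k f}{\partial_\rho f}>0,
\]
giving the stated strict monotonicities. Finally, for $\rho^*\to k$ as $\sigma\downarrow 0$, I would use the monotonicity just proved: $\rho^*$ increases as $\sigma$ decreases and is bounded above by $k$, so $L:=\lim_{\sigma\downarrow 0}\rho^*$ exists and is in $(0,k]$ (strict positivity follows because $\rho^*$ is already positive at any reference $\sigma_0$ and only grows). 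Since $L>0$, we have $\rho^*/\sigma\to\infty$, hence $\Phi(-\rho^*/\sigma)\to 0$, and passing to the limit in $\rho^*=k(1-2\Phi(-\rho^*/\sigma))$ forces $L=k$.

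There is no real obstacle here; the whole lemma reduces to elementary convex analysis of $f$ plus implicit differentiation. The only mild care needed is in the $\sigma\downarrow 0$ limit, where one must rule out $L=0$ before invoking the decay of $\Phi(-\rho^*/\sigma)$, which the monotonicity in (ii) handles cleanly.
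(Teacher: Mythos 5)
Your proposal is correct, and it takes essentially the same route as the paper: elementary one-variable analysis of the defining equation for existence and uniqueness, followed by the implicit function theorem for the comparative statics. The only difference is the auxiliary function analyzed: the paper divides the equation through by $\rho/(2k)$ to obtain $\mathbf{g}(\rho)=1-\tfrac{k}{\rho}\bigl(1-2\Phi(-\rho/\sigma)\bigr)$, which it shows is globally strictly increasing in $\rho$ on $(0,\infty)$ (this removes the trivial root at $\rho=0$ and makes uniqueness immediate from monotonicity plus a sign change between $\rho\downarrow 0$ and $\rho=k$), whereas you keep $f$ itself and exploit its strict convexity together with $f(0)=0$ and $f'(0)<0$; both yield the same conclusions and, importantly, the same sign $\partial_\rho>0$ at the root that drives parts (ii) and (iii). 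Your handling of the $\sigma\downarrow 0$ limit --- monotone convergence of $\rho^*$ to some $L\in(0,k]$ and then ruling out degenerate behavior before passing to the limit in the fixed-point equation --- is in fact slightly more careful than the paper's, which simply substitutes $\sigma=0$ into the limiting equation.
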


\begin{proof}
Note 
\begin{equation*}
\left(\frac{\rho^{*}}{2\cdot k}\right) -\frac{1}{2} + \Phi\left(-\frac{\rho^{*}}{\sigma}\right) =0. 
\end{equation*} is equivalent to
\begin{equation*}
1-\frac{k}{\rho^*}\left(1-2\Phi\left(-\frac{\rho^*}{\sigma}\right)\right)=0. 
\end{equation*} 
Write $\mathbf{g}(\rho;k,\sigma)=1-\frac{k}{\rho}\left(1-2\Phi\left(-\frac{\rho}{\sigma}\right)\right)$.

To see (i), further write
\begin{align*}
\frac{\partial}{\partial\rho}\mathbf{g} & =\frac{k}{\rho^{2}}\left(1-2\Phi\left(-\frac{\rho}{\sigma}\right)\right)-\frac{2k}{\rho\sigma}\phi\left(-\frac{\rho}{\sigma}\right)\\
 & =\frac{k}{\rho^{2}}\left(1-2\left(\Phi\left(-\frac{\rho}{\sigma}\right)+\frac{\rho}{\sigma}\phi\left(-\frac{\rho}{\sigma}\right)\right)\right)\\
 & =\frac{k}{\rho^{2}}\left(1-2\left(\Phi\left(-\frac{\rho}{\sigma}\right)-\phi^{\prime}\left(-\frac{\rho}{\sigma}\right)\right)\right).
\end{align*}

Note $\Phi\left(x\right)<\frac{1}{2}$ and $\phi^{\prime}(x)>0$ for
all $x<0$. Thus, $\Phi\left(-\frac{\rho}{\sigma}\right)-\phi^{\prime}\left(-\frac{\rho}{\sigma}\right)<\frac{1}{2}$
for all $\rho>0$. It follows that $\frac{\partial}{\partial\rho}\mathbf{g}>0$, i.e. $\mathbf{g}$ is strictly increasing in $\rho$ for all $\rho>0$. Furthermore, note
\begin{align*}
\lim_{\rho\rightarrow0}\mathbf{g}(\rho,k,\sigma) & =1-\frac{k}{\sigma}\sqrt{\frac{2}{\pi}}=1-\frac{k}{\sigma}\frac{1}{C}\\
\mathbf{g}(k,k,\sigma) & =2\Phi\left(-\frac{k}{\sigma}\right).
\end{align*}
If $\frac{k}{\sigma}>C$, note $\lim\mathbf{g}_{\rho\rightarrow0}(\rho,k,\sigma)<0$, $\mathbf{g}(k,k,\sigma)>0$ and $\mathbf{g}(\cdot;k,\sigma)$ is continuous and strictly increasing. Thus, there exists a unique $\rho^*$ such that $\mathbf{g}(\rho^{*};k,\sigma)=0$. 

To see (ii), note first that $\frac{\partial}{\partial\sigma}\mathbf{g}=\frac{2k}{\sigma^{2}}\phi\left(-\frac{\rho}{\sigma}\right)>0$.
Thus, viewing $\rho^{*}$ as a function of $k$ and $\sigma$, we
can see 
\[
\frac{\partial\rho^{*}}{\partial\sigma}=-\frac{\frac{\partial}{\partial\sigma}\mathbf{g}(\rho^{*};k,\sigma)}{\frac{\partial}{\partial\rho^{*}}\mathbf{g}(\rho^{*};k,\sigma)}<0.
\]
Therefore, $\rho^{*}$ is strictly decreasing in $\sigma$. When $\sigma\rightarrow0$,
$\Phi\left(-\frac{\rho}{\sigma}\right)\rightarrow0$ for each fixed
$\rho>0$. Then, in the limit when $\sigma=0$, $\mathbf{g}(\rho^{*};k,0)=0$
is solved by setting $\rho^{*}=k$. 

To see (iii), note first that $\frac{\partial}{\partial k}\mathbf{g}(\rho^{*};k,\sigma)=-\left(1-2\Phi\left(-\frac{\rho^{*}}{\sigma}\right)\right)/\rho^{*}<0$. The remaining proof mimics that of statement (ii). 
\end{proof}

\begin{lem}\label{lem:linear.embedding.upper.bound}
Suppose $k>\sqrt{\frac{\pi}{2}}\sigma$. Then
\begin{equation}\label{eq:linear.embedding.upper.bound.1}
 \underset{\gamma^{*}\in I(\gamma),\gamma\in\mathbb{R}}{\sup}R(d^*_{\text{linear}},\gamma,\gamma^{*}) = k/2   
\end{equation} 
and
\begin{equation*}\label{eq:linear.embedding.upper.bound.2}
 \sup_{\gamma^* \in I(0)} R(d^*_{\text{linear}},0,\gamma^{*}) = k/2,  
\end{equation*}
where $I(\gamma)$ and $R(d,\gamma,\gamma^*)$ are defined in \eqref{eq:one.dimensional.id.set.and.risk}.
\end{lem}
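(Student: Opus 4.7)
The plan is to rewrite $G(\gamma) := \mathbb{E}_\gamma[d^*_{\text{linear}}(\widehat{\gamma})]$ in a tractable form. Since $d^*_{\text{linear}}$ is itself the c.d.f.\ of a $\mathrm{Unif}(-\rho^*, \rho^*)$ random variable, I would observe that $G(\gamma) = F_V(\gamma)$, where $V := U - \sigma Z$ with independent $U \sim \mathrm{Unif}(-\rho^*, \rho^*)$ and $Z \sim N(0,1)$. The point symmetry of $d^*_{\text{linear}}$ about $(0, 1/2)$ yields $G(-\gamma) = 1 - G(\gamma)$, and in particular $G(0) = 1/2$.

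The second equality then falls out immediately: at $\gamma = 0$ one has $I(0) = [-k, k]$, and $\sup_{\gamma^* \in [-k,k]} \gamma^*\bigl(\mathbf{1}\{\gamma^* \geq 0\} - 1/2\bigr) = k/2$, attained at $\gamma^* = \pm k$. For the first equality, I would carry out the inner supremum over $\gamma^* \in I(\gamma) = [\gamma - k, \gamma + k]$, splitting according to the sign of $\gamma^*$, to obtain $\sup_{\gamma^* \in I(\gamma)} R(d^*_{\text{linear}}, \gamma, \gamma^*) = \max\{f_+(\gamma), f_-(\gamma)\}$, where $f_+(\gamma) := (\gamma + k)^+ (1 - G(\gamma))$ and $f_-(\gamma) := (k - \gamma)^+ G(\gamma)$. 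Using $G(-\gamma) = 1 - G(\gamma)$, a direct check gives $f_-(\gamma) = f_+(-\gamma)$, so the claim reduces to showing $\sup_{\gamma \in \mathbb{R}} f_+(\gamma) = k/2$, and only $\gamma \geq -k$ is relevant.

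Two ingredients finish the proof. First, a direct computation at $\gamma = 0$ yields $G'(0) = f_V(0) = [\Phi(\rho^*/\sigma) - \Phi(-\rho^*/\sigma)]/(2\rho^*) = (1 - 2\Phi(-\rho^*/\sigma))/(2\rho^*)$; combined with the defining relation $\rho^* = k\bigl(1 - 2\Phi(-\rho^*/\sigma)\bigr)$ from Lemma \ref{lem:new.minimax}, this gives $G'(0) = 1/(2k)$, hence $f_+'(0) = (1 - G(0)) - k G'(0) = 1/2 - 1/2 = 0$, while $f_+(0) = k/2$. Second, I would invoke log-concavity to upgrade this critical point to the global maximum: since $f_V$ is the convolution of the log-concave uniform and Gaussian densities, Pr\'ekopa's theorem makes $f_V$ log-concave on $\mathbb{R}$, and hence the survival function $1 - F_V$ is log-concave as well by the standard Bagnoli--Bergstrom result. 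Combined with concavity of $\log(\gamma + k)$ on $(-k, \infty)$, this makes $\log f_+$ concave on $(-k, \infty)$, so $\gamma = 0$ is the unique maximizer of $f_+$ on this interval, delivering $\sup_\gamma f_+(\gamma) = k/2$.

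The main obstacle is precisely this last step: upgrading the local first-order condition at $\gamma = 0$ to a global bound on $f_+$ over $(-k, \infty)$. The log-concavity route gives a clean one-line resolution, whereas a direct monotonicity analysis of $f_+'$ would be considerably messier because of the interplay between the Gaussian survival function appearing in $1 - G$ and the affine factor $\gamma + k$.
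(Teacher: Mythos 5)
Your proposal is correct, and the reduction steps (restricting to $\gamma \geq -k$ by the symmetry $f_-(\gamma)=f_+(-\gamma)$, evaluating $f_+(0)=k/2$, and verifying the first-order condition $f_+'(0)=0$ from the defining fixed-point equation for $\rho^*$) coincide with the paper's, which works with the same function written as $g_{\text{linear}}(\gamma)=(\gamma+k)\int_0^1\Phi\bigl((2\rho^*x-\rho^*-\gamma)/\sigma\bigr)dx$ via integration by parts rather than your convolution representation $1-F_V(\gamma)$. Where you genuinely diverge is the step that upgrades the critical point at $\gamma=0$ to a global maximum. The paper computes $g_{\text{linear}}^{(2)}$ explicitly, rewrites its sign in terms of a truncated-normal conditional mean to show $g_{\text{linear}}$ is first concave then convex with the inflection at a strictly positive point, and combines this with the boundary behavior $g_{\text{linear}}(-k)=0$ and $g_{\text{linear}}(\gamma)\to 0$ as $\gamma\to\infty$ to conclude single-peakedness. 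You instead observe that $f_V$ is a convolution of log-concave densities, hence log-concave, so $1-F_V$ is log-concave by the Bagnoli--Bergstrom argument, and therefore $\log f_+$ is concave on $(-k,\infty)$; the first-order condition then immediately gives the global maximum (noting $1-F_V>0$ everywhere since $V$ has a Gaussian component). Your route is shorter and conceptually cleaner, trading the paper's self-contained calculus for two standard but external facts about log-concavity; the paper's explicit second-derivative analysis is messier but yields the finer ``first increasing, then decreasing'' shape information that it reuses in related arguments (e.g., in the proof of Proposition~\ref{thm:running.example}). Both establish the lemma.
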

\begin{proof}
We may write the left-hand side of \eqref{eq:linear.embedding.upper.bound.1} as
\begin{eqnarray*}\label{eq:pf.1}
  \underset{\gamma^{*}\in I(\gamma),\gamma\in\mathbb{R}}{\sup}R(d^*_{\text{linear}},\gamma,\gamma^{*}) =\underset{\gamma+k \geq0}{\sup}(\gamma+k)\left(1-\mathbb{E}_{\gamma}[d_{\text{linear}}^{*}]\right)= \underset{\gamma+k \geq0}{\sup}g_{\text{linear}}(\mu) \\
  g_{\text{linear}}(\gamma):=(\gamma+k)\int_{0}^{1}\Phi\left(\frac{2\rho^{*}x-\rho^{*}-\gamma}{\sigma}\right)dx,
\end{eqnarray*}
where the first equality follows from symmetry of the parameter space and the fact that $d^*_{\text{linear}}(-x)=1-d^*_{\text{linear}}(x)$ for all $x\in\mathbb{R}$, and the second equality follows by applying change-of-variable
twice:
\begin{align*}
&  ~\mathbb{E}_{\gamma}[d_{\text{linear}}^{*}] \\
 =&\int (d_{\text{linear}}^{*}(x))d\Phi\left(\frac{x-\gamma}{\sigma}\right)\\
  =&\int_{-\rho^{*}}^{\rho^{*}}\frac{x+\rho^{*}}{2\rho^{*}}d\Phi\left(\frac{x-\gamma}{\sigma}\right)+\int_{\rho^{*}}^{\infty}d\Phi\left(\frac{x-\gamma}{\sigma}\right)\\
  =&\left[\frac{x+\rho^{*}}{2\rho^{*}}\Phi\left(\frac{x-\gamma}{\sigma}\right)\right]_{-\rho^{*}}^{\rho^{*}}-\int_{-\rho^{*}}^{\rho^{*}}\Phi\left(\frac{2\rho^{*}\left(\frac{x+\rho^{*}}{2\rho^{*}}-\frac{1}{2}\right)-\gamma}{\sigma}\right)d\left(\frac{x+\rho^{*}}{2\rho^{*}}\right)
 +1-\Phi\left(\frac{\rho^{*}-\gamma}{\sigma}\right)\\
  =& ~1-\int_{0}^{1}\Phi\left(\frac{2\rho^{*}x-\rho^{*}-\gamma}{\sigma}\right)dx.
\end{align*}

Now, $\mathbb{E}_{0}[d^*_{\text{linear}}(\hat{\gamma})]=\frac{1}{2}$ and $g_{\text{linear}}(0)=\frac{k}{2}$ by construction. Below, we show that $g_{\text{linear}}(\gamma)$ is first increasing and then decreasing on $[-k,\infty)$ with unique maximum at $\gamma=0$, establishing the claim. To see this, take
first and second derivatives of $g_{\text{linear}}(\gamma)$:
\begin{eqnarray*}
g_{\text{linear}}^{(1)}(\gamma) & = & -\frac{\gamma+k}{\sigma}\int_{0}^{1}\phi\left(\frac{2\rho^{*}x-\rho^{*}-\gamma}{\sigma}\right)dx+\int_{0}^{1}\Phi\left(\frac{2\rho^{*}x-\rho^{*}-\gamma}{\sigma}\right)dx,\\
g_{\text{linear}}^{(2)}(\gamma) & = & \frac{\gamma+k}{\sigma^{2}}\int_{0}^{1}\phi^{(1)}\left(\frac{2\rho^{*}x-\rho^{*}-\gamma}{\sigma}\right)dx-\frac{2}{\sigma}\int_{0}^{1}\phi\left(\frac{2\rho^{*}x-\rho^{*}-\gamma}{\sigma}\right)dx\\
 & = & \frac{\gamma+k}{\sigma^{2}}\int_{0}^{1}\frac{\gamma+\rho^{*}-2\rho x}{\sigma}\phi\left(\frac{2\rho^{*}x-\rho^{*}-\gamma}{\sigma}\right)dx-\frac{2}{\sigma}\int_{0}^{1}\phi\left(\frac{2\rho^{*}x-\rho^{*}-\gamma}{\sigma}\right)dx\\
 & = & \frac{\gamma+k}{2\rho^{*}\sigma}\int_{\frac{-\gamma-\rho^{*}}{\sigma}}^{\frac{-\gamma+\rho^{*}}{\sigma}}-t\phi(t)dt-\frac{1}{\rho^{*}}\int_{\frac{-\gamma-\rho^{*}}{\sigma}}^{\frac{-\gamma+\rho^{*}}{\sigma}}\phi(t)dt\\
 & = & \frac{\gamma+k}{2\rho^{*}\sigma}\int_{\frac{\gamma-\rho^{*}}{\sigma}}^{\frac{\gamma+\rho^{*}}{\sigma}}t\phi(t)dt-\frac{1}{\rho^{*}}\int_{\frac{\gamma-\rho^{*}}{\sigma}}^{\frac{\gamma+\rho^{*}}{\sigma}}\phi(t)dt\\
 & = & \underset{:=A}{\underbrace{\frac{1}{2\rho^{*}}\int_{\frac{\gamma-\rho^{*}}{\sigma}}^{\frac{\gamma+\rho^{*}}{\sigma}}\phi(t)dt}}\left(\frac{\gamma+k}{\sigma}\frac{\int_{\frac{\gamma-\rho^{*}}{\sigma}}^{\frac{\gamma+\rho^{*}}{\sigma}}t\phi(t)dt}{\int_{\frac{\gamma-\rho^{*}}{\sigma}}^{\frac{\gamma+\rho^{*}}{\sigma}}\phi(t)dt}-2\right),
\end{eqnarray*}
where the second equality for $g_{\text{linear}}^{(2)}(\gamma)$ uses
that $\phi'(x)=-x\phi(x)$ for all $x\in\mathbb{R}$, the third one
follows from integration by change-of-variable, and the fourth equality
follows from change-of-variable again and $\phi(x)=\phi(-x)$ for
all $x\in\mathbb{R}$. As $A>0$, the sign of $g_{\text{linear}}^{(2)}(\gamma)$ is determined by 
\[
g_{\text{linear}}^{*}(\gamma):=\frac{\gamma+k}{\sigma}\frac{\int_{\frac{\gamma-\rho^{*}}{\sigma}}^{\frac{\gamma+\rho^{*}}{\sigma}}t\phi(t)dt}{\int_{\frac{\gamma-\rho^{*}}{\sigma}}^{\frac{\gamma+\rho^{*}}{\sigma}}\phi(t)dt}-2.
\]
Furthermore, we can write 
\[
g_{\text{linear}}^{*}(\gamma)=\frac{\gamma+k}{\sigma}\mathbb{E}\left[Z\mid\frac{\gamma-\rho^{*}}{\sigma}\leq Z\leq\frac{\gamma+\rho^{*}}{\sigma}\right]-2,
\]
where $\mathbb{E}\left(Z \mid a\leq Z\leq b\right)$ denotes the conditional expectation
of a standard normal random variable $Z$ conditional on $a\leq Z\leq b$.
We are only interested in $\gamma+k\geq0$. Also, note $\mathbb{E}\left[Z\mid\frac{\gamma-\rho^{*}}{\sigma}\leq Z\leq\frac{\gamma+\rho^{*}}{\sigma}\right]$
strictly increases in $\gamma$ and has the same sign as $\gamma$. Moreover,
note $\mathbb{E}\left[Z\mid-\frac{\rho^{*}}{\sigma}\leq Z\leq\frac{\rho^{*}}{\sigma}\right]=0$,
implying $g_{\text{linear}}^{*}(0)=-2$. Thus, we conclude $g_{\text{linear}}^{(2)}(\gamma)<0$
for all $\gamma$ below some strictly positive threshold and $g_{\text{linear}}^{(2)}(\gamma)>0$
for all larger $\gamma$. That is, $g_{\text{linear}}(\gamma)$ is first
concave and then convex, with the inflexion occurring at a strictly
positive point. 

Since $g_\text{linear}(\gamma)\geq0$ when $\gamma\geq-k$ and it can also be verified that $g_{\text{linear}}(-k)=0$ and
$\lim_{\gamma\to\infty}g_{\text{linear}}(\gamma)=0$, we conclude that $g_{\text{linear}}(\cdot)$
is first strictly increasing and then strictly decreasing, with a unique
maximum. Furthermore, note 
\begin{align*}
g_{\text{linear}}^{(1)}(0) & =-\frac{k}{\sigma}\int_{0}^{1}\phi\left(\frac{2\rho^{*}x-\rho^{*}}{\sigma}\right)dx+\int_{0}^{1}\Phi\left(\frac{2\rho^{*}x-\rho^{*}}{\sigma}\right)dx\\
 & =-\frac{k}{2\rho^{*}}\int_{-\frac{\rho^{*}}{\sigma}}^{\frac{\rho^{*}}{\sigma}}\phi\left(t\right)dt+\frac{\sigma}{2\rho^{*}}\int_{-\frac{\rho^{*}}{\sigma}}^{\frac{\rho^{*}}{\sigma}}\Phi\left(t\right)dt\\
 & =-\frac{k}{2\rho^{*}}\left(\Phi\left(\frac{\rho^{*}}{\sigma}\right)-\Phi\left(-\frac{\rho^{*}}{\sigma}\right)\right)+\frac{1}{2}\\
 & =\frac{1}{2}-\frac{k}{2\rho^{*}}\left(1-2\Phi\left(-\frac{\rho^{*}}{\sigma}\right)\right)\\
 & =0,
\end{align*}
where the second equality applies change-of-variable and the last
equality follows from the definition of $\rho^{*}$, which exists and is unique when $k>C\sigma$ by Lemma \ref{lem:large.3}. Thus, $g_{\text{linear}}(\cdotp)$
has a unique maximization point in $[-k,\infty)$ at $\gamma=0$. 
\end{proof}

\begin{lem}\label{lem:erm}
Under assumptions made in Theorem \ref{thm:main.regret.1},    \[d_{0}:=d_{0}((w^*)^\top Y ):=\mathbf{1}\{(w^*)^\top Y \geq 0\}\] is not MMR optimal.
\end{lem}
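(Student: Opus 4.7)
My strategy is to exhibit a lower bound on the worst-case expected regret of $d_0$ that strictly exceeds the minimax-regret value $\overline{k}_{w^*}(0)/2$ established in Step~1 of the proof of Theorem~\ref{thm:main.regret.1}(i). Setting $\sigma^2 := (w^*)^\top \Sigma w^*$, I have $\mathbb{E}_{m(\theta)}[d_0] = \Phi((w^*)^\top m(\theta)/\sigma)$ for every $\theta$. Restricting attention to $\theta$ with $U(\theta) \geq 0$ and grouping parameters by the scalar $\gamma := (w^*)^\top m(\theta)$ would give
\begin{equation*}
\sup_{\theta \in \Theta} R(d_0, \theta) \;\geq\; \sup_{\gamma \in \Gamma_{w^*}:\, \overline{k}_{w^*}(\gamma) \geq 0} h(\gamma), \qquad h(\gamma) := \overline{k}_{w^*}(\gamma)\, \Phi\!\bigl(-\gamma/\sigma\bigr).
\end{equation*}
Evaluating at $\gamma = 0$ yields $h(0) = \overline{k}_{w^*}(0)/2$, so it suffices to produce some $\gamma \in \Gamma_{w^*}$ for which $h(\gamma) > h(0)$.

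The key input is concavity of $\overline{k}_{w^*}$ (Lemma~\ref{lem:ex.1.3}) together with $0 \in \operatorname{int}(\Gamma_{w^*})$ (Lemma~\ref{lem:ex.1.4}). I would then invoke the standard convex-analysis fact that a concave function on an open interval admits one-sided derivatives at interior points, with the superdifferential equal to the closed interval between the right and left derivatives; in particular, $s_{w^*}(0)$, defined as the largest element of $\partial \overline{k}_{w^*}(0)$, coincides with the left derivative of $\overline{k}_{w^*}$ at $0$. This gives the left-sided Taylor expansion $\overline{k}_{w^*}(\gamma) = \overline{k}_{w^*}(0) + s_{w^*}(0)\gamma + o(\gamma)$ as $\gamma \uparrow 0$, which combined with $\Phi(-\gamma/\sigma) = 1/2 - \phi(0)\gamma/\sigma + o(\gamma)$ produces
\begin{equation*}
h(\gamma) \;=\; \frac{\overline{k}_{w^*}(0)}{2} \;+\; \left[\frac{s_{w^*}(0)}{2} - \frac{\overline{k}_{w^*}(0)\,\phi(0)}{\sigma}\right]\gamma \;+\; o(\gamma), \qquad \gamma \uparrow 0.
\end{equation*}

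Next, I would observe that the bracketed coefficient is strictly negative if and only if $s_{w^*}(0) < 2\overline{k}_{w^*}(0)\phi(0)/\sigma$, which, using $2\phi(0) = \sqrt{2/\pi}$, is equivalent to $\overline{k}_{w^*}(0) > \sqrt{\pi/2}\,\sigma\, s_{w^*}(0)$. This is precisely the inequality delivered by Step~5 of the proof of Theorem~\ref{thm:main.regret.1}(i) under the present hypotheses (noting also that $s_{w^*}(0) > 0$ there). Hence, for all sufficiently small $\gamma < 0$, $h(\gamma) > h(0)$, so the worst-case regret of $d_0$ strictly exceeds $\overline{k}_{w^*}(0)/2$, and $d_0$ cannot be MMR optimal.

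The main obstacle is cleanly justifying the left-sided Taylor expansion of $\overline{k}_{w^*}$ when only concavity (not differentiability) is available. Once the identification of $s_{w^*}(0)$ as the left derivative is granted, the remainder of the argument is algebraic. A slightly more elementary alternative, should one wish to avoid invoking one-sided differentiability directly, would be to use the supergradient inequality $\overline{k}_{w^*}(\gamma) \leq \overline{k}_{w^*}(0) + s_{w^*}(0)\gamma$ (valid globally) combined with an explicit choice of small $\gamma < 0$ and an almost-maximizing $\theta$ on the level set $\{(w^*)^\top m(\theta) = \gamma\}$, verifying by direct computation that the associated regret strictly exceeds $\overline{k}_{w^*}(0)/2$.
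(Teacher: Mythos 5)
Your proposal is correct and follows essentially the same route as the paper: both arguments reduce the worst-case regret of $d_0$ to the function $\gamma \mapsto \overline{k}_{w^*}(\gamma)\,\Phi(-\gamma/\sigma)$, note that it equals the minimax value $\overline{k}_{w^*}(0)/2$ at $\gamma=0$, and then show that $\gamma=0$ cannot be a local maximum because the relevant first-order quantity, $\tfrac{1}{2}s_{w^*}(0)-\overline{k}_{w^*}(0)\phi(0)/\sigma$, is strictly negative under the condition $\overline{k}_{w^*}(0)>\sqrt{\pi/2}\,\sigma\,s_{w^*}(0)$ delivered by Step 5 of the proof of Theorem \ref{thm:main.regret.1}(i). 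The one genuine difference is technical: the paper carries out the local analysis with Clarke's generalized gradient and its chain rule (which requires verifying Lipschitzness and regularity of $\overline{k}_{w^*}$ and $\Phi$ near $0$), whereas you use the elementary fact that a concave function on an open interval has one-sided derivatives at interior points, with the largest supergradient equal to the left derivative, and then perform a left-sided first-order expansion directly. Your version is a legitimate simplification that avoids the nonsmooth-calculus machinery; the sign check is right ($\gamma<0$ small makes the negative coefficient contribute positively, so $h(\gamma)>h(0)$), and the needed facts ($0\in\operatorname{int}(\Gamma_{w^*})$, concavity of $\overline{k}_{w^*}$, $s_{w^*}(0)>0$) are exactly those supplied by Lemmas \ref{lem:ex.1.3} and \ref{lem:ex.1.4} and Step 5.
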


\begin{proof}
By Step 1 in the proof of part (i) of Theorem \ref{thm:main.regret.1}, we know the MMR value of problem \eqref{eq:MMR.value} equals $\overline{k}_{w^{*}}(0)/2$. In contrast, we will show that
\[
R_{w^*,0}^{*}:=\sup_{\theta\in\Theta}U(\theta)\left(\mathbf{1}\{U(\theta)\geq0\}-\mathbb{E}_{m(\theta)}[d_{0}((w^{*})^{\top}Y)]\right)> \overline{k}_{w^{*}}(0)/2.
\]
Write
\begin{align*}
R_{w^*,0}^{*}&=\sup_{\gamma\in\Gamma_{w^{*}}}\left(\sup_{U^{*}\in(-\overline{k}_{w^{*}}(-\gamma),\:\overline{k}_{w^{*}}(\gamma))}U^{*}\left(\mathbf{1}\{U^{*}\geq0\}-\mathbb{E}_{\gamma}[d_{0}((w^{*})^{\top}Y)]\right)\right)\\
&=  \sup_{\gamma\in\Gamma_{w^{*}},\overline{k}_{w^{*}}(\gamma)>0}\overline{k}_{w^{*}}(\gamma)\left(1-\mathbb{E}_{\gamma}[d_{0}((w^{*})^{\top}Y)]\right)\\
&= \sup_{\gamma\in\Gamma_{w^{*}},\overline{k}_{w^{*}}(\gamma)>0}\overline{k}_{w^{*}}(\gamma)\Phi \left(-\frac{\gamma}{\sigma} \right),
\end{align*}
using first centrosymmetry of $\Theta$ and then $(w^{*})^{\top}Y\sim N(\gamma,\sigma^2)$, where $\sigma^2 :=(w^*)^\top\Sigma w^*$. Write $g(\gamma) : =\overline{k}_{w^{*}}(\gamma)\Phi(-\frac{\gamma}{\sigma})$ for $\gamma \in \Gamma_{w^*}$ such that $\overline{k}_{w^{*}}(\gamma)>0$. By definition
\[ g(0) = \overline{k}_{w^*}(0) / 2.  \]
Let $\partial g(\cdot)$ be the generalized gradient of $g(\cdot)$. In the following, we show that $0\notin\partial g(0)$. By \citet[][Proposition 2.3.2]{clarke1990optimization}, $0$ is then not a local maximum or minimum; hence, $R^*_{w^*,0}>\overline{k}_{w^*}(0) / 2$.

Note $\Phi(\cdot)$ is strictly differentiable and thus Lipschitz near $0$ \cite[][Proposition 2.2.4]{clarke1990optimization}. Also, as $\overline{k}_{w^*}(0)$ is concave and bounded from below near 0, $\overline{k}_{w^*}(\cdot)$ must be  Lipschitz near $0$ \cite[][Proposition 2.2.6]{clarke1990optimization}. Moreover, both $\overline{k}_{w^*}$ and $\Phi$ are regular at $0$ \cite[][Proposition 2.3.6]{clarke1990optimization} as well as positive. By \citet[][Proposition 2.3.13]{clarke1990optimization}, the (appropriately generalized) chain rule can be applied to $g$ to characterize $\partial g(0)$.

As $\Phi$ is strictly differentiable, its generalized gradient coincides with the unique derivative \cite[][Proposition 2.2.4]{clarke1990optimization}). As $\overline{k}_{w^*}$ is concave and Lipschitz near 0, its generalized gradient coincides with its superdifferential \cite[][Proposition 2.2.7]{clarke1990optimization}). Hence, let $\tilde{s}_{w^*}(0)$ be a supergradient of $\overline{k}_{w^*}(0)$. We may calculate the generalized gradient of $g(\gamma)$ at $\gamma=0$ as
\begin{align*}
&\frac{\tilde{s}_{w^*}(0)}{2}-\frac{{}\overline{k}_{w^{*}}(0)\phi(0)}{\sigma}\\
=&\frac{\phi(0)}{\sigma}\left(\sqrt{\frac{\pi}{2}} \sigma\cdot \tilde{s}_{w^*}(0)-\overline{k}_{w^*}(0)\right)\\
\leq& \frac{\phi(0)}{\sigma}\left(\sqrt{\frac{\pi}{2}} \sigma\cdot s_{w^*}(0)-\overline{k}_{w^*}(0)\right)<0,
\end{align*}

where the first inequality follows as $s_{w^*}(0)$ is the largest supergradient so  $s_{w^*}(0)\geq \tilde{s}_{w^*}(0)$, and the second inequality follows from noting $\overline{I}(\mathbf{0})=\overline{k}_w^*(0)$ under stated assumptions (by Step 1 in the proof of Theorem \ref{thm:main.regret.1}(i)) and by picking $\overline{I}(\mathbf{0})$ large enough so that $\overline{I}(\mathbf{0}) > \sqrt{\pi/2} \cdot \sigma\cdot s_{w^*}(0)$. Thus, we have shown that $\partial g(0)<0$, and therefore $0$ is not a local maximum or minimum.
\end{proof}

\subsection{Proof of Proposition \ref{thm:running.example}}\label{sec:proof.thm.running}

\textbf{Statements (i)-(ii)}. In the running example, the expected regret of a rule $d(\cdot)$ can be written as 
\[
R(d,\mu,\mu_{0})=\mu_{0}\left(\mathbf{1}\{\mu_{0}\geq0\}-\mathbb{E}_{\mu}[d(Y)]\right),\quad\mu\in M,\mu_{0}\in I(\mu).
\]
where $Y\sim N(\mu,\Sigma)$; for future use, we state its likelihood
\[
f(y\mid\mu)=\frac{1}{\sqrt{(2\pi)^{n}\Pi_{j=1}^{n}\sigma^2_{j}}}\exp\left(-\frac{1}{2}\sum_{i=1}^{n}\frac{\left(y_{i}-\mu_{i}\right)^{2}}{\sigma_{i}^{2}}\right).
\]

Consider the class of decision rules (parameterized by scalar $m_0\geq C\Vert x_{1}-x_{0}\Vert$)
\begin{eqnarray*}
    d_{m_0}& :=& \mathbf{1}\{w_{m_0}^\top Y \geq 0\} \\
    w_{m_0}^\top &:= & \left(1, \frac{\max \{m_0 - C\left\Vert x_2-x_0 \right\Vert,0\}/\sigma_2^2}{(m_0 - C\left\Vert x_1-x_0 \right\Vert)/\sigma_1^2},\ldots,\frac{\max \{m_0 - C\left\Vert x_n-x_0 \right\Vert,0\}/\sigma_n^2}{(m_0 - C\left\Vert x_1-x_0 \right\Vert)/\sigma_1^2}\right),
\end{eqnarray*}
with the understanding that, for $m_0= C\Vert x_{1}-x_{0}\Vert$, we have $w_{m_0}^\top=(1,0,\ldots,0)$. Consider also the class of priors $\pi_{m_0}$ that randomize evenly over $\left\{\left(\mu_{0},\mu^{\top}\right)^{\top},\left(-\mu_0,-\mu^{\top}\right)^{\top}\right\}$, where 
\begin{eqnarray}
\mu_{0} & = & m_0\nonumber \\
\mu_{j} & = & \max\left\{ m_0-C\left\Vert x_{j}-x_{0}\right\Vert ,0\right\} ,~~~j=1,\ldots,n\label{eq:define mu}
\end{eqnarray}
for some $m_0\geq  C\Vert x_{1}-x_{0}\Vert$. We will show that i) for any prior $\pi_{m_0}$, rule $d_{m_0}$ is a corresponding Bayes rule, uniquely so if $m_0>C\Vert x_1-x_0 \Vert$, ii) for any decision rule $d_{m_0}$, a prior $\pi_{\tilde{m}_0}$ (note $\tilde{m}_0 \neq m_0$ in general) is least favorable, and finally that iii) the resulting best-response mapping has a fixed point $m_0^*$. This fixed point defines the MMR rule from the proposition, which is furthermore unique whenever it is uniquely Bayes against $\pi_{m_0^*}$.\footnote{Thus, we use the game theoretic characterization of maximin-type decision rules (e.g., \citet[][Section 5]{berger1985statistical}).}

Regarding step i), if $m_0>C\Vert x_1-x_0 \Vert$, the unique Bayes response to $\pi_{m_0}$ equals
\begin{eqnarray*}
 &&   \mathbf{1}\left\{ \mathbb{E}[\mu_{0}\mid Y]\geq0\right\} \\
 &=& \mathbf{1}\left\{ f(Y|\mu)-f(Y|-\mu)\geq0\right\} \\
 &=& d_{m_0}(Y), 
\end{eqnarray*}
where the last step uses familiar normal likelihood algebra.\footnote{This may be easier to see upon multiplying $w_{m_0}$ through by $m_0-C\Vert x_1-x_0 \Vert)/\sigma_1^2$. Our notation is meant to clarify continuity and convergence to $(1,0,\ldots,0)$.} Any decision rule is Bayes against $\pi_{m_0}$ if $m_0=C\Vert x_1-x_0 \Vert$.

Regarding ii), observe that expected regret of $d_{m_0}$ depends on $\theta$ only through $(\mu,\mu_0)$ and that maximizing it amounts to solving
\begin{align*}
&\sup_{\mu\in M,\mu_{0}\in I(\mu)} \mu_{0}\left(\mathbf{1}\{\mu_{0}\geq0\}-\Phi\left(\frac{w_{m_{0}}^{\top}\mu}{\sqrt{w_{m_{0}}^{\top}\Sigma w_{m_{0}}}}\right)\right)\\
&=  \sup_{\mu\in M,\mu_{0}\in I(\mu),\mu_{0}\geq0}\mu_{0}\Phi\left(-\frac{w_{m_{0}}^{\top}\mu}{\sqrt{w_{m_{0}}^{\top}\Sigma w_{m_{0}}}}\right),
\end{align*}
where we used centrosymmetry of $\Theta$ and where 
\begin{align*}
M & =\left\{ \mu\in\mathbb{R}^{n}:\left|\mu_{i}-\mu_{j}\right|\leq C\left\Vert x_{i}-x_{j}\right\Vert ,i,j=1\ldots n,\text{\ensuremath{}}i\neq j\right\} .\\
I(\mu) & =\left\{ u\in\mathbb{R}:\left\vert \mu_{i}-u\right\vert \leq C\left\Vert x_{i}-x_{0}\right\Vert ,~~i=1,\ldots.n\right\} .
\end{align*}
Let $j_{m_{0}}^{*}$ be the highest index $j$ for which $w_{m_{0},j}$
is not $0$. Then, $\mu_{0}\Phi\left(-\frac{w_{m_{0}}^{\top}\mu}{\sqrt{w_{m_{0}}^{\top}\Sigma w_{m_{0}}}}\right)$
does not depend on $(\mu_{j_{m_{0}}^{*}+1},\ldots,\mu_{n})$ and decreases
in $(\mu_{1},\ldots,\mu_{j_{m_{0}}^{*}})$. It follows that some prior $\pi_{\tilde{m}_0}$ is least favorable, where furthermore $\tilde{m}_0$ is the optimal argument $\mu_0$ in
\begin{gather*}
\sup_{\mu\in M,\mu_{0}\in I(\mu),\mu_{0}\geq0}\mu_{0}\Phi\left(-\frac{w_{m_{0}}^{\top}\mu}{\sqrt{w_{m_{0}}^{\top}\Sigma w_{m_{0}}}}\right)  =\sup_{\mu_{0}\geq0}g(\mu_{0},m_{0}), \\
g(\mu_{0},m_{0})=\mu_{0}\Phi\left(-\frac{\sum_{j=1}^{n}\frac{\max\left\{ m_{0}-C\left\Vert x_{j}-x_{0}\right\Vert ,0\right\} \left(\mu_{0}-C\left\Vert x_{j}-x_{0}\right\Vert \right)}{\sigma_{j}^{2}}}{\sqrt{\sum_{j=1}^{n}\frac{\max^{2}\left\{ m_{0}-C\left\Vert x_{j}-x_{0}\right\Vert ,0\right\} }{\sigma_{j}^{2}}}}\right).
\end{gather*}
Regarding iii), the best-response mapping $\psi: \left[C\left\Vert x_{1}-x_{0}\right\Vert,\infty\right)\rightrightarrows\left[0,\infty\right)$ defined by\footnote{The mapping is in fact a function, but establishing that would be unnecessary work.}
\begin{equation}
\psi(m_0):=\arg\sup_{\mu_0\geq0}g(\mu_{0},m_{0})   \label{eq:define_best_response}
\end{equation}
has a fixed point $m_0^* \in \psi(m_0^*)$. To see this, first compactify the domain of $\mu_0$ in the above definition by noting that $\tilde{m}_0$ can be universally bounded from above. This is because, for any $m_0$ under consideration, one has $g(C\Vert x_1-x_0\Vert,m_0)\geq C\Vert x_1-x_0\Vert/2$ but also
\begin{eqnarray*}
    g(\mu_0,m_0) =  \mu_0 (1-\Pr(w_{m_0}^\top Y \geq 0)) \leq \mu_0 (1-\Pr(Y \geq \mathbf{0})) = \mu_0 \left(1-\prod_{j=1}^n \Phi(\mu_j/\sigma_j)\right).
\end{eqnarray*}
Using \eqref{eq:define mu}, this upper bound is seen to vanish as $m_0 \to \infty$. Hence, it is w.l.o.g. to change the constraint set in \eqref{eq:define_best_response} to $0\leq \mu_0 \leq \overline{\mu}_0$ for $\overline{\mu}_0$ large enough (but independent of $m_0$). Given this compactification, continuity of $g(\cdot)$ implies nonemptiness and upper hemicontinuity of $\psi(\cdot)$. Next, by algebra resembling Proposition 7 in \citet[][see also Lemma \ref{lem:linear.embedding.upper.bound} above]{stoye2012minimax}, for any fixed $m_0$ the function $g(\mu_{0},m_{0})$ is first concave then convex in $\mu_{0}$ and converges to $0$ [$-\infty$] as $\mu_{0}\to\infty$ [$\mu_{0}\to-\infty$]. Hence, $\psi(\cdot)$ is interval-valued. These observations jointly imply that the graph of $\psi(\cdot)$ is path-connected. We show below that (with slight abuse of notation for set-valued mappings) $\psi(C\left\Vert x_{1}-x_{0}\right\Vert) \geq C\left\Vert x_{1}-x_{0}\right\Vert$, and we already know that $\psi(m_0)<m_0$ for $m_0>\overline{\mu}_0$. This establishes existence of $m_0^*$.

We next show that $\psi(C\Vert x_{1}-x_{0}\Vert) \geq C\Vert x_{1}-x_{0}\Vert$, strictly so if $C \Vert x_1 - x_0 \Vert < \sqrt{\pi / 2} \cdot \sigma_1$. This is because for $m_0=C\Vert x_{1}-x_{0}\Vert$, we have
\begin{eqnarray*}
g(\mu_0,m_0) &= & \mu_0\Phi\left(\frac{C\left\Vert x_{1}-x_{0}\right\Vert-\mu_0}{\sigma_1}\right) \\
\implies \left. \frac{\partial g(\mu_{0},m_{0})}{\partial\mu_{0}} \right\Vert_{\mu_0=C\left\Vert x_{1}-x_{0}\right\Vert} &=& - \frac{C\left\Vert x_{1}-x_{0}\right\Vert \phi(0)}{\sigma_1} + \frac{1}{2}.
\end{eqnarray*}
After substituting in for $\phi(0)$ and simplifying, the above partial derivative is seen to have the same sign as $\sqrt{\pi/2}\cdot \sigma_1-C\Vert x_{1}-x_{0}\Vert$. This establishes the claim and also proves statement (ii) because the fixed point $m_0^*= C\Vert x_{1}-x_{0}\Vert$ has been discovered for that statement's case.

This concludes the proof. For ease of computation, we note that, after tedious algebra, $m_0^*$ can be uniquely characterized by
\begin{equation}\label{eq:small.discrepancy.m.star}
\frac{\Phi\left(-\sqrt{\sum_{j=1}^{n}\frac{\max^{2}\left\{ m_{0}^*-C\left\Vert x_{j}-x_{0}\right\Vert ,0\right\} }{\sigma_{j}^{2}}}\right)}{\phi\left(-\sqrt{\sum_{j=1}^{n}\frac{\max^{2}\left\{ m_{0}^*-C\left\Vert x_{j}-x_{0}\right\Vert ,0\right\} }{\sigma_{j}^{2}}}\right)}=m_0^*\frac{\sum_{j=1}^{n}\frac{\max\left\{ m_{0}^*-C\left\Vert x_{j}-x_{0}\right\Vert ,0\right\} }{\sigma_{j}^{2}}}{\sqrt{\sum_{j=1}^{n}\frac{\max^{2}\left\{ m_{0}^*-C\left\Vert x_{j}-x_{0}\right\Vert ,0\right\} }{\sigma_{j}^{2}}}}.  
\end{equation}

\textbf{Statement (iii)}. 
Below, we show $d^{*}_{\text{RT}}((w^*)^{\top}Y)=\Phi\bigl((w^*)^{\top}Y/\tilde{\sigma})$, where $\tilde{\sigma} = \sqrt{ 2C^2 \|x_1-x_0\|^2/\pi - \sigma_1^2 }$ and $w^*=(1,0,\ldots,0)^{\top}$ is MMR optimal and satisfies \eqref{eq:yata.property.1} and \eqref{eq:yata.property.2}. Therefore, results in Theorem \ref{thm:main.regret.1} apply. In particular, Lemma \ref{lem:linear.embedding} implies that $d^{*}_\text{linear}$ defined in \eqref{eq:linear.rule} with $\rho$ such that  $\frac{C\left\Vert x_1-x_0 \right\Vert-\rho^*}{2 C\left\Vert x_1-x_0 \right\Vert} = \Phi\left(-\rho^*/\sigma_1\right)$, as well as any convex combination of $d^*_{\text{RT}}$ and $d^{*}_{\text{linear}}$, are also MMR optimal. Let
\[\mathbf{R}:=\min_{d\in\mathcal{D}_n}\sup_{\theta\in\Theta} U(\theta) \left(  \mathbf{1}\{U(\theta) \geq 0\} - \mathbb{E}_{m(\theta)}[d(Y)] \right)\]
be the minimax value of the problem in the running example.

{\scshape Step 1:} We show $\mathbf{R}\leq C\|x_1-x_0\|/2$ because this bound is attained by $d_{\text{RT}}^*(\cdot)$. To see this, use Step 2 of the proof of Theorem \ref{thm:main.regret.1} to write
\begin{eqnarray}
\mathbf{R} &\leq &\sup_{\theta\in\Theta} \theta_0 \left(  \mathbf{1}\{\theta_0 \geq 0\} - \mathbb{E}_{m(\theta)}[d^*_{\text{RT}}((w^*)^{\top}Y)] \right) \notag \\
&= &\sup_{\gamma\in \mathbb{R}}\left(\sup_{U^{*}\in \left [ - \overline{k}_{w^*}(-\gamma), \:   \overline{k}_{w^*}(\gamma)  \right]}U^{*}\left(\mathbf{1}\{U^{*}\geq0\}-\mathbb{E}_{\gamma}[d^*_{\text{RT}}((w^{*})^{\top}Y)]\right)\right),\label{eq:running.example.mmr.2}
\end{eqnarray}
where $w^*=(1,0,\ldots,0)^\top$ and $\overline{k}_{w^*}(\gamma)$ is defined in \eqref{eq:isu} and calculated as
\begin{align}
\sup\quad &\theta_0\notag\\
\text{s.t.}\quad&  |\theta_i-\theta_j| \leq C \| x_i - x_j \|,\quad i,j=0,...,n,\label{eq:running.example.linear.program}\\
&\theta_1=\gamma.\notag
\end{align}
As $\|x_1-x_0\| \leq \|x_j-x_0\|$ for all $j=1,...,n$, the linear program \eqref{eq:running.example.linear.program} admits a simple solution $\overline{k}_{w^*}(\gamma)=\gamma+C\|x_{0}-x_{1}\|$. Therefore, \eqref{eq:running.example.mmr.2} can be further written as  \begin{equation}\label{eq:running.example.mmr.3}
\sup_{\gamma\in \mathbb{R}}\left(\sup_{U^{*}\in \left [ \gamma-C \left\Vert x_0-x_1 \right\Vert, \:   \gamma+C\left\Vert x_0-x_1 \right\Vert \right]}U^{*}\left(\mathbf{1}\{U^{*}\geq0\}-\mathbb{E}_{\gamma}[d^*_{\text{RT}}((w^*)^{\top}Y)]\right)\right),
\end{equation}
where $(w^*)^{\top}Y\sim N(\gamma, \sigma_{1}^2)$. Applying \cite{stoye2012minimax} with $k=C\|x_0-x_1\|$ and $\sigma=\sigma_1$, we conclude that when $C \| x_1 - x_0 \| > \sqrt{\pi / 2} \cdot \sigma_1$, $d^*_{\text{RT}}((w^*)^{\top}Y)$ solves the minimax problem \[
\min_{d\in\mathcal{D}}\sup_{\gamma\in \mathbb{R}}\left(\sup_{U^{*}\in \left [ \gamma-C\|x_{0}-x_{1}\|, \:   \gamma+C\|x_{0}-x_{1}\| \right]}U^{*}\left(\mathbf{1}\{U^{*}\geq0\}-\mathbb{E}_{\gamma}[d((w^*)^{\top}Y)]\right)\right)
\]
and the value of \eqref{eq:running.example.mmr.3} equals $C\|x_0-x_1\|/2$.

{\scshape Step 2:} $\mathbf{R}\geq C\|x_1-x_0\|/2$ because this value is attained by setting $(\theta_1,\ldots,\theta_n)=\mathbf{0}$:
\begin{align}
\mathbf{R}&\geq \min_{d\in\mathcal{D}_n}\sup_{\theta\in\Theta:\theta=(\theta_0,\mathbf{0})} \theta_0 \left(  \mathbf{1}\{\theta_0 \geq 0\} - \mathbb{E}_\mathbf{0}[d(Y)] \right)\notag \\
&=\min_{d\in\mathcal{D}_n}\max\left\{\overline{I}(\mathbf{0})   \left(  1- \mathbb{E}_{0}[d(Y)] \right), \overline{I}(\mathbf{0})  \mathbb{E}_\mathbf{0}[d(Y)]\right\}\notag,
\end{align}
where the last line used $\underline{I}(\mathbf{0})=-\overline{I}(\mathbf{0})$. This minimum is attained by any rule with $\mathbb{E}_\mathbf{0}[d(Y)]=1/2$, and its value equals $\overline{I}(\textbf{0})/2=C\Vert x_1-x_0\Vert/2$.

As $\mathbb{E}_\mathbf{0}[d_{\text{RT}}^*((w^*)^\top Y)]=1/2$, the last step above also verifies $\eqref{eq:yata.property.1}$. Optimality of $d^*_{\text{RT}}$ in Step 1 verified \eqref{eq:yata.property.2}.

\textbf{Statement (iv)}.
Using that $x_1$ is a unique nearest neighbor, we have that, for $\mu$ close to $\mathbf{0}$, $\overline{I}(\mu) = \mu_1 + C \left\Vert x_1-x_0 \right\Vert$. This is clearly differentiable at $\mu=\bm{0}$ and Theorem \ref{thm:main.regret.1}(iii) therefore applies.

\subsection{Proof of Proposition \ref{prop:linear.vs.RT}}\label{sec:proof.prop.linear.vs.RT}
Let $k=C\left\Vert x_{1}-x_{0}\right\Vert $, $\sigma=\sigma_{1}$, and recall $w^{*}=(1,0,\ldots0)^{\top}$. Denote by $\pi_{\text{RT}}(\gamma):=\mathbb{E}_{\gamma}\left[\delta_{\text{RT}}^{*}\left(\left(w^{*}\right)^{\top}Y\right)\right]$ and $\pi_{\text{linear}}(\gamma):=\mathbb{E}_{\gamma}\left[\delta_{\text{linear}}^{*}\left(\left(w^{*}\right)^{\top}Y\right)\right]$ the expected adoption probabilities of $\delta_{\text{RT}}^{*}$ and $\delta_{\text{linear}}^{*}$, respectively, where $\left(w^{*}\right)^{\top}Y\sim N(\gamma,\sigma)$ for $\gamma\in\mathbb{R}$. Algebra shows:
\begin{align*}
\pi_{\text{RT}}(\gamma) & =\Phi\left(\frac{\gamma}{\frac{k}{\sqrt{\frac{\pi}{2}}}}\right), & \pi_{\text{linear}}(\gamma)=1-\int_{0}^{1}\Phi\left(\frac{2\rho^{*}x-\rho^{*}-\gamma}{\sigma}\right)dx,\\
\pi^{(1)}_{\text{RT}}(\gamma) & =\frac{1}{\frac{k}{\sqrt{\frac{\pi}{2}}}}\phi\left(\frac{\gamma}{\frac{k}{\sqrt{\frac{\pi}{2}}}}\right), & \pi_{\text{linear}}^{(1)}(\gamma)=\frac{1}{2\rho^{*}}\left[\Phi\left(\frac{\rho^{*}-\gamma}{\sigma}\right)-\Phi\left(\frac{-\rho^{*}-\gamma}{\sigma}\right)\right],\\
\pi^{(2)}_{\text{RT}}(\gamma) & =-\frac{\gamma}{\left(\frac{k}{\sqrt{\frac{\pi}{2}}}\right)^{3}}\phi\left(\frac{\gamma}{\frac{k}{\sqrt{\frac{\pi}{2}}}}\right), & \pi_{\text{linear}}^{(2)}(\gamma)=\frac{1}{2\rho^{*}}\left[\frac{1}{\sigma}\phi\left(\frac{\rho^{*}+\gamma}{\sigma}\right)-\frac{1}{\sigma}\phi\left(\frac{\rho^{*}-\gamma}{\sigma}\right)\right].
\end{align*}
Due to symmetry, we only need to focus on $\gamma\geq0$. 

{\scshape Step 1:} We show that, for $\gamma\geq0$,
\begin{align*}
\overline{R}_{w^{*}}(\delta_{\text{RT}}^{*},\gamma) & =(\gamma+k)\left(1-\pi_{\text{RT}}(\gamma)\right)\\
\overline{R}_{w^{*}}(\delta_{\text{linear}}^{*},\gamma) & =(\gamma+k)\left(1-\pi_{\text{linear}}(\gamma)\right).
\end{align*}
By inspection of
\[
\overline{R}_{w^{*}}(\delta_{\text{RT}}^{*},\gamma)=\max\left\{ (\gamma+k)\left(1-\pi_{\text{RT}}(\gamma)\right),(k-\gamma)\pi_{\text{RT}}(\gamma)\right\},
\]
this is obviously true when $\gamma\geq k$. As $\pi_{\text{RT}}(\gamma)$ is strictly increasing and concave in $\gamma$ with $\pi_{\text{RT}}(k)<1$, $\pi_{\text{RT}}(0)=\frac{1}{2}$ and $\pi^{(1)}(0)=\frac{1}{2k}$, one has $\pi_{\text{RT}}(\gamma)\leq\frac{\gamma+k}{2k}$, implying
\[
\max\left\{ (\gamma+k)\left(1-\pi_{\text{RT}}(\gamma)\right),(k-\gamma)\pi_{\text{RT}}(\gamma)\right\} =(\gamma+k)\left(1-\pi_{\text{RT}}(\gamma)\right)
\]
for $0\leq\gamma<k$ as well. Analogous arguments reveal the same
for $\delta_{\text{linear}}^{*}$. 

{\scshape Step 2:} By step 1, it suffices to compare $\pi_{\text{RT}}(\gamma)$ and $\pi_{\text{linear}}(\gamma)$. Write 
\[
\pi_{\text{linear}}(\gamma)-\pi_{\text{RT}}(\gamma)=\int\left[\delta_{\text{linear}}^{*}(x)-\delta_{\text{RT}}^{*}(x)\right]\phi\left(\frac{x-\gamma}{\sigma}\right)dx.
\]
Consider first sign changes of $\delta_{\text{linear}}^{*}(x)-\delta_{\text{RT}}^{*}(x)$. Recall that (i) $\delta_{\text{linear}}^{*}(0)=\delta_{\text{RT}}^{*}(0)=\frac{1}{2}$, (ii) $\delta_{\text{linear}}^{*}$ is piecewise linear and attains values $\delta_{\text{linear}}^{*}(x)=0$ for $x<-\rho^{*}$ and $\delta_{\text{linear}}^{*}(x)=1$ for $x>\rho^{*}$, (iii) $\delta_{\text{RT}}^{*}(x)$ is strictly convex on $(-\infty,0)$, strictly concave on $(0,\infty)$, and its range is $(0,1)$.  
This implies that $\delta_{\text{linear}}^{*}(x)-\delta_{\text{RT}}^{*}(x)$ has at most three sign changes. By \citet[Theorem 3]{karlin1957polya}, $\pi_{\text{linear}}(\gamma)-\pi_{\text{RT}}(\gamma)$ then has at most three sign changes as well. Moreover, $0$ must be a point of sign change because $\pi_{\text{linear}}(0)=\pi_{\text{RT}}(0)$ and $\pi_{\text{linear}}(-\gamma)-\pi_{\text{RT}}(-\gamma)=-\left(\pi_{\text{linear}}(\gamma)-\pi_{\text{RT}}(\gamma)\right)$.

{\scshape Step 3:} We show that $\pi_{\text{linear}}(\gamma)>\pi_{\text{RT}}(\gamma)$
for all $\gamma>0$ sufficiently large. To see this, note $\pi_{\text{linear}}(\gamma)\geq\mathbb{E}_{\gamma}\left[\mathbf{1}\left(\left(w^{*}\right)^{\top}Y\geq\rho^{*}\right)\right]=\Phi\left(\frac{\gamma-\rho^{*}}{\sigma}\right)$
and $k>\sqrt{\frac{\pi}{2}}\sigma$. It follows that $\Phi\left(\frac{\gamma-\rho^{*}}{\sigma}\right)>\Phi\left(\frac{\gamma}{\frac{k}{\sqrt{\frac{\pi}{2}}}}\right)=\pi_{\text{linear}}(\gamma)$
for all $\gamma>\frac{k\rho^{*}}{k-\sqrt{\frac{\pi}{2}}\sigma}$. 

{\scshape Step 4:} Conclusions of Steps 2 and 3 imply that one of the following
two must be true:
\begin{itemize}
\item[(i)] $\pi_{\text{linear}}(\gamma)-\pi_{\text{RT}}(\gamma)$ has exactly one sign change from ``$-$'' to ``$+$'' at $0$, i.e., $\pi_{\text{linear}}(\gamma)\geq\pi_{\text{RT}}(\gamma)$ for all $\gamma\geq0$ with the inequality strict when $\gamma>0$.
\item[(ii)] $\pi_{\text{linear}}(\gamma)-\pi_{\text{RT}}(\gamma)$ has exactly three sign changes. In this case, the order of the sign changes must be ``$-+-+$'', and $0$ must be a sign change from $+$ to $-$. We then conclude that $\pi_{\text{linear}}(\gamma)>\pi_{\text{RT}}(\gamma)$ for all $\gamma>\underline{\gamma}$, where $\underline{\gamma}$ is the unique number in $(0,\infty)$ such that $\pi_{\text{linear}}(\underline{\gamma})=\pi_{\text{RT}}(\underline{\gamma})$, i.e., $\Phi\left(-\frac{\underline{\gamma}}{\frac{k}{\sqrt{\frac{\pi}{2}}}}\right)=\int_{0}^{1}\Phi\left(\frac{2\rho^{*}x-\rho^{*}-\underline{\gamma}}{\sigma}\right)dx$. 
\end{itemize}
{\scshape Step 5:} To determine which case applies, we study the
sign of $\pi_{\text{linear}}(\varepsilon)-\pi_{\text{RT}}(\varepsilon)$
for $\varepsilon$ sufficiently close to zero. Note
\begin{align*}
\pi_{\text{linear}}(0) & =\pi_{\text{RT}}(0)=\frac{1}{2},\\
\pi_{\text{linear}}^{(1)}(0) & =\pi_{\text{RT}}^{(1)}(0)=\frac{1}{2k},\\
\pi_{\text{linear}}^{(2)}(0) & =\pi_{\text{RT}}^{(2)}(0)=0,\\
\pi_{\text{linear}}^{(3)}(0) & =-\frac{\phi(\frac{\rho^{*}}{\sigma})}{\sigma^{3}},\\
\pi_{\text{RT}}^{(3)}(0) & =-\left(\frac{\sqrt{\frac{\pi}{2}}}{k}\right)^3\phi\left(0\right).
\end{align*}
Thus, if $\phi(\frac{\rho^{*}}{\sigma})\left(\frac{k}{\sqrt{\frac{\pi}{2}}}\right)^{3}\leq\sigma^{3}\phi\left(0\right)$, we have $\pi_{\text{linear}}^{(3)}(0)\geq\pi_{\text{RT}}^{(3)}(0)$, implying $\pi_{\text{linear}}(\varepsilon)\geq\pi_{\text{RT}}(\varepsilon)$ for $\varepsilon>0$ sufficiently small. As a result, case (i) in Step 4 must hold. Otherwise, we have $\pi_{\text{linear}}^{(3)}(0)<\pi_{\text{RT}}^{(3)}(0)$, implying $\pi_{\text{linear}}(\varepsilon)<\pi_{\text{RT}}(\varepsilon)$ for $\varepsilon>0$ sufficiently small, and case (ii) applies. This completes the proof.
 
\subsection{Proof of Proposition \ref{prop:profiled.regret.dominance}}\label{sec:proof.profiled.regret.dominance}

First, we may calculate 
\begin{align}
\overline{R}_{r}(d_{\text{coin-flip}},\mu) & =\sup_{\theta\in\Theta,m(\theta)=\mu}U(\theta)\left[\mathbf{1}\left\{ U(\theta)\geq0\right\} -\frac{1}{2}\right]\nonumber \\
 & =\frac{1}{2}\max\left\{ \sup_{\theta\in\Theta,m(\theta)=\mu,U(\theta)\geq0}U(\theta),\sup_{\theta\in\Theta,m(\theta)=\mu,U(\theta)\leq0}-U(\theta)\right\} \nonumber \\
 & =\frac{1}{2}\max\left\{ \max\left\{ 0,\overline{I}(\mu)\right\} ,-\min\left\{ 0,\underline{I}(\mu)\right\} \right\} \nonumber \\
 & =\frac{1}{2}\max\left\{ \overline{I}(\mu),-\underline{I}(\mu)\right\} .\label{eq:profiled.regret.dominance.pf.2}
\end{align}

Next, consider the MMR optimal rule $d^{*}$ that depends on the data
only through $(w^{*})^{\top}Y$ and that satisfies \eqref{eq:yata.property.1} and \eqref{eq:yata.property.2}.
By \eqref{eq:yata.property.1} and centrosymmetry of $\Theta$:
\begin{align}
\overline{R}_{r}(d^{*},\mathbf{0}) & =\sup_{\theta\in\Theta,m(\theta)=\mathbf{0}}U(\theta)\left[\mathbf{1}\left\{ U(\theta)\geq0\right\} -\mathbb{E}_{\mathbf{0}}\left[d^{*}\left((w^{*})^{\top}Y\right)\right]\right]\nonumber \\
 & =\sup_{\theta\in\Theta,m(\theta)=\mathbf{0}}U(\theta)\left[\mathbf{1}\left\{ U(\theta)\geq0\right\} -\frac{1}{2}\right]\nonumber \\
 & =\frac{\overline{I}(\mathbf{0})}{2}.\label{eq:profiled.regret.dominance.pf.1}
\end{align}

Thus, for all $\mu\in M,\mu\neq\mathbf{0}$, note we have
\begin{align*}
\overline{R}_{r}(d^*,\mu) & \leq\overline{R}_{r}(d^*,\mathbf{0})=\frac{\overline{I}(\mathbf{0})}{2}\leq\frac{1}{2}\max\left\{ \overline{I}(\mu),-\underline{I}(\mu)\right\} =\overline{R}_{r}(d_{\text{coin-flip}},\mu)\label{eq:profiled.regret.dominance.pf.3}
\end{align*}
where the first relation follows from the fact that $d^*$ is MMR
optimal with \eqref{eq:yata.property.2}  holding true, the second relation follows from
(\ref{eq:profiled.regret.dominance.pf.1}), the third relation follows
from assumption (with the inequality strict for some $\mu\in M$),
and the last relation follows from (\ref{eq:profiled.regret.dominance.pf.2}).
We then conclude that $\overline{R}_{r}(d^*,\mu)\leq\overline{R}_{r}(d_{\text{coin-flip}},\mu)$
for all $\mu\in M$, with the inequality strict for those $\mu\in M$
such that $\overline{I}(\mathbf{0})<\max\left\{ \overline{I}(\mu),-\underline{I}(\mu)\right\} $.
That is, $d^*$ dominates $d_{\text{coin-flip}}$ in terms of $\overline{R}_{r}(d,\mu)$.

\subsection{Proof of Proposition \ref{prop:aversion.fraction}}\label{sec:proof.aversion.fraction}
First, we show that, for any $d\in\mathcal{D}_{n,1}$, we have
\[
\sup_{\theta}R^{c}(d_{\text{linear}}^{*},\theta)\leq\sup_{\theta}R^{c}(d,\theta).
\]
We establish the above claim in two steps. 

{\scshape Step 1:} We show that for any $d\in\mathcal{D}_{n,1}$, 
\[
\sup_{\theta}R^{c}(d,\theta)\geq\overline{R}_{w^{*}}^{c}(d_{\text{linear}}^{*},0).
\]
Pick any $d\in\mathcal{D}_{n,1}$. Note
\begin{align*}
 & \sup_{\theta}R^{c}(d,\theta)=\sup_{\gamma\in\Gamma_{w^{*}}}\overline{R}_{w^{*}}^{c}(d,\gamma)\geq\overline{R}_{w^{*}}^{c}(d,0),
\end{align*}
where 
\[
\overline{R}_{w^{*}}^{c}(d,0)=\overline{R}_{w^{*}}(d,0)+\mathbb{E}_{0}\left[c(d(Y))\right],
\]
and $\mathbb{E}_{\gamma}\left[\cdotp\right]$ denotes expectation
with respect to $\left(w^{*}\right)^{\top}Y\sim N(\gamma,\left(w^{*}\right)^{\top}\Sigma w^{*})$.
As $d\in\mathcal{D}_{n,1}\subset\tilde{\mathcal{D}}_{n}$, we have $\mathbb{E}_{0}\left[d\left(\left(w^{*}\right)^{\top}Y\right)\right]=\frac{1}{2}$,
and 
\[
\overline{R}_{w^{*}}(d,0)=\overline{R}_{w^{*}}(d_{\text{linear}}^{*},0)=\frac{I(0)}{2}.
\]
Moreover, the structure of the constant cost function is such that
\[
c(d(Y))=c\mathbf{1}\left\{ \left(w^{*}\right)^{\top}Y\in(0,1)\right\} .
\]
By the ``least-randomizing'' property of $d_{\text{linear}}^{*}$
established in Theorem \ref{thm:main.regret.2}, 
\[
d_{\text{linear}}^{*}\left(\left(w^{*}\right)^{\top}Y\right)\in(0,1)\Rightarrow d\left(\left(w^{*}\right)^{\top}Y\right)\in(0,1).
\]
Conclude that 
\[
\mathbb{E}_{0}\left[c(d(Y))\right]\geq\mathbb{E}_{0}\left[c(d_{\text{linear}}^{*}(Y))\right].
\]
Therefore,
\[
\overline{R}_{w^{*}}^{c}(d,0)\geq\overline{R}_{w^{*}}^{c}(d_{\text{linear}}^{*},0),
\]
and as a result, 
\[
\sup_{\theta}R^{c}(d,\theta)\geq\overline{R}_{w^{*}}^{c}(d_{\text{linear}}^{*},0).
\]

{\scshape Step 2:} We show 
\[
\overline{R}_{w^{*}}^{c}(d_{\text{linear}}^{*},0)=\sup_{\theta}R^{c}(d_{\text{linear}}^{*},\theta).
\]
To see this, note
\begin{align*}
\sup_{\theta}R^{c}(d_{\text{linear}}^{*},\theta) & \leq\sup_{\theta}R(d_{\text{linear}}^{*},\theta)+\sup_{\theta}\mathbb{E}_{m(\theta)}\left[c(d_{\text{linear}}^{*}(Y))\right].
\end{align*}
By the fact that $d_{\text{linear}}^{*}$ is MMR optimal for the risk
without the cost function, we have $\sup_{\theta}R(d_{\text{linear}}^{*},\theta)=\frac{\overline{I}(0)}{2}$
. Also note that
\begin{align*}
\sup_{\theta\in\Theta}\mathbb{E}_{m(\theta)}\left[c(d_{\text{linear}}^{*}(Y))\right] & =\sup_{\gamma\in\Gamma_{w^{*}}}\mathbb{E}_{\gamma}\left[c(d_{\text{linear}}^{*}(Y))\right]=\mathbb{E}_{0}\left[c(d_{\text{linear}}^{*}(Y))\right],
\end{align*}
where the second equality follows by observing that $\left(w^{*}\right)^{\top}Y$
is normally distributed and that $d_{\text{linear}}^{*}(Y)\in(0,1)$
when $\left(w^{*}\right)^{\top}Y\in(-\rho^{*},\rho^{*})$. Thus, 
\begin{align*}
\sup_{\theta}R^{c}(d_{\text{linear}}^{*},\theta) & \leq\frac{\overline{I}(0)}{2}+\mathbb{E}_{0}\left[c(d_{\text{linear}}^{*}(Y))\right]=\overline{R}_{w^{*}}^{c}(d_{\text{linear}}^{*},0)
\end{align*}
But note by definition, 

\[
\sup_{\theta}R^{c}(d_{\text{linear}}^{*},\theta)\geq\overline{R}_{w^{*}}^{c}(d_{\text{linear}}^{*},0).
\]
Then, conclude that
\[
\sup_{\theta}R^{c}(d_{\text{linear}}^{*},\theta)=\overline{R}_{w^{*}}^{c}(d_{\text{linear}}^{*},0).
\]
Conclusions of steps 1 and 2 above imply 
\[
\sup_{\theta}R^{c}(d_{\text{linear}}^{*},\theta)\leq\sup_{\theta}R^{c}(d,\theta),\forall d\in\mathcal{D}_{n,1}.
\]
Next, we show that, for any $d\in\mathcal{D}_{n,2}$,
\[
\sup_{\theta}R^{c}(d_{\text{linear}}^{*},\theta)\leq\sup_{\theta}R^{c}(d,\theta).
\]
for all 
\[
c\leq\frac{\sup_{\mu\in M,\overline{I}(\mu)>0}\mathrm{g}_{w^{*}}(\mu)-\overline{I}(0)/2}{P_{0}\left\{ (-\rho^{*},\rho^{*})\right\}}.
\]
To see this, note we already calculated 
\[
\sup_{\theta}R^{c}(d_{\text{linear}}^{*},\theta)=\frac{\overline{I}(0)}{2}+\mathbb{E}_{0}\left[c(d_{\text{linear}}^{*}(Y))\right],
\]
where algebra shows 
\begin{align*}
\mathbb{E}_{0}\left[c(d_{\text{linear}}^{*}(Y))\right] & =c\mathbb{E}_{0}\left[\mathbf{1}\left\{ d_{\text{linear}}^{*}(Y)\in(0,1)\right\} \right]\\
 & =c\mathbb{E}_{0}\left[\mathbf{1}\left\{ \left(w^{*}\right)^{\top}Y\in(-\rho^{*},\rho^{*})\right\} \right]\\
 & =cP_{0}\left\{ (-\rho^{*},\rho^{*})\right\}.
\end{align*}
In contrast, note for any $d\in\mathcal{D}_{n,2}$, 
\begin{align*}
\sup_{\theta}R^{c}(d,\theta) & \geq\inf_{d\in\mathcal{D}_{n,2}}\sup_{\theta}R^{c}(d,\theta)=\sup_{\theta}R^{c}(d_{0}^{*},\theta),
\end{align*}
where $d_{0}^{*}=\mathbf{1}\left\{ \left(w^{*}\right)^{\top}Y\geq0\right\} $.
That is, $d_{0}^{*}$ is the best threshold rule among all linear
threshold rules that depends on data only via $\left(w^{*}\right)^{\top}Y$.
As $d^*_{0}$ is non-fractional, algebra shows
\begin{align*}
\sup_{\theta}R^{c}(d_{0}^{*},\theta) & =\sup_{\theta}R(d_{0}^{*},\theta)=\sup_{\mu\in M,\overline{I}(\mu)>0}\mathrm{g}_{w^{*}}(\mu),
\end{align*}
where $\mathrm{g}_{w}(\mu)=\overline{I}(\mu)\Phi\left(-\frac{w^{\top}\mu}{\sqrt{w^{\top}\Sigma w}}\right)$
is also defined in the proof for Theorem \ref{thm:main.regret.1}(iii). Therefore, as long
as 
\[
c\leq\frac{\sup_{\mu\in M,\overline{I}(\mu)>0}\mathrm{g}_{w^{*}}(\mu)-\overline{I}(0)/2}{P_{0}\left\{ (-\rho^{*},\rho^{*})\right\}},
\]
we have, for all $d\in\mathcal{D}_{n,2}$,
\[
\sup_{\theta}R^{c}(d_{\text{linear}}^{*},\theta)\leq\sup_{\theta}R^{c}(d,\theta).
\]
This completes the proof of Proposition \ref{prop:aversion.fraction}.

\subsection{Additional Results}
 
\subsubsection{Profiled Regret in the Running Example}\label{sec:profile.regret.computation}

In the running example, consider $w^*$-profiled regret, recalling that $w^*=(1,0)^{\top}$. By the definition in \eqref{eq:profile.regret}, the $w^*$-profiled regret function of a rule $d\in\mathcal{D}_2$ equals 
\begin{eqnarray*}
\overline{R}_{w^*}(d,\gamma) = \sup_{\theta \in \Theta \textrm{ s.t. } m_1(\theta) = \gamma } U(\theta)\left(\mathbf{1}\{U (\theta)\geq0\}-\mathbb{E}_{m(\theta)}[d(Y)]\right).
\end{eqnarray*}
Specifically, we look at three rules:
\begin{align*}
 d_{0}((w^*)^\top Y)&=d_{0}(Y_1)=\mathbf{1}\{Y_1\geq0\}\\
 d^*_\text{RT}((w^*)^\top Y)&=d^*_\text{RT}(Y_1)\\
 d^*_\text{linear}((w^*)^\top Y)&=d^*_\text{linear}(Y_1)
\end{align*}
defined in \eqref{eq:yata.rule}-\eqref{eq:d.erm}. 

As these rules depend on data only via  $(w^*)^\top Y$, for $d\in\{d_{0},d^*_\text{RT},d^*_{\text{linear}}\}$ we can write
\begin{align*}
\overline{R}_{w^*}(d,\gamma)&=\sup_{\theta \in \Theta \textrm{ s.t. } m_1(\theta) = \gamma } U(\theta)\left(\mathbf{1}\{U (\theta)\geq0\}-\mathbb{E}_{m_{1}(\theta)}[d(Y_1)]\right)\\
&=\sup_{U^{*}\in \left [ - \overline{k}_{w^*}(-\gamma), \:   \overline{k}_{w^*}(\gamma)  \right]}U^{*}\left(\mathbf{1}\{U^{*}\geq0\}-\mathbb{E}_{\gamma}[d(Y_1)]\right),
\end{align*}
where $\overline{k}_{w^*}(\gamma)$ is defined in \eqref{eq:isu}, and \eqref{eq:running.example.linear.program} solves for $\overline{k}_{w^*}(\gamma)=\gamma+k$, where $k=C\|x_0-x_1\|$. Therefore, we may further calculate 
\begin{align*}
&\overline{R}_{w^*}(d,\gamma)\\
=&\sup_{U^{*}\in \left[\gamma-k, \:   \gamma+k\right]}U^{*}\left(\mathbf{1}\{U^{*}\geq0\}-\mathbb{E}_{\gamma}[d(Y_1)]\right)\\
=&\max\left\{\sup_{U^{*}\in \left[\gamma-k, \:   \gamma+k\right],U^*\geq0}U^{*}\left(1-\mathbb{E}_{\gamma}[d(Y_1)]\right),\sup_{U^{*}\in \left[\gamma-k, \:   \gamma+k\right],U^*\leq0}-U^{*}\mathbb{E}_{\gamma}[d(Y_1)]\right\}\\
=&\begin{cases}
(-\gamma+k)\mathbb{E}_{\gamma}[d(Y_1)], & \text{if }\gamma<-k,\\
\max\left\{(\gamma+k)\left(1-\mathbb{E}_{\gamma}[d(Y_1)]\right),(-\gamma+k)\mathbb{E}_{\gamma}[d(Y_1)]\right\},  & \text{if }-k\leq \gamma\leq k,\\
(\gamma+k)(1-\mathbb{E}_{\gamma}[d(Y_1)]), & \text{if }\gamma>k.
\end{cases}
\end{align*}

As $Y_1\sim N(\gamma,\sigma_1)$, algebra shows $\mathbb{E}_{\gamma}[d_{0}(Y_1)]=\Phi\left(\frac{\gamma}{\sigma_1}\right)$, $\mathbb{E}_{\gamma}[d_{\text{RT}}^*(Y_1)]=\Phi\left(\sqrt{\frac{\pi}{2}}\frac{\gamma}{k}\right)$, and
\begin{align*}
\mathbb{E}_{\gamma}[d^*_{\text{linear}}(Y_1)]&=\Phi\left(\frac{\gamma-\rho^*}{\sigma_1}\right)+\frac{\sigma_1}{2\rho^*}\frac{1}{\sqrt{2\pi}}\left[e^{-\frac{1}{2}\left(\frac{\rho^*+\gamma}{\sigma_1}\right)^2}-e^{-\frac{1}{2}\left(\frac{\rho^*-\gamma}{\sigma_1}\right)^2}\right]\\
&+\frac{\gamma+\rho^*}{2\rho^*}\left[\Phi\left(\frac{\rho^*-\gamma}{\sigma_1}\right)-\Phi\left(\frac{-\rho^*-\gamma}{\sigma_1}\right)\right].    
\end{align*}
The $w^*$-profiled regret of the three rules can then be calculated easily following our characterizations above.

\subsubsection{Plot of Profiled Regret for the Plug-in Rule}\label{sec:plot.profiled.regret}

It would be interesting to compare $d^*_{\text{linear}}(Y_1)$ to a plug-in rule based on an estimated version of $I(\mu)$. In the example, a natural such estimator would be 
$[\widehat{\underline{I}},\widehat{\overline{I}}]$,  
where 
\[
\widehat{\overline{I}}=\min_{i=1,\ldots,n} \left\{ \hat{\mu}_i + C\left\Vert x_i-x_0\right\Vert  \right\}, ~~~
\widehat{\underline{I}}=\max_{i=1,\ldots,n} \left\{ \hat{\mu}_i - C\left\Vert x_i-x_0\right\Vert \right \},
\]
and $(\hat{\mu}_1,\ldots,\hat{\mu}_n)$ is a constrained (to $M$) maximum likelihood estimator. Based on the estimated identified set $[\widehat{\underline{I}},\widehat{\overline{I}}]$\footnote{However, whenever $d^*_{\text{linear}}$ randomizes, it is not necessarily the case that $[\widehat{\underline{I}},\widehat{\overline{I}}]$ contains zero. }, a natural plug-in rule is then
\begin{equation}\label{eq:d.plug.in}
d_{\text{plug-in}}(Y):=\begin{cases}
0, & \widehat{\overline{I}}<0,\\
\frac{\widehat{\overline{I}}}{\widehat{\overline{I}}-\widehat{\underline{I}}}, & \text{otherwise},\\
1, & \widehat{\underline{I}}>0.
\end{cases}
\end{equation}
We plot and compare the $w^*$-profiled regrets of  $d_{\text{plug-in}}$ and other rules, including $d^*_{\text{linear}}$ and $d^*_{\text{RT}}$. See Figure \ref{figure:profiled.regret.plug.in} for details. Note $d_{\text{plug-in}}$ is not MMR optimal: its  $w^*$-profiled regret at $\gamma=0$ is slightly larger than the MMR value of the problem. The $w^*$-profiled regret curve of $d_{\text{plug-in}}$ is bell-shaped and similar to that of $d^*_{\text{linear}}$.

\begin{figure}[http]
    \centering
    \includegraphics[scale=0.22]
{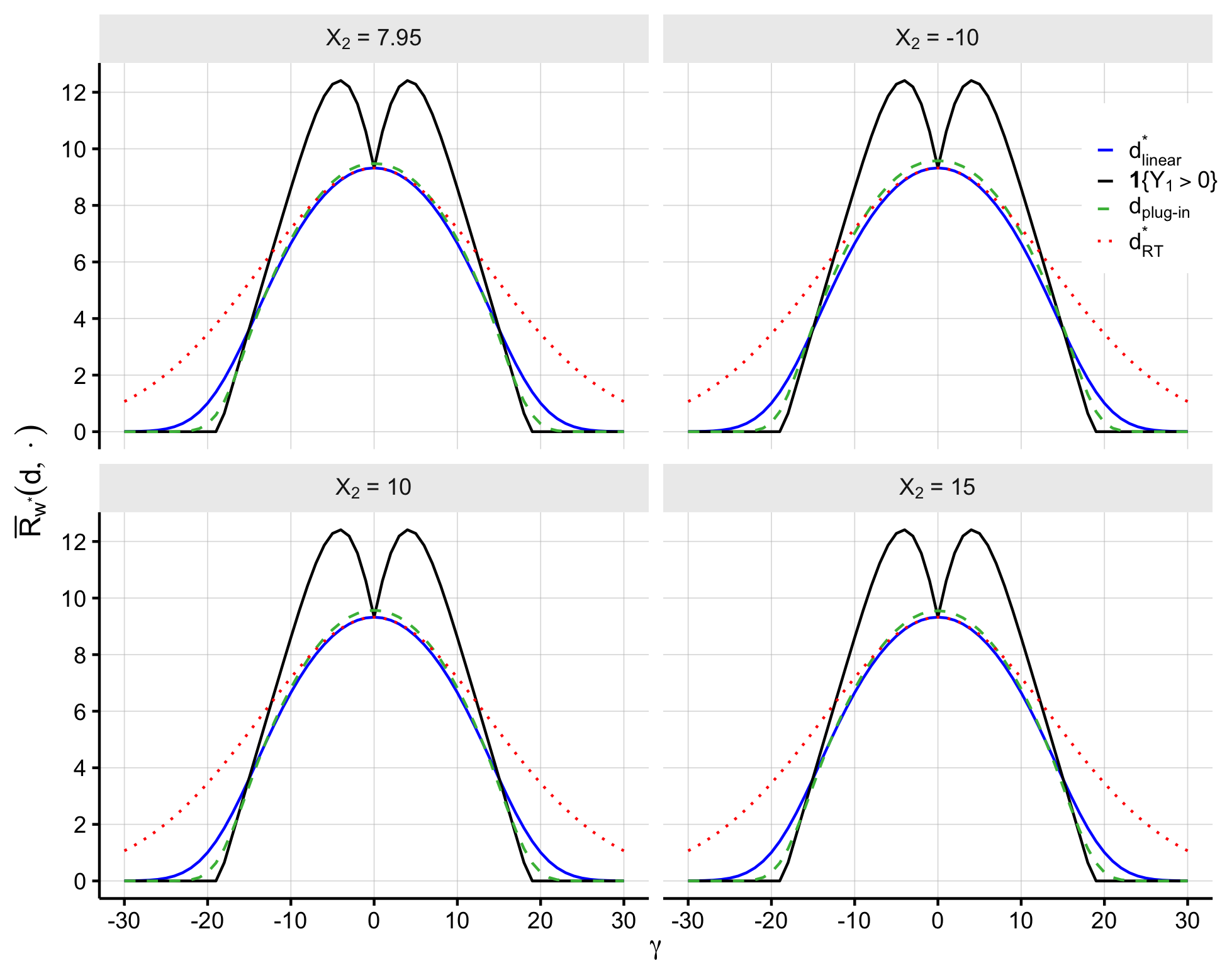}
    \caption{$w^*$-profiled regrets of $d_\text{plug-in}$ in \eqref{eq:d.plug.in} and other rules. The top left plot uses parameters of the running examples from Figure \ref{figure:NewRule.1}; the rest of the plots change the value of $x_2$ while keeping other parameters the same.}\label{figure:profiled.regret.plug.in}
\end{figure}

\subsubsection{Verification of Theorem \ref{thm:admin} for the Example in Section \ref{sec:iv.example}} \label{sec:iv.example.verification}
By Theorem \ref{thm:admin}, it suffices to show that the statistical model in \eqref{eq:IV.data} and the welfare contrast in \eqref{eq:IV.welfare} display nontrivial partial identification in the sense of Definition \ref{asm:1}. Note the image of $(m_1(\theta),m_2(\theta))^\top$ in this example is defined as
\[
M	=\left\{ \left(\mu_{1},\mu_{2}\right)^\top\in\mathbb{R}^{2}|m_1(\theta)=\mu_{1},m_2(\theta)=\mu_{2},\theta\in\Theta\right\}.
\]

The restrictions on $p(0)$ and $p(1)$ are: $p(1)\in[0,1]$, $p(0)\in[0,1]$, $p(1)\geq p(0)$ and $p(1)+\alpha\leq1$. Without any additional shape restrictions on $\text{MTE}(\cdot)$, any functions from $[0,1]$ to $[-1,1]$ are compatible with the model. Hence, we can deduce that 
\[
M	=\left\{ \left(\mu_{1},\mu_{2}\right)^\top\in\mathbb{R}^{2}|\mu_{1}\in[-\mu_{2},\mu_{2}],0\leq\mu_{2}\leq1-\alpha\right\} .
\]
The identified set of $U(\theta)$ is then
\[I(\mu_{1},\mu_{2})=\left\{ u\in\mathbb{R}|U(\theta)=u,m_1(\theta)=\mu_{1},m_2(\theta)=\mu_{2},\theta\in\Theta\right\},\text{for all } \left(\mu_{1},\mu_{2}\right)^\top\in M. \]
with extrema
\begin{align*}
\overline{I}(\mu_{1},\mu_{2}) & =\sup_{m_1(\theta)=\mu_{1},m_2(\theta)=\mu_{2},\theta\in\Theta}\left\{ \frac{m_1(\theta)}{\alpha+m_2(\theta)}-\frac{m_1(\theta)}{m_2(\theta)}+\frac{1}{\alpha+m_2(\theta)}\int_{p(1)}^{p(1)+\alpha}\text{MTE}(v)dv\right\} \\
\text{ } & =\frac{\mu_{1}}{\alpha+\mu_{2}}-\frac{\mu_{1}}{\mu_{2}}+\frac{1}{\alpha+\mu_{2}}\sup\left\{ \int_{p(1)}^{p(1)+\alpha}\text{MTE}(v)dv\right\} \\
 & =\frac{\mu_{1}}{\alpha+\mu_{2}}-\frac{\mu_{1}}{\mu_{2}}+\frac{\alpha}{\alpha+\mu_{2}}, 
\end{align*}
and similarly
$$\underline{I}(\mu_1,\mu_2) =\frac{\mu_{1}}{\alpha+\mu_{2}}-\frac{\mu_{1}}{\mu_{2}}-\frac{\alpha}{\alpha+\mu_{2}}.$$
Let $\mathcal{S}=\left\{ \left(\mu_{1},\mu_{2}\right)^\top\in M:\mu_1\in(-\mu_2,\mu_2),\mu_2\neq0\right\}$.  We can verify that for any $(\mu_1,\mu_2)^\top \in \mathcal{S}$, 
\[
\underline{I}(\mu_{1},\mu_{2})=\frac{\mu_{1}}{\alpha+\mu_{2}}-\frac{\mu_{1}}{\mu_{2}}-\frac{\alpha}{\alpha+\mu_{2}}<0<\frac{\mu_{1}}{\alpha+\mu_{2}}-\frac{\mu_{1}}{\mu_{2}}+\frac{\alpha}{\alpha+\mu_{2}}=\overline{I}(\mu_{1},\mu_{2}).
\]
Therefore, the statistical model in \eqref{eq:IV.data} and the welfare contrast in \eqref{eq:IV.welfare} exhibit nontrivial partial identification as defined in \ref{asm:1}, and Theorem \ref{thm:admin} applies to the example in Section \ref{sec:iv.example}.

\bibliographystyle{ecta}
\bibliography{bibliography}

\end{document}